\newcommand{\institute}[1]{}
\newcommand{\todo}[1]{{\color{blue}#1}}
\newcommand*{\myTitle}{Deterministic Sparse Suffix Sorting in the Restore Model}
\pretocmd{\NAT@citexnum}{\@ifnum{\NAT@ctype>\z@}{\let\NAT@hyper@\relax}{}}{}{}
\newenvironment{textminipage}[1]{%
	\noindent%
	\edef\myindent{\the\parindent}%
	\begin{minipage}{#1}
	\setlength{\parindent}{\myindent}
}{\end{minipage}}
\newcolumntype{R}[1]{>{\raggedleft\arraybackslash}p{#1}}
\newcolumntype{C}[1]{>{\centering\arraybackslash}p{#1}} %
\pgfmathsetcount{\Length}{3}
\tikzset{>={Latex[width=2mm,length=1mm]}}
\newcommand{\RefLabels}{}
\newcommand{\Ref}[8]{%
\pgfmathsetcount{\Length}{#2-#1+1}

	\ifthenelse{\equal{#8}{}}{%
		\newcommand{\Label}{\texttt{(#1,\the\Length)}}
		\gappto{\RefLabels}{\begingroup \ttfamily\color{#6}(#1,}
		\xappto{\RefLabels}{\the\Length}
		\gappto{\RefLabels}{)\endgroup}
	}{%
		\newcommand{\Label}{\texttt{#8}}
		\gappto{\RefLabels}{\texttt{\color{#6}#8}}
	}

	\begin{pgfonlayer}{bg} 
	\node [fit=(warray-1-#3.south) (warray-1-#4.north), inner sep=0, inner xsep=5, fill=#6,opacity=0.4, rounded corners=4pt] (src) {};
\end{pgfonlayer}

\ifthenelse{\equal{#7}{l}}{%
	\draw [->, rounded corners, color=#6, opacity=0.5] (warray-1-#3.north) -- +(0,#5)-| node[above] {\Label{}} (warray-1-#1.north);
}{%
	\ifthenelse{\equal{#7}{s}}{%
		\draw [->, rounded corners, color=#6, opacity=0.5] (warray-2-#3.south) -- +(0,-#5) node[below] {\Label{}} -| (warray-2-#1.south);
	}{%
	\draw [->, rounded corners, color=#6, opacity=0.5] (warray-1-#3.north) -- +(0,#5) node[above] {\Label{}} -| (warray-1-#1.north);
}%
}%
}
\newcommand{\GetTextChar}[1]{%
  \ensuremath{%
    \begingroup\fullexpandarg
    \let\$\relax \let\text\unexpanded
	\texttt{\StrChar{\exampleStringPlain}{#1}}
	\endgroup
  }%
}
\tikzstyle{array} = [matrix of nodes,font=\ttfamily, column sep=0.5\pgflinewidth, row sep=0.5mm, nodes in empty cells,
\newlist{passes}{enumerate}{10}
\setlist[passes]{label*=(\alph*)}
\Crefname{passesi}{Pass}{Passes}
\newcommand{\leaf}     {\ensuremath{\lambda}}
\newcommand{\bv}[1]{\ensuremath{B_{\mathup{#1}}}}
\newcommand{\Time}[1]{\ensuremath{t_{\textup{#1}}}}
\newcommand{\SA}           {\instancename{SA}}
\newcommand{\SSA}  [1][]{\UnaryOperator[#1]{\instancename{SSA}}}
\newcommand{\SLCP} [1][]{\UnaryOperator[#1]{\instancename{SLCP}}}
\newcommand{\predecessor}[1][]{\UnaryOperator[#1]{\operatorname{predecessor}}}
\newcommand{\successor}[1][]{\UnaryOperator[#1]{\operatorname{successor}}}
\newcommand{\lcp}         [1][]{\UnaryOperator[#1]{\operatorname{lcp}}}
\newcommand{\lce}         [1][]{\UnaryOperator[#1]{\operatorname{lce}}}
\newcommand{\ibeg}[1]{\ensuremath{\mathsf{b}(#1)}}%
\newcommand{\iend}[1]{\ensuremath{\mathsf{e}(#1)}}%
\newcommand{\textS}{\ensuremath{S}}
\newcommand{\textT}{\ensuremath{T}}
\newcommand{\textX}{\ensuremath{X}}
\newcommand{\textY}{\ensuremath{Y}}
\newcommand{\textZ}{\ensuremath{Z}}
\newcommand{\period}{\ensuremath{p}}
\newcommand{\intervalI}{\ensuremath{\mathcal{I}}} %
\newcommand{\intervalJ}{\ensuremath{\mathcal{J}}} %
\newcommand{\intervalK}{\ensuremath{\mathcal{K}}} %
\newcommand{\IC}{..} %
\newcommand{\Int}[2]{\ensuremath{\lbrack#1\IC#2\rbrack}}
\newcommand{\substr}[3]{\ensuremath{\mathop{}\mathopen{}#1\mathopen{}\lbrack#2\IC#3\rbrack}} %
\newcommand{\substrI}[2]{\ensuremath{\mathop{}\mathopen{}#1\mathopen{}\lbrack#2\rbrack}} %
\newcommand{\exampleStringPlain}{aaababaaabaaba\$}
\newenvironment{subproof}{\block{Sub-Proof}}{\hfill\ensuremath{\blacksquare}}
\newenvironment{subclaim}{\block{Sub-Claim}}{}
\newcommand{\V}[1]{\ensuremath{#1}} %
\tikzstyle{textstring} = 
	\tikzstyle{arm} = [pattern=horizontal lines, pattern color=solarizedBlue!20] %
\tikzstyle{subarm} = [pattern=vertical lines, pattern color=green!20]
\tikzstyle{gap} = [pattern=north east lines, pattern color=gray!20]
\tikzstyle{rep} = [pattern=north west lines, pattern color=purple!20]
\tikzstyle{diff} = [pattern=dots, pattern color=cyan!20]
\pgfmathsetcount{\ourRow}{0}
\newcommand{\SAVL}{\mathsf{SAVL}}
\newcommand{\SAVLT}{suffix AVL tree}
\newcommand{\eTree}{ET}
\newcommand{\hTree}{HT}
\newcommand{\thTree}{tHT}
\newcommand{\thTreeF}{$\eta$-truncated HSP tree}
\newcommand{\SubTree}[1]{\ensuremath{\mathcal{T}_{#1}}}
\newcommand{\etInst}[1][]{\UnaryOperator[#1]{\instancename{\eTree}}}
\newcommand{\hInst}[1][]{\UnaryOperator[#1]{\instancename{\hTree}}}
\newcommand{\thInst}[1][]{\UnaryOperator[#1]{\instancename{\thTree}_\eta}}
\newcommand{\DynLCE}{dynLCE}
\newcommand{\DynLCEInst}[1][]{\UnaryOperator[#1]{\instancename{\DynLCE}}}
\newcommand{\esp}[1][]{\UnaryOperator[#1]{\functionname{esp}}}
\newcommand{\espExp}[2]{\UnaryOperator[#2]{\functionname{esp}^{(#1)}}}
\newcommand{\mlcp}[1]{\UnaryOperator[#1]{\text{mlcp}}}
\newcommand{\mlcparg}[1]{\UnaryOperator[#1]{\text{mlcparg}}}
\newcommand{\LCEI}{LCE interval}
\newcommand{\timeSuffix}      {\Time{S}}
\newcommand{\timeDynLCE}[1]      {\UnaryOperator[#1]{\Time{L}}}
\newcommand{\timeDynMerge}[1]    {\UnaryOperator[#1]{\Time{M}}}
\newcommand{\timeDynConstruct}[1]{\UnaryOperator[#1]{\Time{C}}}
\newcommand{\memref}{\mathup{\tt ref}}
\newcommand{\memtext}{\mathup{\tt ival}}
\newcommand{\generated}[1]{\UnaryOperator[{#1}]{\mathit{string}}}
\newcommand{\invalid}{\ensuremath{\bot}} %
\newcommand{\numNodes}[1]{#1/2^\eta}
\newcommand{\Type}[1]{\texttt{type\,#1}}
\newcommand{\idset}{\mathcal{V}}
\newcommand{\idsetsize}[1]{\abs{\idset}} %
\newcommand{\dic}{\mathfrak{D}} %
\newcommand{\Comlcp}{\mathcal{C}} %
\newcommand{\Maxlcp}{\ensuremath{\ell}} %
\newcommand{\jcontext}{\ensuremath{\mathit{\Delta}}}
\newcommand{\lcontext}[1][]{\ensuremath{\jcontext_{\mathup{L}#1}}}
\newcommand{\rcontext}[1][]{\ensuremath{\jcontext_{\mathup{R}#1}}}
\newcommand{\espnode}[1]{\langle #1 \rangle}
\newcommand{\marginLCE}{\ensuremath{f}}
\newcommand{\gapLCE}{\ensuremath{g}}
\newcommand{\lookupTime}{\Time{look}}
\newcommand{\lookupTimeBest}{\lg n}
\newcommand{\lookupEta}{\lookupTime}
\newcommand{\lookupUpper}{\lookupTime}
\newcommand{\LCEITree}{\ensuremath{\mathcal{L}}} %
\newcommand{\Pos}{\mathcal{P}} %
\newcommand{\Suf}{\mathit{Suf}} %
\newcommand{\SetStr}{\mathcal{S}} %
\newcommand{\wordpack}{/ \log_{\sigma}n} %
\newcommand{\wordpackI}{\log_{\sigma}n} %
\newcommand{\MakeHSPName}[2]{\ensuremath{#1_{#2}}}
\newcommand{\Name}[1]{\texttt{#1}} %
\newcommand{\Char}[1]{\texttt{#1}} %
\newcommand{\C}[2][]{#2} %
\newcommand{\ConstraintBox}[1]{\fcolorbox{white}{yellow!15}{\parbox{0.9\linewidth}{#1}}}
\newcommand{\RuleBox}[1]{\fcolorbox{white}{green!10}{\parbox{0.9\linewidth}{#1}}}
\newlist{LCEproperty}{enumerate}{1}
\setlist[LCEproperty]{leftmargin=*,label={Property~\arabic*:},ref={\arabic*}}
\Crefname{LCEpropertyi}{Property}{Properties}
\newlist{Facts}{enumerate}{1}
\setlist[Facts]{leftmargin=*,label={Fact~\arabic*:},ref={\arabic*}}
\Crefname{Factsi}{Fact}{Facts}
\newlist{Rules}{enumerate}{1}
\setlist[Rules]{leftmargin=*,label={Rule~\arabic*:},ref={\arabic*}}
\Crefname{Rulesi}{Rule}{Rules}
\newlist{Goals}{enumerate}{1}
\setlist[Goals]{leftmargin=*,label={Goal~\arabic*:},ref={\arabic*}}
\Crefname{Goalsi}{Goal}{Goals}
\newlist{Subsets}{enumerate}{1}
\setlist[Subsets]{leftmargin=*,label={Set~\arabic*:},ref={\arabic*}}
\Crefname{Subsetsi}{Set}{Sets}
\newcommand{\CustomLabel}[2]{%
   \protected@write \@auxout {}{\string \newlabel {#1}{{#2}{\thepage}{#2}{#1}{}} }%
   \hypertarget{#1}{#2}%
}
\newlength{\lenTextFigureBlock}
\newcommand{\TextFigureBlock}[3]{%
	\setlength{\lenTextFigureBlock}{\linewidth-#1\linewidth}
	\begin{textminipage}{#1\linewidth}
		#2
	\end{textminipage}
	\begin{minipage}{\lenTextFigureBlock}
		\hfill
		#3
	\end{minipage}
}
	\renewcommand{\C}[2][]{#2}
	\renewcommand{\C}[2][]{#1}
\title{%
	\myTitle{}\footnote{%
		Parts of this work have already been presented at
		the 12th Latin American Symposium~\cite{fischer16determinstic}.
	}
}
\author[1]{Johannes Fischer}
\author[2]{Tomohiro I}
\author[1]{Dominik K\"{o}ppl}%
\affil[1]{Department of Computer Science, TU Dortmund, Germany}
\affil[2]{Kyushu Institute of Technology, Japan}
\date{}
\begin{document}

\maketitle

\begin{abstract}
Given a text $T$ of length $n$,
we propose a deterministic online algorithm computing the sparse suffix array and the sparse longest common prefix array of $T$
in
$\Oh{c \sqrt{\lg n} + m \lg m \lg n \lg^* n}$ time 
with $\Oh{m}$ words of space
under the premise that the space of~$T$ is rewritable, where $m \le n$ is the number of suffixes to be sorted (provided online and arbitrarily), 
and $c$ is the number of characters with $m \le c \le n$ that must be compared for distinguishing the designated suffixes.
\end{abstract}

\ifthenelse{\boolean{withAlgo}}{%
\newcommand{\Bild}[2][]{\includegraphics[#1]{sparsesuf/images/#2}}
	\def\Jvar{}
	\newcommand{\JJ}[2][]{#2}
	\newcommand{\JO}[2][]{#2}
	\newcommand{\Jitemize}[1]{\sitemize{#1}}
}{%
\newcommand{\Bild}[2][]{\includegraphics[#1]{images/#2}}
	\def\Jvar{%
		\newpage
		\appendix
	}
	\newcommand{\JJ}[2][]{\appto\Jvar{#1#2}}
	\newcommand{\JO}[2][]{#1}
	\newcommand{\Jitemize}[1]{\sitemize*{#1}}
}%
\newcommand{\SemiOrStable}{\mbox{(semi-)}stable}

\JO{%
\chapter{Sparse Suffix Sorting}\label{chSparseSuf}

\defcitealias{vannevar45asWeMayThink}{Vannevar Bush}
\Zitation{%
A record, if it is to be useful to science, \\
must be continuously extended, \\
it must be stored, \\
and above all it must be consulted.
}{vannevar45asWeMayThink}
}%

\section{Introduction}
Sorting suffixes of a long text lexicographically is an important first step for many text processing algorithms~\cite{puglisi07taxonomy}.
The complexity of the problem is quite well understood, as for integer alphabets
suffix sorting can be done in optimal linear time and in-place~\cite{LiLH16a,goto2017}.
In this article, we consider a variant of the problem:
instead of computing the order of \emph{every} suffix, we address the \intWort{sparse suffix sorting problem}.
Given a text $T[1..n]$ of length $n$ and a set $\Pos\subseteq [1..n]$ of~$m$ arbitrary positions in $T$,
the problem asks for the (lexicographic) order of the suffixes starting at the positions in $\Pos$.
The answer is encoded by a permutation of $\Pos$, which
is called the \intWort{sparse suffix array~(SSA)} of~$T$ (with respect to $\Pos$) and denoted by \SSA[T,\Pos].

Applications are found 
in external memory LCP-array construction algorithms~\cite{KarkkainenK16},
and in the search of maximal exact matches~\cite{btp275,btt042}, 
i.e., substrings found in two given strings that can be extended neither to their left nor to their right without getting a mismatch.

Like the ``full'' suffix arrays,
we can enhance \SSA[T,\Pos] with the lengths of the \acp{LCP} between adjacent suffixes in \SSA[T,\Pos].
These lengths are stored in the \intWort{sparse longest common prefix array~(SLCP)}, which we denote by \SLCP[T,\Pos].
In combination, \SSA[T,\Pos] and \SLCP[T,\Pos] store the same information as the \intWort{sparse suffix tree}, i.e.,
they implicitly represent a compacted trie over all suffixes starting at the positions in $\Pos$.
The sparse suffix tree is an efficient index for pattern matching~\cite{Kolpakov11sparse}.

Based on classic suffix array construction algorithms~\cite{kaerkkaeinen06linearSA,nong11two}, 
sparse suffix sorting is easily conducted in $\Oh{n}$ time if $\Oh{n}$ words of additional working space is available.
For $m = \oh{n}$, however, the working space may be too large, compared to the final space requirement of~\SSA[T,\Pos]\@.
Although some special choices of~$\Pos$ admit space-optimal $\Oh{m}$-words construction algorithms (e.g.~\cite{Karkkainen96sparse}, see also the related work listed in~\cite{Bille0GKSV16}),
the problem of sorting arbitrary suffixes in small space seems to be much harder. We are aware of the following results:
As a deterministic algorithm, \citet{kaerkkaeinen06linearSA}
gave a trade-off using $\Oh{\tau m +n\sqrt{\tau}}$ time and $\Oh{m+n/\sqrt{\tau}}$ words of working space 
with a parameter $\tau \in [1..\sqrt{n}]$.
If randomization is allowed, there is a technique based on Karp-Rabin fingerprints,
first proposed by \citet{Bille0GKSV16} and later improved by \citet{I2014FSS}.
\citet{GawrychowskiK17} presented an algorithm running with \Oh{m} words of additional space in either \Oh{n\sqrt{\lg m}} expected time, or in $\Oh{n}$ time as a Monte Carlo algorithm (i.e., the output is correct only with high probability).
Most recently, \citet{Prezza18sparse} presented a Monte Carlo algorithm in the restore model~\cite{Chan14restoreModel} that 
runs with \Oh{m} words of space in \Oh{n + m \lg^2 n} expected time.
\JO{\Cref{figSparseSufRelatedWork} summarizes the running times and the memory usage of the listed algorithms.
\begin{figure}[t]
	\centerline{%
		\begin{tabular}{llll}
			\toprule
			Time                  & Space              & Restriction           & Reference
			\\
	{\Oh{n \lg n}}                & {\Oh{n}}           &                       & \cite{manber93suffix}
			\\
			{\Oh{n}}                & {\Oh{n}}           &                       & \cite{KoA05}
			\\
	{\Oh{n \lg n}}                & {$n+\Oh{1}$} space &                       & \cite{Franceschini07sa}
			\\
	{\Oh{n}}                      & {$n+\Oh{1}$} space &                       & \cite{goto2017,LiLH16a}
			\\
	{\Oh{n}}                      & {\Oh{m}}           & $\Pos$ evenly spaced  & \cite{Karkkainen96sparse}
			\\
	{\Oh{n \lg^2 m}}              & {\Oh{m}}           & MC           & \cite{Bille0GKSV16}
			\\
	{\Oh{n \lg^2 m + m^2 \lg m}}  & {\Oh{m}}           & LV             & \cite{Bille0GKSV16}
			\\
	{\Oh{n}}                      & {\Oh{m}}           & MC           & \cite{GawrychowskiK17}
			\\
			{\Oh{n \sqrt{\lg m}}} & {\Oh{m}}           & LV             & \cite{GawrychowskiK17}
			\\
	{\Oh{n \lg m}}                & {\Oh{m}}           & MC or LV & \cite{I2014FSS}
			\\
	{\Oh{n}}                      & {\Oh{m \lg m}}     & MC           & \cite{I2014FSS}
			\\
			{\Oh{n + m \lg^2 n}}  & {\Oh{1}}             & MC, restore model     & \cite{Prezza18sparse}
			\\\bottomrule
		\end{tabular}
	}%
	\caption{Sparse suffix sorting algorithms. MC and LV denote Monte Carlo and Las Vegas algorithms. }
	\label{figSparseSufRelatedWork}
\end{figure}
}%

\subsection{Computational Model}
Let $\lg$ and $\log_x$ denote the logarithm to the base two and to the base~$x$ for a real number~$x$, respectively.
Our computational model is the word RAM model with word size $\Om{\lg n}$.
Here, characters use $\upgauss{\lg\sigma}$ bits, where $\sigma$ is the alphabet size; hence, $\gauss{\log_\sigma n}$ characters can be packed into one word. Comparing two strings $\textX$ and~$\textY$ therefore takes 
$\Oh{\lcp(\textX, \textY)\wordpack}$
time, where $\lcp(\textX, \textY)$ denotes the length of the \ac{LCP} of $\textX$ and $\textY$.

We assume that the text~\textT{} of length $n$ is loaded into RAM\@.
We work with the restore model~\cite{Chan14restoreModel}, where algorithms are allowed to overwrite parts of~$\textT$, as long as they can restore~$\textT$ to its original form at termination. 
Apart from this space, we are only allowed to use $\Oh{m}$ words.
The positions in $\Pos$ are assumed to arrive on-line, implying in particular that they need not be sorted.
We aim at worst-case efficient \emph{deterministic} algorithms.

\subsection{Algorithm Outline and Our Results}\label{secSparseSuffAlgoOutline}

Our main algorithmic idea is to insert the suffixes starting at the positions of $\Pos$ into a self-balancing binary search tree~\cite{Irving2003sbs}; since each insertion invokes $\Oh{\lg m}$ suffix-to-suffix comparisons, 
the time complexity is $\Oh{\timeSuffix m \lg m}$, where $\timeSuffix$ is the cost for a suffix-to-suffix comparison.
If all suffix-to-suffix comparisons are conducted naively by comparing the characters ($\timeSuffix = \Oh{n \wordpack}$),
the resulting worst case time complexity is $\Oh{n m \lg m \wordpack}$.
In order to speed this up,
our algorithm identifies large identical substrings at different positions
during different suffix-to-suffix comparisons.
Instead of performing naive comparisons on identical parts over and over again,
we build a data structure (stored in redundant text space) to accelerate 
subsequent suffix-to-suffix comparisons. 
Informally, when two (possibly overlapping) substrings in the text are detected to be the same, one of them can be overwritten.

To accelerate suffix-to-suffix comparisons,
we devise a new data structure called \intWort{\ac{HSP} tree} that is based on \intWort{edit sensitive parsing (ESP)}~\cite{Cormode2007sed}.
The HSP tree supports \intWort{\ac{LCE}} queries.
An \ac{LCE} query~$\lce(i,j)$ on an HSP tree asks for the length~$\lcp(T[i..],T[j..])$ 
of the \ac{LCP} of two suffixes starting at the respective positions~$i$ and~$j$ of the text~$T$ on which the tree is built.
Besides answering LCE queries, HSP trees are \emph{mergeable}, 
allowing us to build a dynamically growing \ac{LCE} index on substrings read in the process of the sparse suffix sorting.
Consequently, comparing two already indexed substrings is done by a single \ac{LCE} query.

In their plain form, HSP trees need more space than the text itself;
to overcome this space problem, we devise a \emph{truncated} version of the HSP tree,
yielding a trade-off parameter between space consumption and \ac{LCE} query time.
By choosing this parameter appropriately, the truncated HSP tree fits into the text space.
With a text space management specialized on the properties of the HSP, we achieve the result of \cref{thmSparseSuffixSorting} below.

We make the following definition that allows us to analyze the running time more accurately.
Define $\Comlcp := \bigcup_{p,p' \in \Pos, p \neq p'} [p..p+\lcp(T[p..], T[p'..])]$ as the set of positions that must be compared for distinguishing the suffixes starting at the positions of~$\Pos$.
Then sparse suffix sorting is trivially lower bounded by $\Om{\abs{\Comlcp}\wordpack}$ time.
With the definition of~$\Comlcp$, we now can state the main result of this article as follows:

\begin{theorem}\label{thmSparseSuffixSorting}
	Given a text $T$ of length $n$ that is loaded into RAM,
  the SSA and SLCP of $T$ for a set of $m$ arbitrary positions can be computed deterministically in 
  $\Oh{\abs{\Comlcp} (\sqrt{\lg \sigma} + \lg \lg n) +  m \lg m \lg n \lg^* n} =
 \Oh{\abs{\Comlcp} \sqrt{\lg n} +  m \lg m \lg n \lg^* n}$ time, using $\Oh{m}$ words of additional working space.
\end{theorem}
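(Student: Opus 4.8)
The plan is to combine three ingredients: a balanced binary search tree that keeps the processed suffixes in sorted order, a reduction of each tree comparison to a longest common extension (LCE) query, and a deterministic dynamic LCE index built on top of the hierarchical stable parsing (HSP). First I would maintain the already-seen suffixes in a self-balancing binary search tree in the style of~\cite{Irving2003sbs}. Inserting the suffix $T[p..]$ for a newly arriving position $p\in\Pos$ triggers $\Oh{\lg m}$ root-to-leaf comparisons, so over all $m$ positions we perform $\Oh{m\lg m}$ suffix-to-suffix comparisons; the final SSA permutation is read off the tree by an in-order traversal, and the adjacent LCP values needed for the SLCP are obtained as the LCE values between each inserted suffix and its in-order neighbour, which we already compute during insertion, so the SLCP costs no extra asymptotic work. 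It therefore remains to bound the cost $\timeSuffix$ of a single comparison.

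A comparison of $T[p..]$ against $T[q..]$ is resolved by one LCE query between the two suffixes followed by a single character comparison at the mismatch, so the core of the argument is a dynamic LCE data structure that is updated online as positions arrive. The idea is to parse the relevant suffixes through $\Oh{\lg n}$ levels of HSP, a variant of edit-sensitive parsing engineered so that two equal substrings receive the same sequence of names except within $\Oh{\lg^* n}$ blocks near their boundaries. With this stability an LCE query descends the $\Oh{\lg n}$ levels, at each level skipping over maximal runs of equal names in constant time and explicitly reconciling only the $\Oh{\lg^* n}$ unstable boundary blocks. I would store the grammar of names produced by the parsing inside the text itself: once a substring has been parsed and recognised as a repeat, its characters are redundant and may be overwritten by the corresponding pointers, which is exactly the rewritable-text license we are granted. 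Only $\Oh{m}$ words of genuinely additional space are then needed for the tree pointers, the per-level directories, and the per-query scratch area, and the original text is reassembled at termination by undoing the recorded overwrites.

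For the running time I would use a charging argument against $\Comlcp$. Because the HSP block lengths shrink geometrically up the hierarchy, each base position participates in only $\Oh{1}$ name creations across all levels, and every position whose name is created while resolving a genuine match lies, by definition, in $\Comlcp$. Using a deterministic dynamic dictionary with $\Oh{\sqrt{\lg n}}$ look-up time to map right-hand sides to names, this accounts for the $\Oh{\abs{\Comlcp}\sqrt{\lg n}}$ term. The remaining cost is the fixed per-query overhead: even when the common prefix is empty, each of the $\Oh{m\lg m}$ queries must traverse the $\Oh{\lg n}$ levels and spend $\Oh{\lg^* n}$ work per level to locate the landmark blocks and reconcile the $\Oh{\lg^* n}$ unstable boundary positions, which yields the $\Oh{m\lg m\,\lg n\,\lg^* n}$ term. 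Summing the two gives the bound claimed in the theorem.

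The hard part will be establishing the stability property of the HSP and converting it into provably correct LCE answers. Standard ESP has the well-known defect that the parse of a substring depends on its surrounding context, so equal substrings embedded in different texts may be parsed differently; this would destroy both the constant-time name-skipping inside a query and the reuse that keeps the dictionary (and hence the charge to $\Comlcp$) small. I expect the bulk of the real work to be (i) proving that the HSP confines every such discrepancy to $\Oh{\lg^* n}$ blocks adjacent to each boundary at each level, and (ii) showing that the online, in-place construction keeps the name dictionary consistent across independently inserted suffixes while respecting the $\Oh{m}$-word budget and the text-restoration requirement. The balanced-tree bookkeeping and the final summation are, by comparison, routine.
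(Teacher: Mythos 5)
Your high-level architecture matches the paper's: a suffix AVL tree driving $\Oh{m\lg m}$ LCE queries, HSP as a stability-repaired ESP, storage of the parse in overwritten text, and a charge of all parsing work to $\Comlcp$. However, there is a concrete gap in how you obtain the $\Oh{\abs{\Comlcp}\sqrt{\lg n}}$ term, and it is tied to the space bound. A full (non-truncated) HSP tree over the parsed regions has $\Theta(\abs{\Comlcp})$ nodes, each of which needs $\Theta(\lg n)$ bits (names, child pointers, lengths). This does not fit into the $\abs{\Comlcp}\lg\sigma$ bits of rewritable text, and it certainly does not fit into the $\Oh{m}$ additional words. Your remark that ``once a substring has been parsed and recognised as a repeat, its characters are redundant and may be overwritten by the corresponding pointers'' ignores this $\lg n$ versus $\lg\sigma$ mismatch. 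The paper's resolution is to \emph{truncate} the tree at height $\eta=\log_3(\alpha\lg^2 n/\lg\sigma)$, so that only $\numNodes{l}=\Oh{l(\lg\sigma)^{0.7}/(\lg n)^{1.2}}$ nodes survive and the upper part fits into the text; the $\sqrt{\lg n}$ in the theorem is then exactly the amortized price of this truncation --- $\Oh{\lg n}$ dictionary time per $\eta$-node spread over its $\Theta(\lg^2 n/\lg\sigma)$ generated characters, plus the $\Oh{\eta\wordpack}$ word-packed character comparisons needed because the fat leaves must be compared by their generated substrings rather than by names. It is \emph{not} obtained from a deterministic dictionary with $\Oh{\sqrt{\lg n}}$ lookup applied to a full tree, and your ``each base position participates in only $\Oh{1}$ name creations'' argument, while true for the time of a full tree, leaves you with a structure you have nowhere to store.

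A second, smaller omission: even granting HSP stability, storing the structure in text space forces machinery you have not budgeted for. Fragile and non-surrounded $\eta$-nodes must be \emph{reparsed} when two trees are merged or when characters adjacent to a parsed interval are later consumed, and reparsing needs access to the original characters; the paper therefore keeps read-only ``red'' margins of width $\margin$ at the ends of every \LCEI{}, reserves explicit working space on the ``green'' intervals (its two invariants), handles overlapping intervals via a periodicity argument, and enforces a minimum interval length $\gap$ with naive comparisons below that threshold (charged to the $m\lg m\cdot\lg n\lg^* n$ term). Your plan treats the in-place storage and the text-restoration requirement as bookkeeping, but without the truncation and the red/green partition the construction either exceeds the space budget or destroys text that later reparsings still need.
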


Excluding the loading cost for the text, the running time can be sublinear (when $\abs{\Comlcp} = \oh{n/\sqrt{\lg n}}$ and $m \lg m = \oh{n / \lg n \lg^* n}$).
To the best of our knowledge, this is the first algorithm that refines the worst-case performance guarantee. %
All previously mentioned (deterministic and randomized) algorithms take $\Om{n}$ time even if we exclude the loading cost for the text.
Also, general string sorters (e.g., forward radix sort~\cite{forwardradixsort} or multikey quicksort~\cite{BentleyS97}),
which do not take advantage of the overlapping of suffixes,
suffer from the lower bound of $\Om{\Maxlcp\wordpack}$ time,
where $\Maxlcp$ is the sum of all LCP values in the SLCP, which is always at least $\abs{\Comlcp}$, but can in fact be $\Theta(nm)$.

\subsection{Relationship Between Suffix Sorting and LCE Queries}
The LCE-problem is to preprocess a text $T$ such that subsequent LCE queries can be answered efficiently.
Data structures for LCE and sparse suffix sorting are closely related, as shown in the 
following \lcnamecref{obsLCEconnection}:

\begin{observation}\label{obsLCEconnection}
	Given a data structure that answers LCE queries in $\Oh{\tau}$ time for $\tau > 0$,
	we can compute sparse suffix sorting for $m$ positions in $\Oh{\tau m \lg m}$ time
	by inserting suffixes into a balanced binary search tree~\cite{Irving2003sbs}. %
	Conversely, given an algorithm computing the SSA and the SLCP of a text $T$ of length $n$ for $m$ positions in $\Oh{f(n,m)}$ time
	with $\Oh{m}$ words of space for a function $f$, 
	we can construct a data structure in $\Oh{\max(f(n,m), n/m})$ time with $\Oh{m}$ words of space, answering LCE queries on $T$ in $\Oh{n^2/m^2}$ time.
\end{observation}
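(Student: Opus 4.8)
The plan is to prove the two directions separately, since each is a black-box reduction.

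For the first (forward) direction, I would make the ``outlined'' construction explicit. Maintain a self-balancing binary search tree over the suffixes inserted so far, keyed by lexicographic order; since its height is $\Oh{\lg m}$, inserting a new suffix walks down a root-to-leaf path performing $\Oh{\lg m}$ suffix-to-suffix comparisons. To compare the suffix starting at $i$ with the one starting at $j$, I would issue a single query $\ell := \lce(i,j)$, costing $\Oh{\tau}$, and break the tie by inspecting the single pair of characters $T[i+\ell]$ and $T[j+\ell]$ (treating a run off the end of the text as the smallest symbol). Over all $m$ insertions this is $\Oh{\tau m \lg m}$ time, and the final in-order traversal yields the SSA; the SLCP is then obtained by one further $\lce$ query per lexicographically adjacent pair, a negligible $\Oh{\tau m}$. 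Thus $\timeSuffix=\Oh{\tau}$ gives the claimed bound.

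For the converse direction the idea is to choose the $m$ positions cleverly and feed them to the given algorithm as a black box. Fix a difference cover $D$ modulo $v$, i.e.\ a set $D\subseteq[0..v)$ of size $\Oh{\sqrt v}$ such that for every residue $r$ there exist $a,b\in D$ with $a-b\equiv r\pmod v$, and take $\Pos:=\{\,i\in[1..n] : (i \bmod v)\in D\,\}$, whose size is $\Oh{n/\sqrt v}$. Choosing $v=\Theta((n/m)^2)$ (so $\sqrt v=\Theta(n/m)$) makes $\abs{\Pos}=\Oh{m}$, so running the assumed algorithm produces $\SSA[T,\Pos]$ and $\SLCP[T,\Pos]$ in $\Oh{f(n,m)}$ time and $\Oh{m}$ space. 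On top of the SLCP I would build a constant-time range-minimum structure in $\Oh{m}$ additional space; together with the ranks supplied by the SSA this answers $\lce(i',j')$ for any two sampled positions $i',j'$ in $\Oh{1}$ time, via the standard reduction of LCE between indexed suffixes to an RMQ over the SLCP values separating them. To answer an arbitrary query $\lce(i,j)$, the defining property of $D$ guarantees a shift $\delta\in[0..v)$ with both $i+\delta$ and $j+\delta$ in $\Pos$, computable in $\Oh{1}$ from $(i\bmod v)$ and $(j\bmod v)$ via an $\Oh{v}$-size precomputed table. I would compare $T[i..i+\delta)$ with $T[j..j+\delta)$ naively; on a mismatch the answer lies within this window, and otherwise $\lce(i,j)=\delta+\lce(i+\delta,j+\delta)$ with the second term obtained by the $\Oh{1}$ RMQ. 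The naive scan of fewer than $v=\Theta(n^2/m^2)$ characters dominates, giving the claimed $\Oh{n^2/m^2}$ query time.

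The routine parts are the forward direction and the RMQ bookkeeping; the delicate point is the converse. I expect the main obstacle to be the boundary handling, namely ensuring that the shifted positions $i+\delta,j+\delta$ do not run past $n$ — which I would resolve by padding the text with a sentinel block of length $v$, costing only $\Oh{v}$ extra space and leaving the asymptotics unchanged — together with verifying that a difference cover of size $\Oh{\sqrt v}$ exists and is constructible modulo any $v$ within the stated time and space budget.
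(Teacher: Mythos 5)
Your proposal follows exactly the route the paper intends: the Observation is stated without a proof of its own, and the two parenthetical hints it gives (balanced binary search trees for the forward direction, difference-cover sampling for the converse) are precisely what you elaborate, with the correct choice of period $v=\Theta((n/m)^2)$ so that the sample has $\Oh{m}$ positions and the naive scan costs $\Oh{n^2/m^2}$. The forward direction is fine. The one point to tighten in the converse is the space accounting: the Observation claims $\Oh{m}$ space, but your $\Oh{v}$-size lookup table for $\delta$ and the length-$v$ sentinel padding, with $v=\Theta(n^2/m^2)$, exceed $\Oh{m}$ whenever $m=\oh{n^{2/3}}$. Both are avoidable by standard means: instead of padding, clip the naive scan at position $n$ (an LCE can never extend past the text end); and instead of tabulating $\delta$, store the cover $D$ as a sorted array of size $\Oh{\sqrt{v}}=\Oh{n/m}$ and find $\delta$ at query time by iterating over $a\in D$ and binary-searching for $(j-i+a)\bmod v$ in $D$, which costs $\Oh{(n/m)\lg n}$ time per query --- absorbed by the $\Oh{n^2/m^2}$ scan for the parameter range in which that bound is meaningful --- and $\Oh{n/m}\subseteq\Oh{m}$ words in the regime $m\ge\sqrt{n}$ where the claimed query time beats a plain character comparison. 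With that adjustment your argument is complete and matches the paper's intended reduction.
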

	\begin{proof}
		The first claim is trivial.
		For the second claim, we use the data structure of~\cite[Theorem 1a]{billeLCEPre} that 
		answers LCE queries in~\Oh{\tau} time.
The data structure uses the SSA and SLCP values of those suffixes whose starting positions are in a difference cover sampling modulo~$\tau$.
This difference cover consists of \Oh{n/\sqrt{\tau}} text positions, and can be computed in~\Oh{\sqrt{\tau}} time~\cite{ColbournL00}.
We obtain the claimed bounds on time and space by setting~$\tau := n^2/m^2$.
	\end{proof}

\begin{figure}
	\centering{%
		\small
	\begin{tabular}{lllll}
		\toprule
		\multicolumn{2}{c}{Construction}  & \multicolumn{2}{c}{Data Structure} \\ \cmidrule(lr){1-2} \cmidrule(lr){3-4}
		Time & Working Space   & Space                                  & Query Time                       & Ref \\ \midrule
		{\OhS{n\tau}}     & {\OhS{\frac{n}{\tau}}} & ${\OhS{\frac{n}{\tau}}}$ & {\OhS{\tau \lg\min(\tau, \frac{n}{\tau}}} & \cite{TanimuraIBIPT16} \\
		{\OhS{n^{2+\epsilon}}}       & {\OhS{\frac{n}{\tau}}}   & {\OhS{\frac{n}{\tau}}}                          & {\OhS{\tau}}                      & \cite{billeLCE} \\
		${\OhS{n \tupleS{\lg^*n + \frac{\lookupTimeBest}{\tau} + \frac{\lg\tau}{\wordpackI} }}}$ & 
		${\OhS{\max\tupleS{\frac{n}{\lg n}, \tau^{\lg 3} \lg^* n} }}$ 	& 
		${\OhS{\frac{n}{\tau}}}$ & 
		${\OhS{\lg^* n \tupleS{\lg\tupleS{\frac{\ell}{\tau}} + \frac{\tau^{\lg 3}}{\wordpackI} }}}$ &
		Thm.~\ref{thmLCEtradeoff}
		\\
		${\OhS{n \tupleS{\lg^*n + \frac{\lookupTimeBest}{\tau} + \frac{\lg\tau}{\wordpackI} }}}$ & 
		${\OhS{\tau^{\lg 3} \lg^* n }}$ 	& 
		${\OhS{\frac{n}{\tau}}}$ & 
		${\OhS{\lg^* n \tupleS{\lg\tupleS{\frac{n}{\tau}} + \frac{\tau^{\lg 3}}{\wordpackI} }}}$ &
		Cor.~\ref{lemmaTruncatedLCEPointer}\\
		\bottomrule
	\end{tabular}
}%
	\caption{Deterministic LCE data structures with trade-off parameters. 
		The length returned by an LCE query is denoted by $\ell$.
		$\epsilon$ and $\tau$ with $\epsilon >0$ and $1 \le \tau \le n$ are constants. Space is measured in \emph{words}.
	The column \emph{Working Space} lists the working space needed to construct a data structure, 
	whereas the column \emph{Space} lists the final space needed by a data structure. 
}
	\label{tableDeterministic}
\end{figure}

There has been a great interest in devising deterministic LCE data structures with trade-off parameters~(see \cref{tableDeterministic}),
or in compressed space~\cite{tanimura17lce,NishimotoIIBT16,I16}.
One of the currently best data structures with a trade-off parameter is due to
\citet{TanimuraIBIPT16}, using ${\Oh{n/\tau}}$ words of space and answering LCE queries in ${\Oh{\tau \lg\min(\tau,n/\tau)}}$ time, for a trade-off parameter~$\tau$ with $1 \le \tau \le n$. 
However, this data structure has a preprocessing time of $\Oh{n\tau}$, and is thus not helpful for sparse suffix sorting. We develop a new data structure for LCE with the following properties.

\begin{theorem}\label{thmLCEtradeoff}
	There is a \emph{deterministic} data structure
	using $\Oh{n/\tau}$ words of space
	that answers an LCE query~$\ell := \lce(i,j)$ for two text positions~$i$ and $j$ with $1 \le i, j \le n$ on a text of length~$n$ in $\Oh{\lg^* n \tuple{\lg\tuple{\ell/\tau} + \tau^{\lg 2} \wordpack }}$ time, 
	where $1 \le \tau \le n$.
	We can build the data structure in $\Oh{n \tuple{\lg^*n + (\lookupTimeBest) / \tau + (\lg\tau) \wordpack }}$ time
	with additional $\Oh{\max(n/\lg n, \tau^{\lg 3} \lg^* n)}$ words during construction.
\end{theorem}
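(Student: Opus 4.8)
The plan is to build the data structure as a \emph{truncated} hierarchical stable parsing (HSP) of $T$. Recall that HSP, like ESP, repeatedly groups the current string into blocks of length $2$ or $3$, replaces each block by a name taken from a dictionary, and recurses on the shorter string; after $\Oh{\lg n}$ rounds only one name, covering all of $T$, survives. The one property on which everything rests is \emph{stability}: whenever two substrings of $T$ coincide, their block decompositions -- and hence the names produced on every level -- are identical except inside an $\Oh{\lg^* n}$-wide margin at each end of the shared region. I would first isolate and prove this locality statement, since it is both the heart of the construction and, I expect, the main obstacle: it is here that the deterministic Cole--Vishkin-style symmetry breaking responsible for the $\lg^* n$ rounds must be shown to confine all parsing differences to such a margin.

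Granting stability, the data structure keeps the parse only on the levels at and above a cutoff $\ell := \Theta(\lg\tau)$. A node on level $\ell$ spans between $2^\ell = \Theta(\tau)$ and $3^\ell = \Theta(\tau^{\lg 3})$ text positions, so there are $\Oh{n/\tau}$ of them; together with the higher levels (geometrically fewer nodes), the name dictionaries, and the boundary data needed to resolve margins, the stored hierarchy occupies $\Oh{n/\tau}$ words.

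To answer $\lce(i,j)$ I would compare the two suffixes level by level across the $\Oh{\lg\tuple{n/\tau}}$ stored levels. Stability guarantees that, as far as the two suffixes agree, their block-name sequences on each level agree as well except within an $\Oh{\lg^* n}$ margin; hence at every level it suffices to inspect $\Oh{\lg^* n}$ names to locate the block in which the two parses first diverge and to refine the comparison into the next finer level. Carrying this down to level $\ell$ accumulates the agreeing block widths into the LCE value and isolates a single divergent level-$\ell$ block of at most $\Oh{\tau^{\lg 3}}$ characters; its longest common prefix I compute by a packed naive comparison in $\Oh{\tau^{\lg 3}\wordpack}$ time. Summing over the levels and this final block yields the stated query bound $\Oh{\lg^* n\tuple{\lg\tuple{n/\tau} + \tau^{\lg 3}\wordpack}}$.

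For construction I would run the HSP parser bottom-up over $T$. The full parse costs $\Oh{n\lg^* n}$: the $\lg^* n$ symmetry-breaking rounds are expensive only on the finest level, and the per-level work decays geometrically, summing to $\Oh{n\lg^* n}$. The lowest $\Oh{\lg\tau}$ levels are discarded but still have to be traversed to reach level $\ell$; handling them in packed form yields the $\Oh{n(\lg\tau)\wordpack}$ term, and building the name dictionary and navigation links for the $\Oh{n/\tau}$ retained nodes accounts for the $\Oh{n\lookupTimeBest/\tau}$ term. The transient $\Oh{\tau^{\lg 3}\lg^* n}$ words are precisely the buffers required to parse and name one level-$\ell$ block, of width up to $\tau^{\lg 3}$, across all $\lg^* n$ reduction rounds before it is emitted and freed. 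Finally, it must be verified that the $\Oh{\lg^* n}$ margins never force a naive fallback on more than the one isolated level-$\ell$ block, so that the query cost does not degrade beyond the bound above.
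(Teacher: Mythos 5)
Your construction follows the same route as the paper's: truncate the parse tree at height $\eta=\lg\tau$ so that each surviving leaf spans between $2^\eta=\tau$ and $3^\eta=\tau^{\lg 3}$ text positions, keep only the $\Oh{n/\tau}$ upper nodes, answer a query by descending both trees and matching names level by level, and fall back to packed character comparison inside the truncated leaves. The space, query and construction accounting all line up with the paper's treatment of truncated trees; the paper uses plain ESP here and reserves HSP for the later in-text-space result, but that choice is immaterial for this theorem.

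The genuine gap is in your formulation of stability and in the query step that leans on it. For \Type{2} meta-blocks the alphabet reduction does confine all parsing differences to an $\Oh{\lg^* n}$ margin, but for repeating meta-blocks (long runs $c^k$) the greedy left-to-right blocking means that two occurrences of the same run at different phases produce the \emph{same} name sequence with \emph{shifted} block boundaries; no margin bounds where the misalignment sits. Your step ``inspect $\Oh{\lg^* n}$ names to locate the block in which the two parses first diverge and refine into the next level'' therefore breaks inside a run: the name sequences never diverge, yet matching nodes one-to-one compares substrings at different offsets from the two query positions, and resolving that shift naively costs time proportional to the run length, not to $\lg^* n$. The paper introduces \emph{semi-stable} nodes and \emph{surnames} (and, for HSP, exponents) precisely so that a shifted match is disposed of in constant time per meta-block; without some such device your query bound does not follow on periodic inputs. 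A second, smaller imprecision: at the truncation level it is not a single divergent block but up to $\Oh{\lcontext+\rcontext}=\Oh{\lg^* n}$ fragile $\eta$-nodes that must be verified by packed comparison; this still fits the claimed bound because the $\lg^* n$ factor multiplies the $\tau^{\lg 3}\wordpack$ term, so the caveat you raise at the end resolves in your favor. Finally, note that the locality statement you propose to prove from scratch is cited by the paper from Cormode and Muthukrishnan; what actually needs care is the correct statement of it, distinguishing stable from merely semi-stable nodes.
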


The construction time of our data structure is upper bounded by $\Oh{n \lg n}$, and hence it can be constructed faster than the 
deterministic data structures in~\cite{TanimuraIBIPT16} when $\tau = \Om{\lg n}$.

\subsection{Outline of this Article}
We start with \cref{secESP} introducing the edit sensitive parsing, and giving a motivation for our hierarchical stable parsing
whose description follows in \cref{secHSP}.
\Cref{secLCE} shows the general techniques for answering LCE queries with the HSP tree.
Subsequently, \cref{secSSS} introduces our algorithm for the sparse suffix sorting problem
with an abstract data type \emph{\DynLCE{}} that supports LCE queries and a merging operation.
The remainder of that section shows that the HSP tree from \cref{secHSP} fulfills all properties of a \DynLCE{};
in particular, HSP trees support the merging operation.
The last part of this article is dedicated to the study on how the text space can be exploited with the HSP technique to improve the memory footprint.
This leads us to truncated HSP trees with a merging operation that is tailored to working in text space (\cref{secSparseSuffixSortingTextSpace}).
With the truncated HSP trees we finally solve the sparse suffix sorting problem in the time and space as claimed in \cref{thmSparseSuffixSorting}.

\subsection{Preliminaries}
Let $\Sigma$ be an ordered alphabet of size $\sigma$ whose characters are represented by integers.
For a string $\textX \in \Sigma^{*}$, let $|\textX|$ denote the length of $\textX$.
For a position $1 \le i \le \abs{\textX}$ in $\textX$, let $\textX[i]$ denote the $i$-th character of $\textX$.
For positions $i$ and $j$ with $1 \le i,j \le \abs{\textX}$, let $\textX[i..j] = \textX[i] \textX[i+1] \cdots \textX[j]$.
Given $\textT = \textX\textY\textZ$ with $\textX,\textY,\textZ \in \Sigma^*$, $\textX$, $\textY$ and $\textZ$ are called a \intWort{prefix}, \intWort{substring}, \intWort{suffix} of $\textT$, respectively.
In particular, the suffix beginning at position $i$ is denoted by $\textT[i..]$.
A \intWort{period} of a string $\textY$ is a positive integer $p < \abs{\textY}$ such that $\textY[i]=\textY[i+p]$ for all integers~$i$ 
with $1 \le i \le \abs{\textY}-p$.

For a binary string~$T \in \menge{0,1}^*$
we are interested in the operation $T.\rank[1](j)$ that counts the number of~\bsq{1}s in $T[1..j]$.
This operation can be performed in constant time by a data structure~\cite{jacobson89space} that takes $\oh{\abs{T}}$ extra bits of space, and can be constructed in time linear in~$\abs{T}$.

An \intWort{interval} $\intervalI=[b..e]$ is the set of consecutive integers from $b$ to $e$, for $b\le e$. 
For an interval $\intervalI$, we use the notations $\ibeg{\intervalI}$ and $\iend{\intervalI}$ to denote the beginning and the end of $\intervalI$;
i.e., $\intervalI = [\ibeg{\intervalI}..\iend{\intervalI}]$.
We write $\abs{\intervalI}$ to denote the length of $\intervalI$; i.e., $\abs{\intervalI}=\iend{\intervalI}-\ibeg{\intervalI}+1$.

\section{Edit Sensitive Parsing}\label{secESP}
The crucial technique used in this article is the so-called alphabet reduction.
The alphabet reduction is used to partition a string deterministically into blocks.
The first work introducing the alphabet reduction technique to the string context was done by~\citet{Mehlhorn94sigenc}.
They presented the so-called \emph{signature encoding}.
The signature encoding is derived from a tree coloring approach~\cite{Goldberg87treecolor}.
It supports string equality checks in the scenario where strings can be dynamically concatenated or split.
In the same context,
\citet{sahinalp94stoc} studied the maximal number of characters to the left and to the right of a substring~$\textZ$ of~$\textY$ such that
changing one of these characters to the left or to the right of~$\textZ$ can affect how~$\textZ$ is parsed by the signature encoding of~$\textY$.
In a later work, \citet{Alstrup2000Pmi} enhanced signature encoding with additional queries like LCE\@.
Recently, an LCE data structure using signature encoding in compressed space was shown by \citet{NishimotoIIBT16}.
A slightly modified version of signature encoding is proposed by \citet{SakamotoMKS09},
showing that alphabet reduction can be used to build a grammar compressor whose approximation ratio to the size of the smallest grammar is \Oh{\lg^*n \lg n}.

A modified parsing was introduced by \citet{Cormode2007sed}. 
They modified the parsing by restricting the block size from two up to three characters, 
and named their technique \emph{edit sensitive parsing}~(ESP).
Initially used for approximating the edit distance with moves, the ESP technique has been found to be applicable to building self-indexes~\cite{TakabatakeNKTS16}. %
We stick to the ESP technique, because the size of the subtree of a node in the ESP tree is bounded.
In this section, we first introduce the ESP technique, and then give a motivation for a modification of the ESP technique, which we call hierarchical stable parsing (HSP).
Before that, we recall the alphabet reduction and the ESP trees.

\subsection{Alphabet Reduction}\label{secAlphabetReduction}
Given a string~$\textY$ in which no two adjacent characters are the same, i.e., 
$\textY[i-1] \not= \textY[i]$ for every integer~$i$ with $2 \le i \le \abs{\textY}$,
we can partition~$\textY$ (except at most the first $\lg^* \sigma$ positions) into \intWort{blocks} of size two or three with a technique called \intWort{alphabet reduction}~\cite[Section~2.1.1]{Cormode2007sed}.
It consists of three steps (see also \cref{figAlphabetReduction}):
First, it reduces the alphabet size to at most eight, in which every character has a rank from zero to seven.
Subsequently, it substitutes characters with ranks four to seven with characters having a rank between zero and two.
By doing so, it shrink the alphabet size to three.
Finally, it identifies certain text positions as landmarks that determine the block boundaries.

For reducing the alphabet size, we assume that $\sigma \ge 9$, otherwise we skip this step.
The task is to generate a surrogate string~$\textZ$ on the alphabet~$\menge{0,1,2}$ such that 
the entry $\textZ[i]$ depends only on the substring $\textY[i..i+\lg^*\sigma]$, for $1 \le i \le \abs{\textY}-\lg^*\sigma$.
To this end, we regard~$\textY$ as an array of binary numbers, i.e., 
$\textY[i][\ell] \in \menge{0,1}$ for an integer~$\ell$ with $1 \le \ell \le \upgauss{\lg \sigma}$.
We create an array~$\textZ$ of length~$\abs{\textY}-1$ storing integers of the domain~$[0..2\upgauss{\lg \sigma}-1]$.
For each text position~$i$ with $2 \le i \le \abs{\textY}$, we compare~$\textY[i]$ with~$\textY[i-1]$:
We compute $\ell := \lcp(\textY[i-1], \textY[i])$, and write $2 \ell + \textY[i][\ell+1]$ to $\textZ[i]$ (remember that we treat $\textY[i]$ as a binary string\footnote{Fix an arbitrary rule whether~$\textY[i][1]$ is the least significant or most significant bit.}).
By doing so, no two adjacent integers are the same in~$\textZ$~\cite[Lemma 1]{Cormode2007sed}.
Having computed $\textZ$, we recurse on~$\textZ$ until~$\textZ$ stores integers of the domain~$\menge{0,\ldots,5}$.
Note that the alphabet cannot be reduced further with this technique, since $2\upgauss{\lg x} \ge x$ for every integer~$x$ with $2 \le x \le 6$.
To obtain the final~$\textZ$, we recurse at most~$\lg^* \sigma$ times.
Let~$r$ be the number of recursions. Then we have $\abs{\textY} = \abs{\textZ}+r$.

If we skipped this step because of a small alphabet size ($\sigma \le 8$), 
then we set $\textZ[i]$ to the rank of $\textY[i]$ induced by the linear order of $\Sigma$ (e.g., $\textZ[i] = 0$ if $\textY[i]$ is the smallest character).
Since $\abs{\textY} = \abs{\textZ}$, we set $r$ to zero.

To reduce the domain further, we iterate over the values $j = 3,\ldots,8$ in ascending order, substituting
each $\textZ[i] = j$ with the lowest value of $\menge{0,1,2}$ that does not occur in its neighboring entries ($\textZ[i-1]$ and $\textZ[i+1]$, if they exist).
Finally, $\textZ$ contains only numbers between zero and two.

In the final step we create the landmarks that determine the block boundaries.
The landmarks obey the property that the distance between two subsequent landmarks is greater than one, but at most three.
They are determined by local maxima and minima:
First, each number~$\textZ[i]$ that is a local maximum is made into a landmark.
Second, each local minimum that is not yet neighbored by a landmark is made into a landmark.

Finally, we create blocks by associating each position in~$\textZ$ with its closest landmark.
Positions associated with the same landmark are put into the same block.
As a tie breaking rule we favor the right landmark in case that there are two closest landmarks.
The last thing to do is to map each block covering~$\textZ[i..j]$ to $\textY[i+r..j+r]$.

The tie breaking rule can cause a problem when~$\textZ[1]$ and $\textZ[3]$ are landmarks, i.e., the leftmost block contains only one character.
We circumvent this problem by fusing the blocks of the first and second landmark to a single block. If this block covers four characters, we split it evenly.

Altogether, the alphabet reduction needs \Oh{\abs{\textY} \lg^*\sigma} time, since we perform $r \le \lg^* \sigma$ reduction steps,
while determining the landmarks and computing the blocks take \Oh{\abs{\textY}} time.
The steps are summarized in the following \lcnamecref{lemAlphabetReductionSpeed}:

\begin{lemma} \label{lemAlphabetReductionSpeed}
	Given a string~$\textY$ in which no two adjacent characters are the same,
	the alphabet reduction applied on~$\textY$ partitions~$\textY$ into blocks, except at most $\upgauss{\lg^* \sigma}$ positions at the left.
It runs in~\Oh{\abs{\textY} \lg^* \sigma} time.
\end{lemma}
The main motivation of introducing the alphabet reduction is the following \lcnamecref{lemAlphabetReduction} that shows that 
applying the alphabet reduction on a text~$\textY$ and on a pattern~$\textX$ generates the same blocks in $\textX$ as 
in all occurrences of~$\textX$ in $\textY$, except at the left and right borders of a specific length:

\vspace{1em}
\TextFigureBlock{0.6}{%
\begin{lemma}[{\cite[Lemma 4]{Cormode2007sed}}] \label{lemAlphabetReduction}
	Given a substring~$\textX$ of a string~$\textY$ in which no two adjacent characters are the same,
	the alphabet reduction applied to~$\textX$ alone creates the same blocks as the blocks representing the substring~$\textX$ in $\textY$, 
	except for at most $\lcontext := \upgauss{\lg^* \sigma} + 5$ characters at the left border, and $\rcontext := 5$ characters at the right border.
\end{lemma}
}{%
\Bild[scale=1.0]{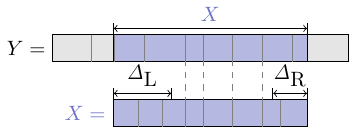}
}%
\vspace{0.5em}

\begin{figure}[t]
	\centering{%
		\Bild[width=\textwidth]{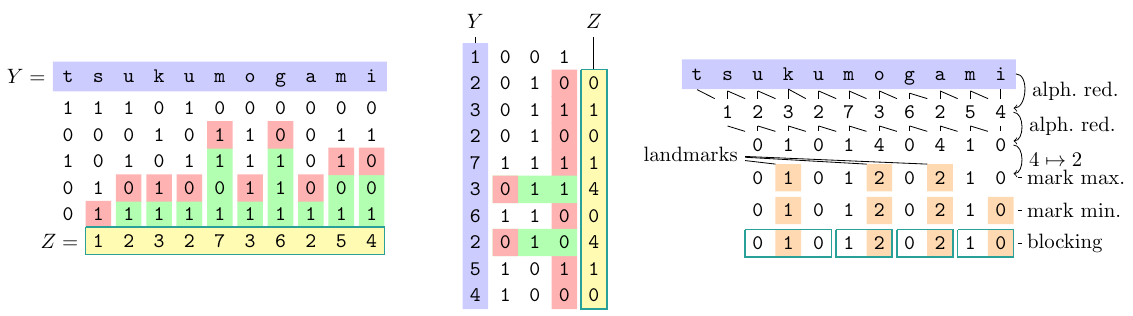}
	}
	\caption{Alphabet reduction applied on the string~$\textY = \Char{tsukumogami}$. 
		We represent the characters with the five lowest bits of the ASCII encoding. 
		\emph{Left:} A single step of the alphabet reduction.
		The bit representation of each character $\textY[i]$ is shown vertically in the left figure (the most significant bit is on the top).
	The alphabet reduction matches the least significant bits\C{~(shaded green)} of two adjacent entries, and returns twice the number of matched bits plus the mismatched bit of the right character (\C[highest shaded bit]{shaded red}).
		The resulting integer array~$\textZ$ is the last row.
		\emph{Middle:} A second step of the alphabet reduction, where the result of the first alphabet reduction stored in $\textZ$ is put into $\textY$.
		\emph{Right:} Computation of the blocks. Two steps of the alphabet reduction (seen in the left and in the middle image) yield a sequence consisting only of integers within the domain $\menge{0,\ldots,4}$.
		Subsequently, all \bsq{4}s are replaced (in this case by \bsq{2} since the neighboring values are \bsq{0} and \bsq{1} in both cases), and the maxima and certain minima are made into landmarks (\C[shaded]{shaded orange}).
		Finally, the\C{cyan} boxes in the last row are the computed blocks.
}
	\label{figAlphabetReduction}
\end{figure}

\begin{figure}[t]
	\centering{%
		\begin{adjustbox}{valign=t}
		\Bild[scale=1.0]{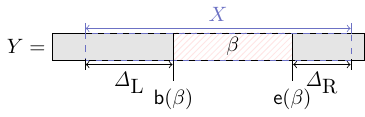}
	\end{adjustbox}
		\hspace{5em}
		\begin{adjustbox}{valign=t}
	\Bild[scale=1.0]{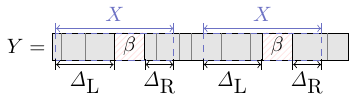}
	\end{adjustbox}
	}
	\caption{\emph{Left:} Surrounded block~$\beta$ with local surrounding~$\textX$ contained in a string~$\textY$. 
		\emph{Right:} Occurrences of the local surrounding~$\textX$ of a surrounded block~$\beta$ in the string~$\textY$, which is 
partitioned into blocks (gray rectangles) by the edit sensitive parsing. 
Although the occurrences of~$\textX$ can be differently blocked at their borders, they all have a block equal to~$\beta$ in common.
}
	\label{figSurroundedBlock}
\end{figure}

Given a block~$\beta$, we call the substring $\textY[\ibeg{\beta}-\lcontext..\iend{\beta}+\rcontext]$ the \intWort{local surrounding} of $\beta$, if it exists 
(i.e., $\ibeg{\beta} - \lcontext \ge 1$ and $\iend{\beta}+\rcontext \le \abs{\textY}$).
	Blocks whose local surroundings exist are also called \intWort{surrounded}.
	A consequence of \cref{lemAlphabetReduction} is the following:
	Given that $\textX$ is the local surrounding of a surrounded block~$\beta$, then
	the blocking of every occurrence of $\textX$ in $\textY$ is the same, except at most $\lcontext$ and $\rcontext$ characters at the left and right borders, respectively.
	We conclude that the blocking of every occurrence of $\textX$ has a block~$\textX[1+\lcontext..\lcontext+\abs{\beta}]$ that is equal to $\textY[\ibeg{\beta}..\iend{\beta}]$ (see \cref{figSurroundedBlock}).

\subsection{Edit Sensitive Parsing}\label{subsecESP}
Whenever a string~$\textY$ contains a repetition of a character at two adjacent positions, we cannot parse~$\textY$ with the 
alphabet reduction. A solution is to additionally use an auxiliary parsing specialized on repetitions of the same character.
With this auxiliary parsing, we can partition~$\textY$ into substrings, 
where each substring is either parsed with the alphabet reduction, or with the auxiliary parsing.
It is this auxiliary parsing where the aforementioned signature encoding and the ESP technique differ.
The main difference is that the ESP technique restricts the lengths of the blocks:
It first identifies so-called \intWort{meta-blocks} in $\textY$, and then further refines these meta-blocks into blocks of length~2 or~3.
The meta-blocks are created in the following 3-stage process~(see also \cref{figESP} for an example):
\begin{enumerate}[(1)]
	\item Identify maximal regions of repeated characters (i.e., maximal substrings of the form $c^\ell$ for $c\in\Sigma$ and $\ell\ge 2$). Such substrings form the \Type{1} meta-blocks.
\item Identify remaining substrings of length at least 2 (which must lie between two \Type{1} meta-blocks). Such substrings form the \Type{2} meta-blocks.
\item Every substring not yet covered by a meta-block consists of a single character and cannot have \Type{2} meta-blocks as its neighbors. 
	Such characters are fused with a neighboring meta-block. The meta-blocks emerging from this fusing are called \Type{M} (mixed). \label{itESPRuleMixed}
\end{enumerate}
Meta-blocks of \Type{1} and \Type{M} are collectively called \intWort{repeating meta-blocks}.
For~\ref{itESPRuleMixed}, we are free to choose whether a remaining character should be fused with its preceding or succeeding meta-block (both meta-blocks are repeating).
We stick to the following tie breaking rule%
\footnote{The original version~\cite{Cormode2007sed} prefers the left meta-block.}:

\RuleBox{%
\begin{Rules}[label={Rule~M:},ref={Rule~M}]
	\item
\label{stipulationM}
Fuse a remaining character $\textY[i]$ with its succeeding meta-block, or, if $i = \abs{\textY}$, with its preceding meta-block.
\end{Rules}
}

\begin{figure}[t]
	\centerline{%
		\Bild[scale=1.0]{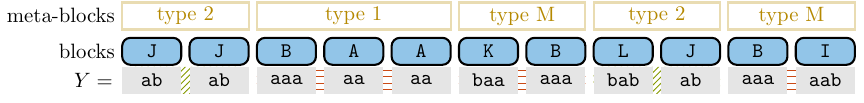}
	}
	\caption{ESP of the string $\textY = \texttt{ababaaaaaaabaaaaabababaaaaab}$. 
		The string is divided into blocks represented by the\C{~gray} rectangular boxes at the bottom.
		Each block gets assigned a new character represented by the capital letters in the rounded boxes.
		The white rectangular boxes on the top level represent the meta-blocks that group the blocks.
		The blocks are connected with\C{red} horizontal lines if they belong to a repeating meta-block,
		or by\C{green} diagonal lines if they belong to a \Type{2} meta-block.
	} %
	\label{figESP}
\end{figure}

Meta-blocks are further partitioned into \intWort{blocks}, each containing two or three characters from $\Sigma$.
Blocks inherit the type of the meta-block they are contained in.
How the blocks are partitioned depends on the type of the meta-block:
	
\begin{description}
	\item[Repeating meta-blocks.] 
		A repeating meta-block is partitioned greedily:
		create blocks of length three until there are at most four, but at least two characters left.
		If possible, create a single block of length two or three; otherwise (there are four characters remaining) create two blocks, each containing two characters.
	\item[Type-2 meta-blocks.]
		A \Type{2} meta-block $\mu$ is partitioned into blocks in $\Oh{\abs{\mu} \lg^* \sigma}$ time by the alphabet reduction (\cref{lemAlphabetReductionSpeed}).
		A block~$\beta$ generated by the alphabet reduction is determined by the characters
		$\textY[\max(\ibeg{\beta}-\lcontext, \ibeg{\mu})..\min(\iend{\beta}+\rcontext, \iend{\mu})]$ due to \cref{lemAlphabetReduction}.
		Given the number of reduction steps~$r$ in \cref{secAlphabetReduction}, the alphabet reduction does not create blocks 
		for the first~$r$ characters of each meta-block.
		The ESP technique blocks the first $r$ characters in the same way as a repeating meta-block. 
		The border case~$r=1$ (one character remaining) is treated by fusing the remaining character with the first block created by the alphabet reduction,
		possibly splitting this block in the case that its size is four.
\end{description}
A block is called \intWort{repetitive} if it contains the same characters.
All blocks of a \Type{1} meta-block and all blocks except at most the left- or rightmost block (these blocks can contain a fused character) in a \Type{M} meta-block are repetitive.

Let $\esp : \Sigma^* \rightarrow (\Sigma^2\cup \Sigma^3)^*$ denote the function 
that parses a string by the ESP technique. We regard the output of $\esp$ as a string of blocks.

\begin{figure}[t]
	\centerline{%
		\begin{tabular}{lr}
			\toprule
			\multicolumn{2}{c}{Common Dictionary} \\
			\midrule
			Rule & {$\generated{\cdot}$} \\ \cmidrule(lr){1-1} \cmidrule(lr){2-2}
			$\ESPname{$a^2b$}      \rightarrow \Char{aab}$ & ${\tt a}^2{\tt b}$ \\
			$\ESPname{$ab$}        \rightarrow \Char{ab}$         & ${\tt ab}$ \\
			$\ESPname{$ba^2$}      \rightarrow \Char{baa}$       & ${\tt ba}^2$ \\
			$\ESPname{$bab$}       \rightarrow \Char{bab}$        & ${\tt bab}$ \\
			$\ESPname{$ba$}        \rightarrow \Char{ba}$         & ${\tt ba}$ \\
			\bottomrule
		\end{tabular}
		\begin{tabular}{lr}
			\toprule
			\multicolumn{2}{c}{ESP Dictionary} \\
			\midrule
			Rule & {$\generated{\cdot}$} \\ \cmidrule(lr){1-1} \cmidrule(lr){2-2}
			$\ESPname{$a^2$}       \rightarrow \Char{aa}$        & ${\tt a}^2$ \\
			$\ESPname{$a^3$}       \rightarrow \Char{aaa}$        & ${\tt a}^3$ \\
			$\ESPname{$a^4$}       \rightarrow \ESPname{$a^2$}\ESPname{$a^2$}$         & ${\tt a}^4$ \\
			$\ESPname{$a^6$}       \rightarrow \ESPname{$a^3$}\ESPname{$a^3$}$         & ${\tt a}^6$ \\
			$\ESPname{$a^9$}       \rightarrow \ESPname{$a^3$}\ESPname{$a^3$}\ESPname{$a^3$}$        & ${\tt a}^9$ \\
			$\ESPname{$a^{12}$}    \rightarrow \ESPname{$a^6$}\ESPname{$a^6$}$         & ${\tt a}^{12}$ \\
			$\ESPname{$(ba)^2$}    \rightarrow \ESPname{$ba$}\ESPname{$ba$}$         & $({\tt ba})^2$ \\
			$\ESPname{$(ba)^3$}    \rightarrow \ESPname{$ba$}\ESPname{$ba$}\ESPname{$ba$}$        & $({\tt ba})^3$ \\
			$\ESPname{$a^4(ba)^2$} \rightarrow \ESPname{$a^4$}\ESPname{$(ba)^2$}$         & ${\tt a}^4({\tt ba})^2$ \\
			$\ESPname{$a^2(ba)^2$} \rightarrow \ESPname{$a^2$}\ESPname{$ba$}\ESPname{$ba$}$        & ${\tt a}^2({\tt ba})^2$ \\
			$\ESPname{$(ab)^3$}     \rightarrow \ESPname{$ab$}\ESPname{$ab$}\ESPname{$ab$}$ & $({\tt ab})^3$ \\
			\bottomrule
		\end{tabular}
		\begin{tabular}{lr}
			\toprule
			\multicolumn{2}{c}{HSP Dictionary} \\
			\midrule
			Rule & {$\generated{\cdot}$} \\ \cmidrule(lr){1-1} \cmidrule(lr){2-2}
			$\HSPname{$a^2$}       \rightarrow \Char{aa}$        & ${\tt a}^2$ \\
			$\HSPname{$a^3$}       \rightarrow \Char{aaa}$        & ${\tt a}^3$ \\
			$\HSPname{$a^4b$}      \rightarrow \HSPname{$a^3$}\HSPname{$ab$}$         & ${\tt a}^4{\tt b}$ \\
			$\HSPname{$a^5b$}      \rightarrow \HSPname{$a^3$}\HSPname{$a^2b$}$         & ${\tt a}^5{\tt b}$ \\
			\bottomrule
		\end{tabular}
	}%
	\caption{Names of the ESP (\cref{subsecESP}) and HSP (\cref{secHSP}) nodes stored in the global dictionary of our examples. 
	The common dictionary contains all names that are used by both ESP and HSP\@. Each name occurs on the left side only once across all dictionaries.}
	\label{tableNames}
\end{figure}

\subsection{Edit Sensitive Parsing Trees}\label{secDictAndNames}
Applying $\esp$ recursively on its output generates a \ac{CFG} as follows.
Let $\espnode{\textY}_0 := \textY$ be a string on an alphabet $\Sigma_0 := \Sigma$.
The output of $\espnode{\textY}_h := \espExp{h}{\textY} = \esp[\espExp{h-1}{\textY}]$ is a sequence of blocks, which belong to a new alphabet %
$\Sigma_{h}$ with $h \ge 1$. 
We call the elements of~$\Sigma_h$ with $h \ge 1$ \intWort{names}, and use the term \intWort{symbol} for an element that is a name or a character.
A block $\beta \in \Sigma_h$ contains a string of symbols with length two or three ($\in \Sigma_{h-1}^2 \cup \Sigma_{h-1}^3$).
We maintain an injective dictionary $\dic : \Sigma_h \rightarrow \Sigma_{h-1}^2 \cup \Sigma_{h-1}^3$ to map a block to its symbols.
The dictionary entries are of the form $\beta \rightarrow xy$ or $\beta \rightarrow xyz$, where $\beta \in \Sigma_h$ and $x,y,z\in \Sigma_{h-1}$.
We write $\dic(\textX) := \dic(\textX[1]) \cdots \dic(\textX[\abs{\textX}]) \in \Sigma_{h-1}^*$ for $\textX \in \Sigma_h^*$.
Each block on height~$h$ is contained in a meta-block~$\mu$ on height~$h-1$, which is equal to a substring $\espnode{\textY}_{h-1}[i..j] \in \Sigma_{h-1}^*$.
We call $\espnode{\textY}_{h-1}[i..j] \in \Sigma_{h-1}^*$ the \intWort{symbols} of~$\mu$.
Since each application of $\esp$ reduces the string length by at least one half,
there is an integer~$k$ with $k \le \lg \abs{\textY}$ such that $\espnode{\textY}_k = \esp(\espnode{\textY}_{k-1})$ is a single block $\tau \in \Sigma_k$.
We write $\idset := \bigcup_{1\le h \le k}\Sigma_h$ for the set of names in $\espnode{\textY}_1, \espnode{\textY}_2, \ldots, \espnode{\textY}_k$.
The \ac{CFG} for $\textY$ is represented by the non-terminals (i.e., the names) $\idset$, the terminals~$\Sigma_0$, the dictionary~$\dic$, and the start symbol~$\tau$.
This grammar exactly derives $\textY$.

Throughout this article, we comply with the convention to write symbols, i.e., characters ($\in \Sigma_0$) and names ($\in \Sigma_h$, $h \ge 1$), in typewriter font;
characters and names are written in lower and upper cases, respectively.
All examples use the same dictionary such that reappearing names are identical (see \cref{tableNames} for the used dictionary).
Names restricted to a particular figure can be written with Greek letters 
(a necessity due to the limitation of having only 26 letters in the English alphabet).

\begin{figure}[t]
	\centering{%
		\Bild{classicesptreeLexplained}
	}
	\caption{The ESP tree of the string $\textY = {\tt aaaaaaaaaaaaaaaababa}$. 
		Like in \cref{figESP}, nodes belonging to the same meta-block are connected by\C{red} horizontal (repeating meta-block) or\C{green} diagonal (\Type{2} meta-block) lines. 
	}
	\label{figClassicESP}
\end{figure}

\begin{figure}[t]
	\centering{%
		\Bild[width=\textwidth]{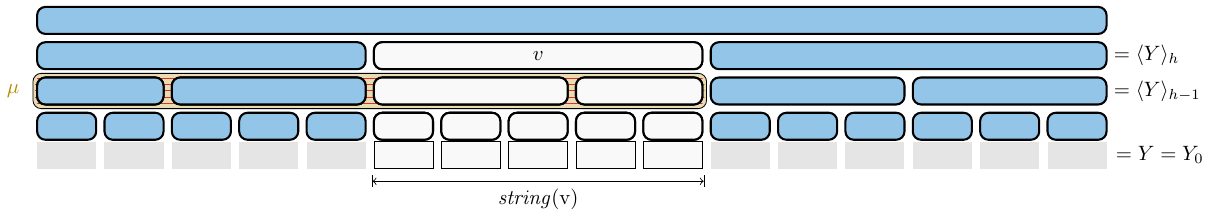}
	}
	\caption{$\espnode{\textY}_{h}$ with a highlighted node~$v$.
		The subtree rooted at $v$ is depicted by the white, rounded boxes.
		The generated substring~$\generated{v}$ of $v$ is the concatenation of the white rectangular blocks on the lowest level in the picture.
	The meta-block $\mu$, on which $v$ is built, is the rounded rectangle covering the children of~$v$ and all nodes connected by a horizontal hatching on height $h-1$.
	}
	\label{figESPtree}
\end{figure}

The \intWort{ESP tree} $\etInst[\textY]$ of a string $\textY$ is the derivation tree of the \ac{CFG} defined above.
	Its root node is the start symbol $\tau$.
	The nodes on height~$h$ are $\espnode{\textY}_h$ for each height~$h \ge 1$.
	In particular, the leaves are $\espnode{\textY}_1$.
	Each leaf refers to a substring in $\Sigma_0^2$ or $\Sigma_0^3$.
	The \intWort{generated substring} of a node $\espnode{\textY}_h[i]$ is the substring of $\textY$ generated by the symbol $\espnode{\textY}_h[i]$ 
	(applying the $h$-th iterate of $\dic$ to $\espnode{\textY}_h[i]$, yields a substring of~$\textY$, i.e., $\dic^{(h)}(\espnode{\textY}_h[i]) \in \Sigma^*$).
We denote the generated substring of~$\espnode{\textY}_h[i]$ by $\generated{\espnode{\textY}_h[i]}$.
For instance, in \cref{figClassicESP}, $\generated{\ESPname{$a^4(ba)^2$}} = {\tt aaaababa}$. 
A node $v$ on height~$h$ is said to be \intWort{built} on $\espnode{\textY}_{h-1}[b..e]$ iff 
$\espnode{\textY}_{h-1}[b..e]$ contains the children of $v$.
Like with blocks, nodes inherit the type of the meta-block on which they are built.
An overview of the definitions is given in \cref{figESPtree}.

\begin{figure}[t]
	\centering{%
		\Bild{resliced}
	}
	\caption{Excerpts of \etInst[\textX\cdots] (\emph{left}) and \etInst[\cdots\textX] (\emph{right}) with
$\textX := \Char{resliced}$.
Under the assumption that  $\lg^*\sigma = 8$,
the common substring~$\textX$ can be blocked differently in both trees (depending on the characters preceding $\textX$ in the right figure).
	}
	\label{figEqualSubtreeDifferentName}
\end{figure}

In what follows, we present two shortcomings of the ESP trees.
The first is that nodes with different names can have the same generated substring, i.e., 
$\dic^{(h)} : \Sigma_h \rightarrow \Sigma^*_0$ is not injective for $h \ge 2$ in general.
The second is that it is not straight-forward to see which nodes of~\etInst[\textY] and~$\etInst[\textZ]$ are equal when $\textY$ is a substring of~$\textZ$.
Both cause problems when comparing subtrees of two nodes, which we later do for answering LCE queries.

Given two nodes~$u$ and~$v$, it holds that $\generated{u} = \generated{v}$ if their names are equal. 
However, the other way around is not true in general.
With $\generated{u} = \generated{v}$, it is not even assured that~$u$ and~$v$ are nodes on the same height.
Suppose that $\Sigma$ is a large alphabet with $\lg^*\sigma = 8$, 
and that $\textX := \Char{resliced}$ occurs in the text that we parse with ESP (see \cref{figEqualSubtreeDifferentName}).
We parse an occurrence of $\textX$ either (a) with the alphabet reduction if it is within a \Type{2} meta-block, 
or (b) greedily if it is at the beginning of a \Type{2} meta-block.
In the former case (a), we apply the alphabet reduction and end at a reduced alphabet with the characters \menge{0,1,2}.
Suppose that this occurrence of $\textX$ is reduced to the string in superscript of
\ruby{\tt r}{$0$}%
\ruby{\tt e}{$2$}%
\ruby{\tt s}{$1$}%
\ruby{\tt l}{$0$}%
\ruby{\tt i}{$1$}%
\ruby{\tt c}{$2$}%
\ruby{\tt e}{$1$}%
\ruby{\tt d}{$0$}.
Then ESP creates the four blocks {\tt re}$\mid${\tt sl}$\mid${\tt ic}$\mid${\tt ed}, whose boundaries are determined by the alphabet reduction.
Further suppose that an application of \esp{} creates two nodes of these blocks, which are put into a node~$u$ by an additional parse such that $\generated{u} = \textX$.
In the latter case (b), ESP creates the three blocks {\tt res}$\mid${\tt lic}$\mid${\tt ed} greedy. 
Suppose that an additional parse puts these blocks in a node~$v$ such that $\generated{v} = \textX$. 
Although $\generated{v} = \generated{u}$, the children of both nodes have different names, and therefore, both nodes cannot have the same name.

\begin{figure}[t]
	\centerline{%
		\Bild[scale=1.0]{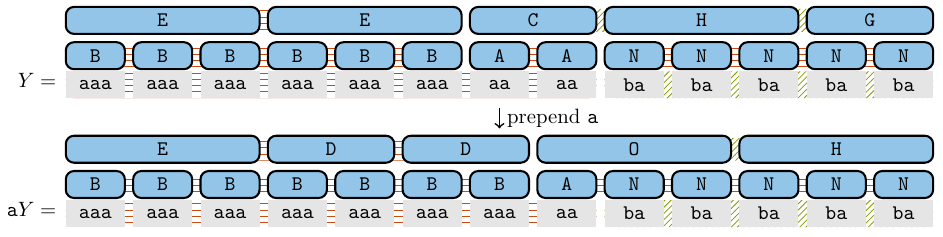}
}

\caption{Excerpt of \etInst[\textY] and \etInst[\Char{a} \textY] (higher nodes omitted), where $\textY = {\tt a}^{9k+4} ({\tt ba})^{3k-1} = {\tt a}^{22} ({\tt ba})^5$ for $k=2$.
	For all $k \ge 2$, there is a unique node in $\espnode{\textY}_2$ with the name \protect\ESPname{$a^4$}.
	This name does not appear in \etInst[\Char{a} \textY].
	} 
\label{figFragileBadSurrounding}
\end{figure}

The second shortcoming is that it is not clear how to transfer the property of the alphabet reduction described in \cref{lemAlphabetReduction} from blocks to nodes.
Given a substring $\textY$ of a string $\textZ$, the task is to analyze 
whether a node $\espnode{\textY}_h[i]$ is also present in the tree~$\etInst[\textZ]$, i.e., we analyze
changes of a node~$\espnode{\textY}_h[i]$ when \ac{preapp} characters to $\textY$.
For the sake of analysis, we distinguish the two terminologies \emph{block} and \emph{node}, although a node is represented by a block:
When we analyze a block in~$\esp[\textX] \in \Sigma_h^*$ for a string $\textX \in \Sigma_{h-1}^*$,
we let $\textX$ to be subject to \ac{preapp} characters of~$\Sigma_{h-1}$, 
whereas when we analyze a node $\espnode{\textY}_h[i]$ on a height~$h$ of \etInst[\textY] of a string $\textY \in \Sigma^*$,
we let $\textY$ to be subject to \ac{preapp} characters of~$\Sigma$.
In this terminology, a block in $\esp[\textX]$ is only determined by $\textX$,
whereas $\espnode{\textY}_h[i]$ is not only determined by 
$\esp^{(h-1)}(\textY) \in \Sigma_{h-1}^*$, but also by $\textY$ itself.
The difference is that a surrounded \Type{2} block of $\esp[\textX]$ cannot be changed by \ac{preapp} characters to $\textX$ due to \cref{lemAlphabetReduction}, 
whereas we fail to find integers~$\lcontext[,h]$ and $\rcontext[,h]$ such that 
a \Type{2} node on height~$h$ built on~$\espnode{\textY}_{h-1}[\lcontext[,h]..\rcontext[,h]]$ cannot be changed by \ac{preapp} characters to $\textY$.
That is because the names inside~$\espnode{\textY}_{h-1}$ and $\espnode{\Char{a} \textY}_{h-1}$ for $h \ge 2$ can differ at arbitrary positions.
This can be seen in the following example:
When parsing the string $\textY := {\tt a}^{9k+4} ({\tt ba})^{3k-1}$ with the names defined in \cref{tableNames}, we obtain
$\esp[{\esp[\textY]} ] = \esp[ {\ESPname{$a^3$}}^{3k} \ESPname{$a^2$}\ESPname{$a^2$}\ESPname{$ba$}^{3k-1} ] = \ESPname{$a^9$}^k \ESPname{$a^4$} \ESPname{$(ba)^3$}^{k-1} \ESPname{$(ba)^2$}$.
Let us focus on the unique occurrence of the name~\ESPname{$a^4$}, which is depicted in \cref{figFragileBadSurrounding} for $k=2$. 
On the one hand, there is a block representing the name~\ESPname{$a^4$} on height~two. 
This block is surrounded for a sufficiently large~$k$. Even for $k \ge 1$, it is easy to see that there is no way to change the name of this block
by \ac{preapp} characters to the string ${\ESPname{$a^3$}}^{3k} \ESPname{$a^2$}\ESPname{$a^2$}\ESPname{$ba$}^{3k-1}$.
On the other hand, there is a unique node in \etInst[\textY] with name~\ESPname{$a^4$} on height~two.
Regardless of the value of~$k$, prepending~${\tt a}$ to $\textY$ changes the name of~$v$:
$\esp[{\esp[{\tt a} \textY]} ] = \esp[ {\ESPname{$a^3$}}^{3k+1} \ESPname{$a^2$}\ESPname{$ba$}^{3k-1} ] = \ESPname{$a^9$}^{k-1} \ESPname{$a^6$} \ESPname{$a^6$} \ESPname{$a^2(ba)^2$} \ESPname{$(ba)^3$}^{k-1}$.
Nevertheless, we introduce the notion of surrounded nodes, since they are helpful to find rules that determine those nodes that cannot be changed by \ac{preapp} characters.

\begin{wrapfigure}{r}{17em}
	\vspace{-1em}
	\centering{%
		\Bild[width=0.35\textwidth]{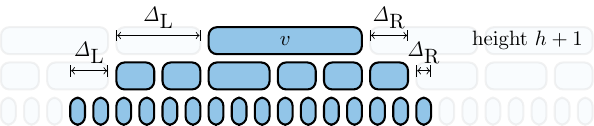}
	}%
\end{wrapfigure}
\block{Surrounded Nodes}
Analogously to blocks we classify nodes as surrounded when they are neighbored by sufficiently many nodes:
A leaf is called surrounded iff its generated substring is surrounded.
The local surrounding of a leaf is the local surrounding of the block represented by the leaf.
Given an internal node~$v$ on height~$h+1$ ($h \ge 1$) whose children are $\espnode{\textY}_h[\beta]$,
the \intWort{local surrounding} of~$v$ is the union of the nodes $\espnode{\textY}_h[\ibeg{\beta}-\lcontext..\iend{\beta}+\rcontext]$ 
and the local surrounding of each node in $\espnode{\textY}_h[\ibeg{\beta}-\lcontext..\iend{\beta}+\rcontext]$.
If all nodes in the local surrounding of~$v$ are surrounded,
we say that $v$ is \intWort{surrounded}.
Otherwise, we say that $v$ is \intWort{non-surrounded}.

\begin{lemma}\label{lemNumberSurroundedNodes}
There are at most $\lcontext + \rcontext$ many non-surrounded nodes on each height, summing up to $\Oh{\lg^*n \lg n}$ non-surrounded nodes in total.
\end{lemma}
\TextFigureBlock{0.8}{%
\begin{proof}
	We show that a node~$v$ on height~$h$ is surrounded if it has~$\lcontext$ preceding and $\rcontext$ succeeding nodes.
	This is clear on height~one by definition.
	Under the assumption that the claim holds for height~$h-1$,
	$v$'s preceding (resp.\ succeeding) nodes have at least~$2\lcontext$ (resp.\ $2\rcontext$) children in total, where at least the $\lcontext$ rightmost nodes (resp.\ $\rcontext$ leftmost nodes) are surrounded by the assumption. Hence, $v$ is surrounded.
\end{proof}
}{%
	\Bild[width=0.9\textwidth]{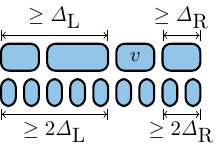}
}%
\vspace{0.5em}

The above example contrasting blocks and nodes reveals that the property for surrounded blocks as shown on the right side of \cref{figSurroundedBlock} 
cannot be transferred to surrounded nodes directly,
since a surrounded node depends not only on its local surrounding, but also on the nodes on which it its built.
Despite this discovery, we show that surrounded nodes can help us to create rules that are similar to \cref{lemAlphabetReduction}.

\subsection{Fragile and Stable Nodes in ESP Trees}\label{secChangingNodes}

\begin{figure}[t]
	\centerline{%
		\Bild[scale=1.0]{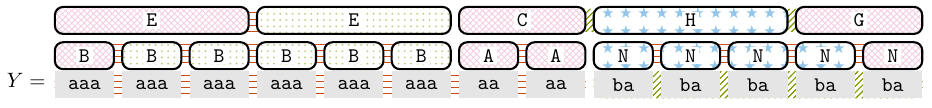}
}

\caption{\etInst[\textY] of \cref{figFragileBadSurrounding} with fragile, semi-stable and stable nodes highlighted.
	The fragile nodes are cross-hatched, the semi-stable nodes are dotted, and the stable nodes have stars attached. 
	The leftmost nodes of the tree change their names when prepending a \Char{b}.
	When prepending \Char{a}'s, we observe that the children of the node with name \protect\ESPname{$a^4$} change.
	Assuming that $\Sigma = \menge{\Char{a}, \Char{b}}$ (and hence $\abs{\Sigma} = 2$), only the rightmost node of the meta-block containing nodes with name \protect\ESPname{$ba$} is fragile.
	} 
\label{figFragileSemistableStable}
\end{figure}

We now analyze which nodes of $\etInst[\textY]$ are still present in $\etInst[\textX\textY\textZ]$ for all strings~$\textX$ and~$\textZ$.
A node $\espnode{\textY}_h[j]$ in $\etInst[\textY]$ at a height~$h$
is said to be \intWort{stable} iff, for \emph{all} strings $\textX$ and $\textZ$, 
there exists a node $\espnode{\textX\textY\textZ}_h[k]$ in $\etInst[\textX\textY\textZ]$
with 
the same name as $\espnode{\textY}_h[j]$
and
$\abs{\textX} + \sum_{i=1}^{j-1} \abs{ \generated{\espnode{\textY}_h[i]}} = \sum_{i=1}^{k-1} \abs{\generated{\espnode{\textX\textY\textZ}_h[i]}}$.
We also consider repeating nodes that are present with slight shifts;
a non-stable repeating node $\espnode{\textY}_h[j]$ in $\etInst[\textY]$
is said to be \intWort{semi-stable} iff, for \emph{all} strings $\textX$ and $\textZ$, 
there exists a node $\espnode{\textX\textY\textZ}_h[k]$ in $\etInst[\textX\textY\textZ]$
with 
the same name as $\espnode{\textY}_h[j]$
and
$
\sum_{i=1}^{k-1} \abs{\generated{\espnode{\textX\textY\textZ}_h[i]}} - \abs{\textS}
<
\abs{\textX} + \sum_{i=1}^{j-1} \abs{ \generated{\espnode{\textY}_h[i]}} <
\sum_{i=1}^{k-1} \abs{\generated{\espnode{\textX\textY\textZ}_h[i]}} + \abs{\textS}$,
where $\textS = \generated{\espnode{\textY}_h[j]} = \generated{\espnode{\textX\textY\textZ}_h[k]}$.

Nodes that are neither stable nor semi-stable are called \intWort{fragile}.
By definition, the children of the \SemiOrStable{} nodes (resp.\ fragile nodes) are also \SemiOrStable{} (resp.\ fragile).
\Cref{figFragileSemistableStable} shows an example, where all three types of nodes are highlighted.
The rest of this section studies how many fragile nodes exist in $\etInst[\textY]$.

As a warm-up, we first restrict the ESP tree construction on strings that are square-free.
A string $\textY$ is \intWort{square-free} iff there is no substring of~$\textY$ occurring consecutively twice.
Since a name of the ESP tree is determined by its generating substring, \etInst[\textY] cannot 
contain two consecutive occurrences of the same name on any height.
We conclude that \etInst[\textY] has no repeating nodes, i.e., it consists only of \Type{2} nodes.
When studying the stability of \Type{2} nodes, the following lemma is especially useful:

\begin{lemma}[{\cite[Lemma 8]{Cormode2007sed}}] \label{lemTypeTwoStable}
A \Type{2} node is stable if (a) it is surrounded and (b) its local surrounding does not contain a fragile node.
\end{lemma}
With \cref{lemTypeTwoStable} we immediately obtain:
\begin{lemma}\label{lemSquareFree}
	Given a square-free string~$\textY$, a fragile node of \etInst[\textY] is a non-surrounded node.
\end{lemma}
\begin{proof}
	According to \cref{lemTypeTwoStable}, we can bound the number of fragile nodes by the number of those nodes that do not satisfy the conditions in \cref{lemTypeTwoStable}.
Since \etInst[\textY] only contains \Type{2} nodes, 
we can show that a fragile node is non-surrounded inductively for all heights of the ESP tree:
Since leaves do not contain any nodes in their subtrees, surrounded leaves are stable due to \cref{lemAlphabetReduction}.
Therefore, the claim holds for $h=1$.
By definition, a node~$v$ on height~$h$ is surrounded
if its local surrounding~$S$ on height~$h-1$ is surrounded.
Given that the claim holds for~$h-1$,
a node in $S$ can only be fragile if it is not surrounded. This concludes that~$v$ can be fragile only if it is not surrounded.
\end{proof}
Combining \cref{lemSquareFree} with \cref{lemNumberSurroundedNodes} yields:
\begin{corollary}\label{corSquareFreeNumberFragileNodes}
	The number of fragile nodes of an ESP tree built on a square-free string of length~$n$ is \Oh{\lg^*n \lg n}.
	On each height, it contains \Oh{\lg^* n} fragile nodes.
\end{corollary}

\JO[%
In \cref{secESPLowerBound}, we show that \cref{corSquareFreeNumberFragileNodes} cannot be generalized for arbitrary strings.
There we show that the ESP technique changes \Om{\lg^2 n} nodes when changing a single character of a specific example string.
]{%
In the following we present a lower and an upper bound on the number of fragile nodes.
First, we show that the ESP technique can change \Om{\lg^2 n} nodes when changing a single character of the input string.
The idea is to give an example that contains a large number of \Type{M} meta-blocks in a specific constellation.
Remembering how the ESP technique parses its input, a remaining single symbol neighbored by two repeating meta-blocks is fused with one of them to form a \Type{M} meta-block.
Regardless of whether we favor to fuse a remaining symbol with either its preceding or succeeding (cf.~\ref{stipulationM}) repeating meta-block to form a \Type{M} meta-block,
for each of the two tie breaking rules, we propose an example string of length at most~$n$ whose ESP tree has \Om{\lg^2 n} fragile nodes.
These examples contradict Lemma~9 in~\cite{Cormode2007sed}, where it is claimed that there are \Oh{\lg^* n \lg n} fragile nodes in the ESP tree of a text of length~$n$.
}
\JJ[\section{A Lower Bound on the Number of Fragile ESP Tree Nodes}\label{secESPLowerBound}
Here, we present two examples reveiling that the ESP technique changes \Om{\lg^2 n} nodes when changing a single character.
Each example contains a large number of \Type{M} meta-blocks in a specific constellation.
Remembering how the ESP technique parses its input, a remaining single symbol neighbored by two repeating meta-blocks is fused with one of them to form a \Type{M} meta-block.
Regardless of whether we favor to fuse a remaining symbol with either its preceding or succeeding (cf.~\ref{stipulationM}) repeating meta-block to form a \Type{M} meta-block,
for each of the two tie breaking rules, we give an example string of length at most~$n$ whose ESP tree has \Om{\lg^2 n} fragile nodes.
These examples contradict Lemma~9 in~\cite{Cormode2007sed}, where it is claimed that there are \Oh{\lg^* n \lg n} fragile nodes in the ESP tree of a text of length~$n$.
]{\JO[\subsection{Fusing with the Preceding Repeating Meta-Block}]{}
\begin{textminipage}{0.45\linewidth}
	\JO{\block{Fusing with the preceding repeating meta-block}}
Consider a \Type{1} meta-block~$\mu$ whose rightmost node is fragile.
If the leftmost node of a repeating meta-block~$\nu$ is built on $\mu$'s rightmost node, then the rightmost node of~$\nu$ can also be fragile.
\end{textminipage}
\begin{minipage}{0.5\linewidth}
	\Bild{hanreiAstairs}
\end{minipage}

Having this idea in mind, we build an example consisting of a chain of repeating meta-blocks, where the leftmost node of a repeating meta-block is built on the fragile rightmost node of a meta-block of one depth below (\C[shaded]{filled with red color} in the right picture).
The main idea is the following:
Each meta-block of this chain can be of arbitrary (but sufficiently long) length. 
Keeping in mind that changing the name of a node means that the names of its ancestors also have to change, we can create an example string whose ESP tree contains fragile nodes appearing on each height at arbitrary positions:

\begin{example}\label{exCounterA}
	Let $\Char{a}, \Char{b}$ and $\Char{c} \in \Sigma$ be three different characters.
The text $\textY := (X_0)^{3^\V{k}} (X_1)^{3^{\V{k}-1}} (X_2)^{3^{\V{k}-2}} \cdots (X_{\V{k}-1})^3$ 
with
$\V{k} := \gauss{\log_3 (n / \log_3 n)}$
has a length at most~$n$, and
its ESP tree 
has \Om{\lg^2 n} fragile nodes, where

\begin{minipage}[t]{.75\textwidth}
\raggedright
\[
X_0 := {\tt a}, \text{~and~}
X_{i} :=  
\begin{cases}
	X_{i-1}^2 {{\tt b}^3}^{i-1} & \text{if~} i \text{~is odd}, \\
	X_{i-1}^2 {{\tt c}^3}^{i-1} & \text{if~} i \text{~is even},
\end{cases}
\text{~for~} i = 1,\ldots,\V{k}.
\]
\end{minipage}%
\begin{minipage}[t]{.2\textwidth}
	\scriptsize
	$X_0 = {\tt a}$, \\
	$X_1 = {\tt aab}$,\\
	$X_2 = {\tt aabaab} {\tt c}^3$,\\
	$X_3 = X_2^2 {\tt b}^9$,\\
	$\phantom{X_4 }~\vdots$
\end{minipage}
\end{example}

To show that the claim in the example is correct, we insert a lemma showing the associativity of $\esp$ on a special class of strings:

\begin{lemma}\label{lemESPassociativity}
	Contrary to~\ref{stipulationM}, assume that we favor fusing a remaining character with its preceding meta-block to form a \Type{M} meta-block.
	Given a height~$h$ and two strings~$\textX, \textY$ that are either empty or have a length of at least~$2 \cdot 3^{h-1}$, 
	$\espExp{h}{\textX \Char{b}^{3^i} \textY} = \espExp{h}{\textX} \espExp{h}{\Char{b}^{3^i}} \espExp{h}{\textY}$ 
	if $i \ge h$, 
	$\Char{b}$ is neither a suffix of~$\textX$ nor a prefix of~$\textY$,
	and there is no prefix of~$\espExp{j}{\textY}$ of the form ${\tt c}{\tt d}^k$ for some characters~${\tt c, d} \in \Sigma_j$ with ${\tt c} \neq {\tt d}$, and integers $k,j$ with $k \ge 2$ and $0 \le j \le h-1$.
\end{lemma}
\begin{proof}
	The additional requirement for~$\textY$ is to ensure that the leftmost block of~$\espExp{j}{\textY}$ is not a non-repetitive \Type{M} block 
	that has been fused to its succeeding meta-block, only because it has no preceding meta-block.
	Regardless of which characters are prepended to~$\espExp{j-1}{\textY}$, the first character of such a block would form with its preceding 
	characters a new block.

	For~$h=1$, $\esp$ divides the string ${\textX \Char{b}^{3^i} \textY}$ into meta-blocks such that there is one \Type{1} meta-block~$\mu$
	that exactly contains the substring~$\Char{b}^{3^i}$. 
	That is because of the following:
	If $\textX$ (resp.\ $\textY$) is not the empty string, then it contains at least two characters.
	Since we favor fusing with the preceding meta-block, there is no chance that characters of $\textX$ can enter~$\mu$.
	Assume that~$\textY$ is not the empty string.
	Since the first block of $\esp[\textY]$ is neither a non-repetitive \Type{M} block nor a block starting with~\Char{b},
	it is not possible that characters of this block can enter~$\mu$.
	
Under the assumption that the claim holds for a given~$h-1 \ge 0$, we have
	\[ 
		\espExp{h}{\textX \Char{b}^{3^{i+1}} \textY} = \esp[\espExp{h-1}{\textX \Char{b}^{3^{i+1}} \textY}] = \esp[\espExp{h-1}{\textX} \espExp{h-1}{\Char{b}^{3^{i+1}}} \espExp{h-1}{\textY}].
\]
	The strings $\espExp{h}{\textX}$ and $\espExp{h}{\textY}$ are either empty or contain at least two characters. 
Since $i \ge h$, $\espExp{h-1}{\Char{b}^{3^i}}$ is the repetition of the same character. 
This repetition has a length of at least three such that we can apply the shown associativity for $h=1$ to show the claim.
\end{proof}

\begin{proof}[Proof of \cref{exCounterA}]
	We start with determining the length of~$\textY$.
	Since $\abs{X_0} = 3^0$, under the assumption that $\abs{X_i} = 3^i$, we obtain that $\abs{X_{i+1}} = 2 \abs{X_i} + 3^{i} = 3^{i+1}$.
	Therefore, $\abs{X_i^{3^{\V{k}-i}}} = 3^\V{k}$ for all $i=0,\ldots,\V{k}-1$.
	We conclude that the length of~$\textY$ is at most~$n$, since $\abs{\textY} = \V{k} 3^\V{k} \le n \log_3(n/\log_3 n) / \log_3 n \le n$.
	
	We now show that each substring~$X_i$ of~$\textY$ is the generated substring of a node~$x_i$ of~$\etInst[\textY]$ on height~$i$
	whose subtree is equal to the perfect ternary subtree~$T_i := \etInst[X_i]$, for $i=1,\ldots,\V{k}-1$.
	This is true for $i = 1,2,3$, as can be seen in \cref{figHanreiAEx}.
	For the general case, we adapt the associativity shown for $\esp$ in \cref{lemESPassociativity} to the string~$X_i$:
	
	\begin{subclaim}
		For every $i$ with $0 \le i \le \V{k}-2$ it holds that
		\begin{enumerate}[(I)]
			\item $\abs{\espExp{i+1}{X_{i+1}}} = 1$, \label{itHanreiAPropOne}
	\item $\espExp{\V{h}}{X_{i+1}} = \espExp{\V{h}}{X_i X_i {\tt d}_i^{3^i}} = 
		\espExp{\V{h}}{X_i X_i} \espExp{\V{h}}{{\tt d}_i^{3^i}} =
		\espExp{\V{h}}{X_i} \espExp{\V{h}}{X_i} \espExp{\V{h}}{{\tt d}_i^{3^i}}$, and \label{itHanreiAPropAssoc}
	\item $\espExp{\V{h}}{X_{i+1}}$ starts with a repetition of a character, \label{itHanreiAPropRep}
		\end{enumerate}
		for every~$\V{h}$ with $0 \le \V{h} \le i$ , where ${\tt d}_i$ is a character with ${\tt d}_i = {\tt b}$ if $i$ is even, otherwise ${\tt d}_i = {\tt c}$. 
	\end{subclaim}

	\begin{subproof}
	For $i=0$ we have
		\begin{enumerate}[(I)]
			\item $\abs{\espExp{1}{X_{1}}} = \abs{\esp[ {\tt aab} ]} = 1$ ({\tt aab} is put in a \Type{M} meta-block having exactly one block),
			\item $\espExp{0}{X_1} = X_1$, and
			\item $X_1 = \Char{aab}$ starts with a repetition of the character~\Char{a}.
		\end{enumerate}

		Under the assumption that the claim holds for an integer~$i$, we conclude that it holds for~$i+1$ due to
			\begin{align*}
				{\espExp{\V{h}}{X_{i+2}}}
		&= {\espExp{\V{h}}{X_{i+1} X_{i+1} {\tt d}_{i+1}^{3^{i+1}}}} \\
		&= {\espExp{\V{h}}{X_i X_i {\tt d}_i^{3^{i}} X_i X_i {\tt d}_i^{3^{i}} {\tt d}_{i+1}^{3^{i+1}}}} \\
				\phantom{a}^{(\text{\cref{lemESPassociativity}},{\tt d}_i \neq {\tt d}_{i+1} )}  
				&= {\espExp{\V{h}}{X_i X_i {\tt d}_i^{3^{i}} X_i X_i {\tt d}_i^{3^{i}}}} {\espExp{\V{h}}{{\tt d}_{i+1}^{3^{i+1}}}} \\
				\phantom{a}^{\text{(\cref{lemESPassociativity},\ref{itHanreiAPropOne} or~\ref{itHanreiAPropRep})}} 
				&= {\espExp{\V{h}}{X_i X_i}} {\espExp{\V{h}}{{\tt d}_i^{3^{i}}}} {\espExp{\V{h}}{X_i X_i}} {\espExp{\V{h}}{{\tt d}_i^{3^{i}}}} {\espExp{\V{h}}{{\tt d}_{i+1}^{3^{i+1}}}} \\
				\phantom{a}^{\text{(\cref{lemESPassociativity},\ref{itHanreiAPropOne} or~\ref{itHanreiAPropRep})}} 
				&= {\espExp{\V{h}}{X_i X_i {\tt d}_i^{3^{i}}}} {\espExp{\V{h}}{X_i X_i {\tt d}_i^{3^{i}}}} {\espExp{\V{h}}{{\tt d}_{i+1}^{3^{i+1}}}} \\
			 &= {\espExp{\V{h}}{X_{i+1}}} {\espExp{\V{h}}{X_{i+1}}} {\espExp{\V{h}}{{\tt d}_{i+1}^{3^{i+1}}}} \\
			\end{align*}
		for $1 \le \V{h} \le i$.
		The conditions of \cref{lemESPassociativity} hold because 
		${\tt d}_i$ is neither a prefix nor a suffix of $X_i$, ${\tt d}_i \neq {\tt d}_{i+1}$,
			$\abs{X_i X_i} = 2 \cdot 3^\V{i}$, 
			and ${\espExp{\V{h}}{X_i X_i}}$ starts with a repetition of a character due to
			\[
		\begin{cases}
			\text{\ref{itHanreiAPropRep}} & \text{for~} \V{h} < i \text{, or due to} \\
			{\espExp{i}{X_i X_i }} =^{\ref{itHanreiAPropAssoc}} {\espExp{i}{X_i}} {\espExp{i}{X_i}} \text{~and~\ref{itHanreiAPropOne}} & \text{for~} \V{h} = i.
		\end{cases}	
	\]
			For $\V{h} = i+1$ we use that~\ref{itHanreiAPropOne} holds for $X_i$, $\abs{\espExp{i}{{\tt d}_{i}^{3^{i}}}} = 1$, and
			{\espExp{i}{{\tt d}_{i+1}^{3^{i+1}}}} is a repetition of length~$3$ of the same character,
			to obtain
			\begin{align*}
				{\espExp{i+1}{X_{i+2}}} 
				&= {\esp[{\espExp{i}{X_{i+2}} }]} \\
				&= {\esp[{\espExp{i}{X_i X_i}} {\espExp{i}{{\tt d}_i^{3^{i}}}} {\espExp{i}{X_i X_i}} {\espExp{i}{{\tt d}_i^{3^{i}}}} {\espExp{i}{{\tt d}_{i+1}^{3^{i+1}}}} ]} \\
				\phantom{a}^{(\text{\cref{lemESPassociativity}})}
				&= {\esp[{\espExp{i}{X_i X_i}} {\espExp{i}{{\tt d}_i^{3^{i}}}} {\espExp{i}{X_i X_i}} {\espExp{i}{{\tt d}_i^{3^{i}}}} ]} {\esp[ {\espExp{i}{{\tt d}_{i+1}^{3^{i+1}}}} ]} \\
				\phantom{a}^{(\text{evaluate and reformulate})} 
				&= {\esp[{\espExp{i}{X_i X_i}} {\espExp{i}{{\tt d}_i^{3^{i}}}}]} {\esp[{\espExp{i}{X_i X_i}} {\espExp{i}{{\tt d}_i^{3^{i}}}}]} {\esp[{\espExp{i}{{\tt d}_{i+1}^{3^{i+1}}}} ]},
			\end{align*}
			where we used that another application of \esp{} puts ${\espExp{i}{X_i X_i}} {\espExp{i}{{\tt d}_i^{3^{i}}}}$ into a single \Type{M} meta-block of length three, 
			and that $\Char{d}_i$ is neither a prefix nor a suffix of~$X_i$.
			This concludes~\ref{itHanreiAPropAssoc}. A consequence is~\ref{itHanreiAPropRep}:
			For $\V{h} \le i$ we have
				${\espExp{\V{h}}{X_{i+2}}} = {\espExp{\V{h}}{X_{i+1}}} {\espExp{\V{h}}{X_{i+1}}} {\espExp{\V{h}}{{\tt d}_{i+1}^{3^{i+1}}}}$,
				and ${\espExp{\V{h}}{X_{i+1}}}$ starts with a repetition of a character according to our assumption.
				For $\V{h} = i+1$ we have 
				${\espExp{i+1}{X_{i+2}}} = \esp[ \espExp{i}{X_i} \espExp{i}{X_i} \espExp{i}{d_i^{3^i}} \espExp{i}{X_i} \espExp{i}{X_i} \espExp{i}{d_i^{3^i}} \espExp{i}{{\tt d}_{i+1}^{3^{i+1}}} ]$.
				Due to~\ref{itHanreiAPropOne}, $\abs{\espExp{i}{X_i}} = \abs{\espExp{i}{{\tt d}_i^{3^i}}} = 1$; hence the last application of $\esp$ creates three blocks, where each of the first two represents the string~$\espExp{i}{X_i} \espExp{i}{X_i} \espExp{i}{d_i^{3^i}}$ of length three.
				Another application of $\esp$ yields~\ref{itHanreiAPropOne}.
	\end{subproof}

		Let $b_i$ and $c_i$ denote the names of the roots of \etInst[{\tt b}^{3^i}] and of \etInst[{\tt c}^{3^i}], respectively.
		Set $d_i := b_i$ if $i$ is even, otherwise $d_i := c_i$.
		Then $\espnode{X_{i+1}}_{i+1}  = x_{i+1}$ due to Sub-Claim~\ref{itHanreiAPropOne}, and $\espnode{X_{i+1}}_{i} = x_{i} x_{i} d_{i}$ due to Sub-Claim~\ref{itHanreiAPropAssoc}.
		Consequently, 
		\begin{equation}\label{eqESPhanreiXi}
			\esp[ (\espnode{X_{i+1}}_{i})^{3^{\V{k}-i-1}} ] = \esp[ (x_{i} x_{i} d_{i})^{3^{\V{k}-i-1}} ] = (\esp[x_{i} x_{i} d_{i}])^{3^{\V{k}-i-1}} = x_{i+1}^{3^{\V{k}-i-1}}.
		\end{equation}
	\begin{minipage}{0.55\linewidth}
		This means that $\espnode{X_{i}}_{\V{h}}^{3^{\V{k}-i}} =\espnode{X_{i}^{3^{\V{k}-i}}}_{\V{h}}$ is a repetition of length~$3^{\V{k}-\V{h}}$ consisting of the same name, for every height~$\V{h} = i, \ldots, \V{k}$.
		We conclude that $T_i := \etInst[(X_i)^{3^{\V{k}-i}}]$ is a perfect ternary tree.
	\end{minipage}
	\begin{minipage}{0.4\linewidth}
		\Bild[width=\linewidth]{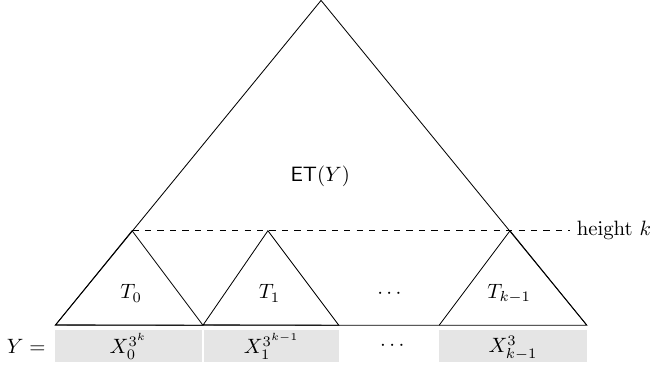}
	\end{minipage}

	Finally, we show that $\espExp{\V{h}}{\textY} = \espExp{\V{h}}{X_1^{3^{\V{k}}}} \cdots \espExp{\V{h}}{X_{\V{k}-1}^{3^1}}$
for each height~\V{h} with $1 \le \V{h} \le \V{k}$.
On the one hand, we have
\begin{align}\label{eqESPhanreiAssoc}
	\begin{split}
	{\espExp{\V{h}}{X_i^{3^{\V{k}-i}} X_{i+1}^{3^{\V{k}-i-1}}}}
	&= {\espExp{\V{h}}{X_i^{3^{\V{k}-i-1}} X_{i-1} X_{i-1} {\tt d}_{i-1}^{3^{i-1}} X_{i+1}^{3^{\V{k}-i-1}}}} \\
	\phantom{a}^{\text{\ref{itHanreiAPropRep}} \text{~with~} 0 \le i \le h-2}
	&= {\espExp{\V{h}}{X_i^{3^{\V{k}-i-1}} X_{i-1} X_{i-1}}} {\espExp{\V{h}}{{\tt d}_{i-1}^{3^{i-1}}}} {\espExp{\V{h}}{X_{i+1}^{3^{\V{k}-i-1}}}} \\
	&= {\espExp{\V{h}}{X_i^{3^{\V{k}-i-1}} X_{i-1} X_{i-1} {\tt d}_{i-1}^{3^{i-1}}}} {\espExp{\V{h}}{X_{i+1}^{3^{\V{k}-i-1}}}} \\
	&= {\espExp{\V{h}}{X_i^{3^{\V{k}-i}}}} {\espExp{\V{h}}{X_{i+1}^{3^{\V{k}-i-1}}}}
\end{split}
\end{align}
for $1 \le \V{h} \le i-1$ due to \cref{lemESPassociativity}.
On the other hand, we have
\begin{align}\label{eqESPhanreiAssocHigh}
	\begin{split}
	{\espExp{\V{h}}{X_i^{3^{\V{k}-i}} X_{i+1}^{3^{\V{k}-i-1}}}}
	&= {\espExp{\V{h-i+1}}{\espExp{\V{i-1}}{X_i^{3^{\V{k}-i}} X_{i+1}^{3^{\V{k}-i-1}}} }} \\
	\phantom{a}^{\text{\cref{eqESPhanreiAssoc}}} 	&= {\espExp{\V{h-i+1}}{\espExp{\V{i-1}}{X_i^{3^{\V{k}-i}}} \espExp{\V{i-1}}{X_{i+1}^{3^{\V{k}-i-1}}} }} \\
	\phantom{a}^{\text{\cref{eqESPhanreiXi}}}    &= {\espExp{\V{h-i}}{{\esp[{(x_{i-1} x_{i-1} d_{i-1})^{3^{\V{k}-i}} (x_{i-1} x_{i-1} d_{i-1} x_{i-1} x_{i-1} d_{i-1} \espnode{{\tt d}_i^{3^i}}_{i-1} )^{3^{\V{k}-i-1}} }]} }} \\
	\phantom{a}^{(\text{apply \esp{}})} &= {\espExp{\V{h-i}}{x_i^{3^{\V{k}-i}} (x_i x_i d_i)^{3^{\V{k}-i-1}} }} \\
	&= {\espExp{\V{h-i-1}}{{\esp[x_i^{3^{\V{k}-i}} x_i x_i d_i]} {\esp[(x_i x_i d_i)^{3^{\V{k}-i-2}}]} }} \\
	\phantom{a}^{(\text{evaluate and reformulate})} 
	&= {\espExp{\V{h-i-1}}{{\esp[x_i^{3^{\V{k}-i}}]} {\esp[(x_i x_i d_i)^{3^{\V{k}-i-1}}]} }} \\
	\phantom{a}^{\text{\cref{eqESPhanreiXi}}} &= {\espExp{\V{h-i-1}}{{\esp[x_{i}^{3^{\V{k}-i}}]} {\esp[x_{i+1}^{3^{\V{k}-i-1}}]} }} \\
	\phantom{a}^{\text{(\cref{lemESPassociativity})}} &= {\espExp{\V{h-i}}{x_{i}^{3^{\V{k}-i}}}} {\espExp{\V{h-i}}{x_{i+1}^{3^{\V{k}-i-1}}}}
\end{split}
\end{align}
for $i \le \V{h} \le \V{k}$.
It is easy to extend the pairwise associativity $X_i^{3^{\V{k}-i}} X_{i+1}^{3^{\V{k}-i-1}}$ for each~$i$ with $0 \le i \le k-2$ to $X_1^{3^{\V{k}}} \cdots X_{\V{k}-1}^{3^1}$.
This concludes that the root of~$T_i$ has the same name as the $i$-th leftmost node of $\etInst[\textY]$ on height~$\V{k}$.
\Cref{figMetablockLeft}(left) shows an excerpt of $T_{i}$ and $T_{i+1}$.
The crucial step in \cref{eqESPhanreiAssocHigh} is the re-formulation of the parsing
\begin{equation}\label{eqESPhanreiMetablockMu}
{\espExp{\V{h-i-1}}{%
	\underbrace{\esp[x_i^{3^{\V{k}-i}} x_i x_i d_i]}_{\text{belongs to~} T_i} 
\underbrace{\esp[(x_i x_i d_i)^{3^{\V{k}-i-2}}]}_{\text{belongs to~} T_{i+1}} }} \\
= {\espExp{\V{h-i-1}}{{\esp[x_i^{3^{\V{k}-i}}]} \underbrace{\esp[(x_i x_i d_i)^{3^{\V{k}-i-1}}]}_{=: \mu_{i+1}} }}
\end{equation}
showing that there is a \Type{1} meta-block~$\mu_{i+1}$ covering all nodes of~$T_{i+1}$ and the rightmost node of~$T_{i}$, on height~$i+1$.
This meta-block is a repetition of the character~$\esp[x_i x_i d_i] = x_{i+1} \in \Sigma_{h+1}$.

Given that $\mu_0$ is the first \Type{1} meta-block of~$\esp[\textY]$ (covering the prefix $X_0^{3^h+2}$),
we now examine what happens with~$\mu_i$ for each~$i$ with $0 \le i \le h-1$ when removing the first {\tt a} from $\textY$. 
Let us call the shortened string~$\textY'$, i.e., $\textY = {\tt a} \textY'$.
On removing the first {\tt a} from $\textY$, we claim that the meta-block~$\mu_i$ contains one character $x_i$ less, 
for every $i$ with $0 \le i \le h-1$ (cf.~\cref{figMetablockLeft}
showing the difference between $\espnode{\textY}_{i}$ and $\espnode{\textY'}_{i}$ on height~$i$ with $0 \le i \le \V{k}-1$):
For $\mu_0$, this is trivial.
For an $i \ge 0$, 
focus on the substring $X_i^{3^{\V{k}-i}} X_{i+1}^{3^{k-i-1}}$ of~$\textY$:
We have
	\begin{align*}
		{\esp[ \espnode{X_i^{3^{\V{k}-i}} X_{i+1}^{3^{k-i-1}}}_i ]} 
		&= {\esp[ x_i^{3^{\V{k}-i}} (x_i x_i d_i)^{3^{\V{k}-i-1}} ]} 
		= {\esp[\underbrace{x_i^{3^{\V{k}-i}}}_{\text{suffix of~} \mu_i}]} \underbrace{\esp[ (x_i x_i d_i)^{3^{\V{k}-i-1}} ]}_{= \mu_{i+1}} \\
		&= {\esp[ x_i^{3^{\V{k}-i}}]} x_{i+1}^{3^{\V{k}-i-1}}
	\end{align*}
due to \cref{eqESPhanreiAssocHigh}.
Under the assumption that removing the first character~\Char{a} from~$\textY$ causes $\mu_i$ to shrink by one character~$x_i \in \Sigma_i$, 
we get
	\begin{align*}
		{\esp[x_i^{3^{\V{k}-i}-1} (x_i x_i d_i)^{3^{\V{k}-i-1}} ]}
		&= {\esp[x_i^{3^{\V{k}-i}} x_i d_i]} {\esp[ (x_i x_i d_i)^{3^{\V{k}-i-1}-1} ]}  \\
		&= {\esp[x_i^{3^{\V{k}-i}} x_i d_i]} x_{i+1}^{3^{\V{k}-i-1}-1} \\
		&\neq {\esp[ x_i^{3^{\V{k}-i-1}}]} x_{i+1}^{3^{\V{k}-i-1}}.
	\end{align*}

We observe that the length of $\mu_i$ is decremented by one, causing
the name of its rightmost block to change, which is the leftmost node of~$T_{i+1}$ on height~$i+1$, and the first character of~$\mu_{i+1}$.
Due to the tie breaking rule, this block gets fused with its preceding meta-block at height~$i+1$, decrementing the length of its succeeding meta-block~$\mu_{i+1}$ by one (and hence, this process repeats for all~$i = 0,\ldots,\V{k}-2$).
This means that the leftmost node on height $i$ of $T_i$ changes, for $1 \le i \le \V{k}-1$. Each of these nodes receives a new name such that 
it is fused with its preceding \Type{1} meta-block to form a \Type{M} meta-block. Since changing a node on height~$i$ changes all its ancestors, 
		at least $\V{k} - i$ nodes are changed in $T_i$.
In total, at least $\V{k} + (\V{k}-1) + (\V{k}-2) + \cdots + 2 = (\V{k}^2 + \V{k})/2 - 1$ nodes are changed.
Hence there is a lower bound of 
$\Om{\V{k}^2} = \Om{\log_3^2(n / \log_3 n)} = \Om{\lg^2 n}$ fragile nodes.
\end{proof}

Note that the later introduced HSP technique (see \cref{secHSP}) with the same tie breaking rule also produces \Om{\lg^2 n} fragile nodes in this example.

\begin{figure}[t]
	\centering{%
		\Bild[scale=1.0]{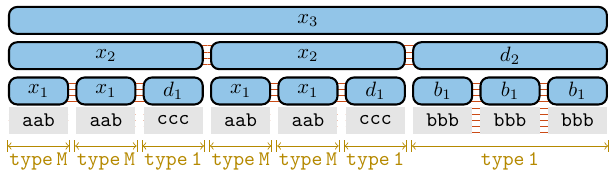}
	}
	\caption{\etInst[X_3] as defined in \cref{exCounterA}. The subtree of each node with name~$x_i$ is equal to \etInst[X_i].}
	\label{figHanreiAEx}
\end{figure}

\begin{figure}[t]
	\centering{%
		\Bild[width=0.45\textwidth,valign=t]{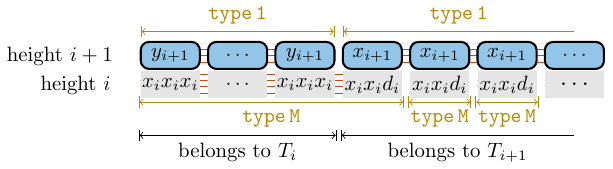}
		\hfill
	\Bild[width=0.45\textwidth,valign=t]{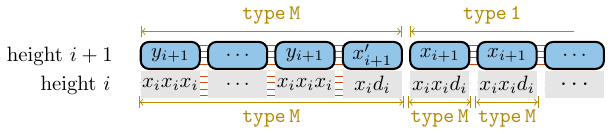}
	}
	\caption{Differences between $\etInst[\textY]$ (\emph{left}) and $\etInst[\textY']$ (\emph{right}) on the heights~$i$ and~$i+1$, where
		$\textY = {\tt a}^{3^\V{k}} \tuple{\tt a^{\text{2}}b}^{3^{\V{k}-1}} \tuple{ ({\tt a}^2 {\tt b})^2 {\tt c}^3}^{3^{\V{k}-2}} \cdots$ and $\textY = {\tt a} \textY'$
	(defined in \cref{exCounterA}). The names~$y_{i+1}$ and $x'_{i+1}$ are only used in this figure. 
	}
	\label{figMetablockLeft}
\end{figure}

\JO[\subsection{Fusing with the Succeeding Repeating Meta-Block}]{\block{Fusing with the succeeding repeating meta-block}}
The idea is similar to the previous example. In particular, we introduce a corollary of \cref{lemESPassociativity}:

\begin{corollary}\label{lemESPassociativityLeft}
	Given a height~$h$ and a string~$\textY$ that is either empty or has a length of at least~$2 \cdot 3^{h-1}$,
	$\espExp{h}{\textX \textY} = \espExp{h}{\textX} \espExp{h}{\textY}$
	if $\Char{a}$ is not a prefix of~$\textY$,
	where $\textX = {\tt b}^{3^i} {\tt a}^{3^{j}}$
	with $i+j \ge h$, and $\Char{a}, \Char{b} \in \Sigma$ with $\Char{a} \neq \Char{b}$.
\end{corollary}

In the following example, we build a text whose ESP tree has a specific \Type{M} meta-block on each height that we want to change.
Given a \Type{M} meta-block~$\mu$ that emerged from prepending a character to a \Type{1} meta-block,
we can create a new meta-block by prepending another character such that it precedes $\mu$ and absorbs $\mu$'s first character ($\mu$ then returns to be a \Type{1} meta-block).
We can arrange the \Type{M} meta-blocks such that prepending a character to the text changes a \Type{M} meta-block on each height:
\begin{example}\label{exCounterB}
	Let $\V{k} = \gauss{\log_3(n / \log_3 n)}$ be a natural number, and ${\tt a, b} \in \Sigma$.
Define $\textY := X_0 X_1 \cdots X_\V{k}$ with $X_i := {\tt b}^{3^i} {\tt a}^{3^\V{k} - 3^i}$, for $0 \le i \le \V{k}-1$.
Then $\abs{\textY} \le n$, and $\etInst[\textY]$ has \Om{\lg^2 n} fragile nodes.
\end{example}
\begin{proof}
	Given an integer~$i$ with $0 \le i \le \V{k}-1$, we have $\abs{X_i} = 3^\V{k}$ and $\abs{Y} = \V{k} 3^\V{k} \le n$.
	\Cref{lemESPassociativityLeft} yields
	$\espExp{h}{X_i} = \espExp{h}{ {\tt b}^{3^i}} \espExp{h}{{\tt a}^{3^k-3^i}}$
	for all heights~$h$ with~$0 \le h \le i$, since
	$3^k - 3^i \ge 3^k - 3^{k-1} = 2 \cdot 3^{k-1}$.
	Let $a_i := \espnode{{\tt a}^{3^\V{k}}}_i[1]$ and $b_i := \espnode{{\tt b}^{3^\V{k}}}_i[1]$ be the nodes on height~$i$ with $0 \le i \le \V{k}$ and $\generated{a_i} = {\tt a}^{3^i}$ or respectively $\generated{b_i} = {\tt b}^{3^i}$ ($a_0 := {\tt a}$, $b_0 := {\tt b}$),
	$\esp[ \espExp{h-1}{X_i} ]$ partitions its input~$\espExp{h-1}{X_i}$ into two meta-blocks: 
	a \Type{1} meta-block containing all $b_i$'s, and a subsequent \Type{1} meta-block containing all $a_i$'s.
	All blocks of these two meta-blocks contain three characters, since each meta-block has a length that is equal to a power of three.
For the upper heights we get
\begin{equation}\label{eqESPhanreiBHigherLevel}
		\espExp{h+i}{X_i} = \espExp{h}{\overbrace{\underbrace{\espExp{i}{{\tt b}^{3^i}}}_{= b_i} \underbrace{\espExp{i}{{\tt a}^{3^k-3^i}}}_{= a_i^{3^{k-i}-1}}}^{\abs{\cdot} = 3^{k-i}} } \text{~for~} 0 \le h+i \le \V{k-1}.
\end{equation}
	Hence, $\espExp{h+i}{X_i}$ consists of exactly one \Type{M} meta-block, which has the length~$3^{k-h-i}$,
	and each block contains three characters.
	We conclude that the tree $T_i := \etInst[X_i]$ is a perfect ternary tree, for $0 \le i \le \V{k}-1$.
	Since $\abs{\espExp{h}{X_i}} = 3^{k-h}$ for all $i,h$ with $0 \le i \le \V{k}-1$ and $0 \le h \le \V{k}$, with \cref{lemESPassociativityLeft} it is easy to see that 
	$\espExp{h}{\textY} = \espExp{h}{X_1 \cdots X_{\V{k}-1}} = \espExp{h}{X_1} \cdots \espExp{h}{X_{\V{k}-1}}$ for all $0 \le h \le \V{k}$.
	A conclusion is that $X_i$ is the generated substring of the $i$-th leftmost node \etInst[\textY] on height~$\V{k}$ whose name 
	is equal to the name of the root of~$T_i$, for $0 \le i \le \V{k}-1$.

	For the proof, we prepend an {\tt a} to $\textY$ and call the new string~$\textY'$, i.e., $\textY' = {\tt a} \textY$.
Our analysis of the difference between \etInst[\textY] and \etInst[\textY'] focuses on the unique meta-block at height~$i$ of~$T_i$:
From \cref{eqESPhanreiBHigherLevel} with $h=0$, we observe that there is a single meta-block~$\mu_i$ at height~$i$ of~$T_i$, and this meta-block is a \Type{M} meta-block (cf. \cref{figHanreiBEx}(right)).
Our claim is that prepending \Char{a} to~$\textY$ changes the blocks of the borders of every~$\mu_i$ ($0 \le i \le \V{k}-1$):
The prepended \Char{a} forms a \Type{2} meta-block with the first character of~$X_0$ 
by ``stealing'' the first character from $\mu_0$, and this character is a $b_0 = {\tt b}$.
Assume that~$\mu_i$ ($0 \le i \le \V{k}-1$) looses its first character (i.e., $b_i$).
By relinquishing this character, $\mu_i$ becomes a \Type{1} meta-block, consisting only of $a_i$'s.
The last two $a_i$'s contained in~$\mu$ are grouped into a block~$a'_{i+1}$ of length \emph{two},
where $a'_{i+1} := \espnode{a_i a_i}_{1}[1]$ is the name of the root node of \etInst[a_i a_i].
Every newly appearing node~$a'_{i+1}$ gets combined with its right-adjacent node~$b_{i+1}$ to form a new \Type{2} meta-block.
The used $b_{i+1}$ is stolen from $\mu_{i+1}$, and hence we observe an iterative process of stealing the first character~$b_{i+1}$ from~$\mu_{i+1}$
for each height~$i=0,\ldots,\V{k}-2$.
\Cref{figMetablockRight} visualizes this observation on the lowest two heights.
 
This can be inductively proven for each even integer~$i$ with $0 \le i \le h-2$.
By \cref{eqESPhanreiBHigherLevel}, we know that $\espnode{X_i}_i = b_i a_i^{3^{k-i}-1}$ and $\espnode{X_{i+1}}_i = b_i^3 a_i^{3^{k-i}-3}$.
Then
	\begin{align*}
			  {\esp[{\esp[{\espnode{X_i}_i \espnode{X_{i+1}}_i} ]} ]}
			  &= {\esp[{\esp[ b_i a_i^{3^{k-i}-1} b_i^3 a_i^{3^{k-i}-3}]} ]} \\
				\phantom{a}^{\text{(\cref{lemESPassociativityLeft})}} 
			  &= {\esp[{\esp[b_i a_i a_i] \esp[a_i^{3^{k-i}-3}] b_{i+1} a_{i+1}^{3^{k-i-1}-1}} ]} \\
			  &= {\esp[{\esp[b_i a_i a_i] a_{i+1}^{3^{k-i-1}-1} b_{i+1} a_{i+1}^{3^{k-i-1}-1}} ]} \\
				\phantom{a}^{\text{(\cref{lemESPassociativityLeft})}} 
			  &= {\esp[{\esp[b_i a_i a_i] a_{i+1}^{3^{k-i-1}-1}} ] \esp[ b_{i+1} a_{i+1}^{3^{k-i-1}-1} ]} \\
			  &= {\esp[{\esp[b_i a_i a_i] a_{i+1}^{3^{k-i-1}-1}} ] \esp[ b_{i+1} a_{i+1} a_{i+1} ] \esp[a_{i+1}^{3^{k-i-1}-3}] },
	\end{align*}
	and $\esp[a_{i+1}^{3^{k-i-1}-3}] = a_{i+2}^{3^{k-i-2}-1}$.
	Adding $a'_i$ (set $a_0' := {\tt a}$) to the string $\espnode{X_i}_i \espnode{X_{i+1}}_i$ yields
	\begin{align*}
			  {\esp[{\esp[{a'_i \espnode{X_i}_i \espnode{X_{i+1}}_i} ]} ]}
			  &= {\esp[{\esp[a'_i b_i a_i^{3^{k-i}-1} b_i^3 a_i^{3^{k-i}-3}]} ]} \\
				\phantom{a}^{\text{(\cref{lemESPassociativityLeft})}} 
			  &= {\esp[{\esp[a'_i b_i] \esp[a_i^{3^{k-i}-1}] b_{i+1} a_{i+1}^{3^{k-i-1}-1}} ]} \\
			  &= {\esp[{\esp[a'_i b_i] \esp[a_i^{3^{k-i}-3} a_i a_i] b_{i+1} a_{i+1}^{3^{k-i-1}-1}} ]} \\
			  &= {\esp[{\esp[a'_i b_i] a_{i+1}^{3^{k-i-1}-1} a'_{i+1} b_{i+1} a_{i+1}^{3^{k-i-1}-1}} ]} \\
				\phantom{a}^{\text{(\cref{lemESPassociativityLeft})}} 
				&= {\esp[{\esp[a'_i b_i] a_{i+1}^{3^{k-i-1}-1}} ] \esp[ a'_{i+1} b_{i+1} ] \esp[ a_{i+1}^{3^{k-i-1}-3} a_{i+1} a_{i+1} ]} \\
		  &= {\esp[{\esp[a'_i b_i] a_{i+1}^{3^{k-i-1}-1}} ] \esp[ a'_{i+1} b_{i+1} ] a_{i+2}^{3^{k-i-2}-1} a'_{i+2} }, 
	\end{align*}
	and $a'_{i+2}$ carries on to the nodes ${\espnode{X_{i+2}}_{i+2} \espnode{X_{i+3}}_{i+2}}$ on height~$i+2$ due to \cref{lemESPassociativityLeft}.

Overall, the leftmost and rightmost node on height~$i+1$ of $T_i$ changes, for $i = 0,\ldots,\V{k}-1$.
In total, \Om{\lg^2 \V{k}} nodes are changed. 
\end{proof}

\begin{figure}[t]
	\centering{%
		$\vtop{\vspace{0pt}\hbox{\Bild[scale=1.0]{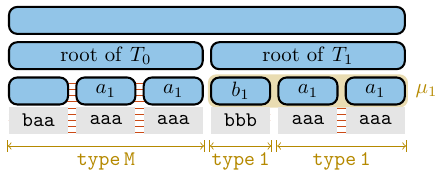}}}
		\hfill
		\vtop{\vspace{0pt}\hbox{\Bild[scale=1.0]{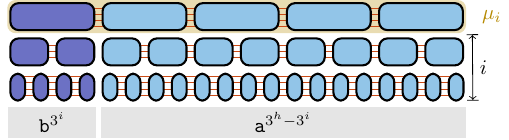}}}$
	}
	\caption{\etInst[\textY] of the example string~$\textY$ defined in \cref{exCounterB} with~$\V{k}=2$ (\emph{left}) and as a schematic illustration (\emph{right}) with the meta-block~$\mu_i$ on height~$i$ (due to space issues the number of nodes/children is incorrect).
	}
	\label{figHanreiBEx}
\end{figure}

\begin{figure}[t]
	\centering{%
		\Bild[width=0.9\textwidth,valign=t]{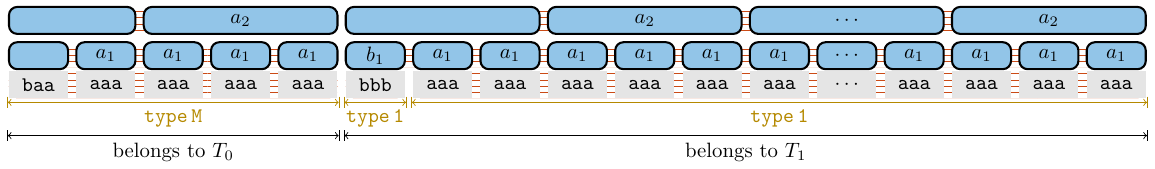}
	}\\
	    \vspace{2em}
	\centering{%
	\Bild[width=0.9\textwidth,valign=t]{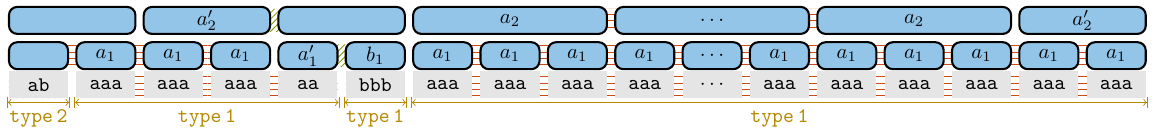}
	}
	\caption{Excerpt of the ESP trees $\etInst[\textY]$ (\emph{top}) and $\etInst[\textY']$ (\emph{bottom}), where
	$\textY = {\tt b}{\tt a}^{3^\V{k} -1} {\tt b}^3 {\tt a}^{3^\V{k} -3} {\tt b}^9 {\tt a}^{3^\V{k} -9} \cdots$ and $\textY = {\tt a} \textY'$
	(defined in \cref{exCounterB}). 
	Due to space issues we contracted $T_0$ to $\etInst[{\tt b a}^{14}]$.
	Note that right of the rightmost $a'_2$ (bottom figure, top right node) is the node $b_2$ (not shown in the figure due to space issues), and both nodes are combined into a \Type{2} meta-block.
}
	\label{figMetablockRight}
\end{figure}

}

\begin{figure}[t]
	\centerline{%
		\Bild[scale=1.0]{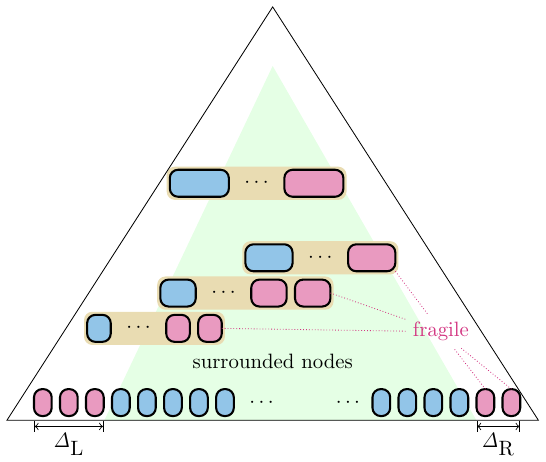}
}
\caption{Division of an ESP tree in surrounded and fragile nodes. The surrounded nodes form an inner cone. 
Neighboring fragile blocks can appear in the non-surrounded areas. 
On each height, the ESP tree can have a constant number of fragile surrounded nodes that do not have fragile nodes in their subtrees.
}
\label{figSurroundedCone}
\end{figure}

\begin{figure}[t]
	\centerline{%
			\Bild[scale=1.0]{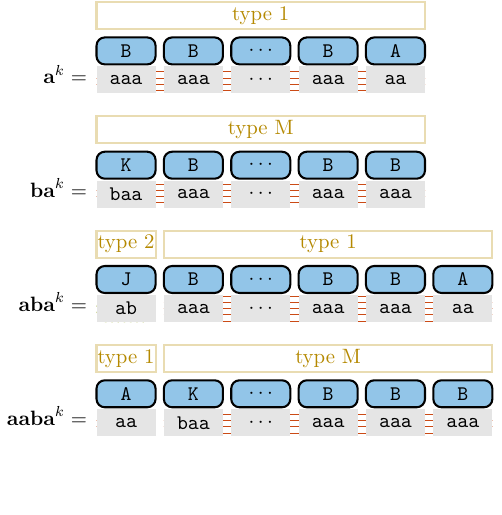}
	}%
	\caption{%
		Prepending the string \Char{aab} to the text $\Char{a}^k$ character by character.
		Each step is given as a row, in which we additionally computed the ESP of the current text.
		The last row shows an example, where a former \Type{1} meta-block changes to \Type{M}, 
		although it is right of a \Type{2} meta-block.
		Here, $k \mod 3 = 2$.
}
	\label{tableBadType3Beginning}
\end{figure}

\begin{figure}[t]
	\centerline{%
		\Bild[scale=1.0]{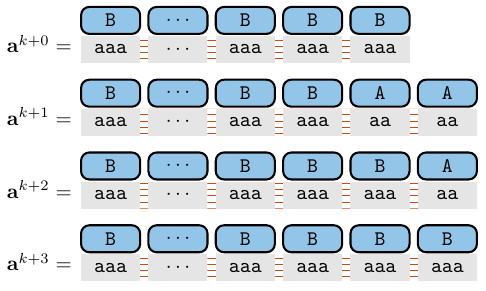}
	}%
	\caption{%
		Greedy blocking of a \Type{1} meta-block.
The greedy blocking is related to the Euclidean division by three.
The remainder $k \mod 3$ is determined by the number of symbols in the last two blocks (here, $k \mod 3 = 0$).
In this example, the ESP technique creates a single, repeating meta-block on each input.
	}
	\label{tableRepeatingAs}
\end{figure}

\block{A new upper bound}
With \JO[the examples in the appendix]{\cref{exCounterA,exCounterB}}, we conclude that the \Oh{\lg^* n \lg n}-bound on the number of fragile nodes for square-free strings (\cref{lemSquareFree}) does not hold for general strings.
To obtain a general upper bound (we stick again to~\ref{stipulationM}), %
we include the repeating meta-blocks in our study of fragile nodes. 
Fragile nodes can now be surrounded (trees of square-free strings do not have fragile surrounded nodes according to \cref{lemSquareFree}).
Remembering that a node is fragile if it has a fragile child, a consequence is that a fragile \Type{2} node is not necessarily non-surrounded (e.g., one of its children can be a fragile surrounded repeating node).
\Cref{figSurroundedCone} sketches the possible occurrences of fragile surrounded nodes.
A first result on a special case is given in the following lemma:

\begin{lemma}\label{lemTypeTwoFragile} %
	A surrounded node~$v$ is contained in the local surroundings of \Oh{\lg^* n \lg n} nodes.
	Given that all those nodes are of \Type{2}, a change of~$v$ causes \Oh{\lg^* n \lg n} name changes.
\end{lemma}
\begin{proof}
\def\windowpagestuff{\flushright\Bild[width=0.95\textwidth]{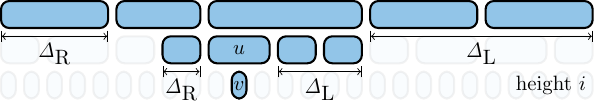}}
\opencutright
\vspace{1em}
\begin{cutout}{1}{0.6\textwidth}{10pt}{3}
	We follow~\cite[Proof of Lemma 9]{Cormode2007sed}:
	We count the number of nodes that contain~$v$ in its local surrounding.
	Given that~$v$ is a node on height~$i$ and $u$ is $v$'s parent, 
	then there are at most $\rcontext/2 \le \rcontext$ nodes preceding~$u$ and $\lcontext/2 \le \lcontext$ nodes succeeding~$u$ that have~$v$ in its local surrounding.
	We count one on height~$i$, and $(\lcontext+\rcontext+1)/2$ on height~$i+1$.
	Since the counted nodes on height~$i+1$ are consecutive, 
	there are at most~$(\lcontext+\rcontext+1)/2$ nodes that are all parents of the counted nodes on height~$i+1$.
	Consequently, there are at most~$(\lcontext+\rcontext+1)/2+\lcontext+\rcontext$ nodes on height~$i+2$ that have~$v$ in their local surroundings.
	Iterating over all heights gives an upper bound of $(\lcontext+\rcontext+1) \sum_{h=0}^{\lg n - i} 1/2^h \le 2(\lcontext+\rcontext+1)$ nodes on each height.
\end{cutout}
\vspace{-1.5em}
\end{proof}

\begin{figure}[t]
	\centerline{%
		\Bild[scale=1.0]{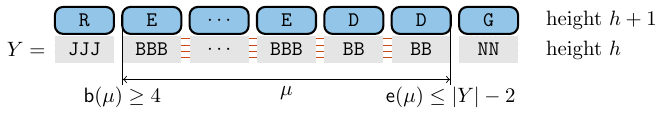}
}
\caption{Setting of \cref{lemRepeatingFragileLeft}. 
According to \cref{lemRepeatingFragileLeft}, a meta-block~$\mu$ in $\esp[\textY]$ of a string~$\textY$ cannot contain a surrounded fragile block if $\ibeg{\mu} \ge 4$ and $\iend{\mu} \le \abs{\textY} - 2$.}
\label{figSurroundedFragile}
\end{figure}

Second, we narrow down the fragile blocks in repeating meta-blocks.
The first block (cf.~\cref{tableBadType3Beginning}) and the two rightmost blocks (cf.~\cref{tableRepeatingAs}) of a repeating meta-block
can be fragile.
Due to the greedy parsing, all other blocks of a repeating meta-block are \SemiOrStable{}.
A repeating meta-block containing fragile \emph{surrounded} blocks needs to start very early or end within the last symbol, as can be seen by the following \lcnamecref{lemRepeatingFragileLeft}:
\begin{lemma}\label{lemRepeatingFragileLeft}
	A repeating meta-block~$\mu$ of $\esp(\textY)$ with $\ibeg{\mu} \ge 4$ and $\iend{\mu} \le \abs{\textY}-2$ cannot contain a fragile block.
\end{lemma}
\begin{proof}
	Since $\ibeg{\mu} \ge 4$, there are at least three symbols before~$\mu$ that are assigned to one or more other meta-blocks.
	When prepending symbols, those meta-blocks can change, absorbing the new symbols or giving the leftmost symbol away to form a \Type{2} meta-block.
	In neither case, they can affect the parsing of~$\mu$, since $\mu$ is parsed greedily.
	Similarly, the succeeding meta-blocks of~$\mu$ keep $\mu$'s blocks from changing when appending symbols.
See \cref{figSurroundedFragile} for a sketch.
\end{proof}

\begin{corollary}\label{lemRepeatingFragileSurrounded}
The edit sensitive parsing introduces at most two fragile surrounded blocks. 
These blocks are the two rightmost blocks of a repeating meta-block whose leftmost block is not surrounded.
\end{corollary}

\begin{lemma}\label{lemRepeatingFragileChild}
	Changing the symbol in a substring of $\espnode{\textY}_{h-1}$ on which a repeating node on height~$h$ is built 
	changes \Oh{1} names on height~$h$.
\end{lemma}
\begin{proof}
	Let $u$ be a repeating node on height~$h$.
	Since it is repeating, it is built on a substring~$\textX := \espnode{\textY}_{h-1}[\ibeg{\textX}..\iend{\textX}]$ 
	of a repeating meta-block~$\mu =  \espnode{\textY}_{h-1}[\ibeg{\mu}..\iend{\mu}]$ with $\dic(u) = \textX$.
	Now change a symbol in $\textX$, say $\espnode{\textY}_{h-1}[i_u]$ with $\ibeg{\textX} \le i_u \le \iend{\textX}$.
	This causes the name of $u$ to change.
	Additionally, it causes the meta-block $\mu$ to split into 
	a repeating meta-block~$\espnode{\textY}_{h-1}[\ibeg{\mu}..i_u-1]$ and a \Type{M} meta-block~$\espnode{\textY}_{h-1}[i_u..\iend{\mu}]$, 
	causing the names of the two rightmost nodes built on the new meta-blocks to change.
	Altogether, there are \Oh{1} name changes on height~$h$.
\end{proof}
An easy generalization of \cref{lemRepeatingFragileChild} is that
changing $k$ consecutive nodes on height~$h-1$ that are children of repeating nodes on height~$h$ changes \Oh{k} names on height~$h$.
With \cref{lemRepeatingFragileChild}, the following lemma translates the result of \cref{lemRepeatingFragileSurrounded} for blocks to nodes:

\begin{lemma}\label{lemBoundFragileESP}
	The ESP tree \etInst[\textY] of a string~$\textY$ of length~$n$ has \Oh{\lg^2 n \lg^*n} fragile nodes, and \Oh{h \lg^*n} fragile nodes on height~$h$.
\end{lemma}
\begin{proof}
	While computing $\espnode{\textY}_{h+1}$ from $\espnode{\textY}_h$, 
	the ESP technique introduces \Oh{1} fragile surrounded blocks according to \cref{lemRepeatingFragileSurrounded}.
	Each fragile surrounded block corresponds to a fragile surrounded node.

	Similar to the proof of \cref{lemSquareFree}, we count all surrounded nodes as fragile whose local surrounding contains a fragile node.
	\Cref{lemTypeTwoFragile} shows that each introduced fragile surrounded block makes \Oh{\lg^* n \lg n} nodes fragile.
	Although we considered only \Type{2} nodes in \cref{lemTypeTwoFragile}, we can generalize this result for all fragile nodes with \cref{lemRepeatingFragileChild}.
	
	To sum up, there are \Oh{h \lg^*n} fragile nodes on height~$h$.
	Because \etInst[\textX] has a height of at most $\lg n$,
	there are $\Oh{\lg^* n \sum_{h=1}^{\lg n} h} = \Oh{\lg^*n \lg^2 n}$
	fragile nodes in total.
\end{proof}

Showing that the number of fragile nodes is indeed larger than assumed makes ESP trees a more unfavorable data structure,
since fragile nodes are cumbersome when comparing strings with ESP trees as done in~\cite{Cormode2007sed}.
Fortunately, we can restore the claimed number of \Oh{\lg n \lg^* n} fragile nodes for a string of length~$n$ with a slight modification of the parsing,
as shown in the following section.

\begin{figure}[t]
	\centering{%
		\Bild{diffsurname}
	}
	\caption{Hierarchical stable parsing. The repeating meta-blocks are determined by the surnames.
	}
	\label{figDiffSurname}
\end{figure}

\begin{figure}[t]
	\centerline{%
		\Bild[scale=1.0]{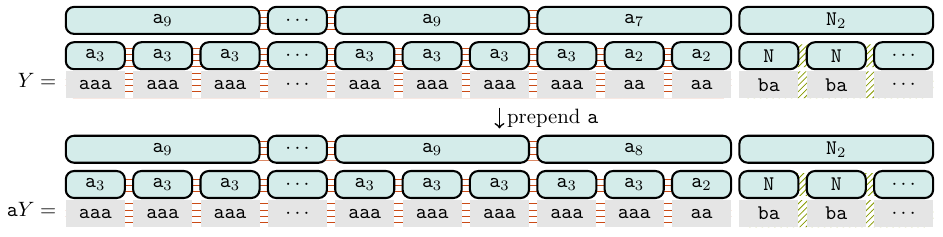}
}

\caption{Excerpt of \hInst[\textY] and \hInst[a \textY] (higher nodes omitted), where $\textY = {\tt a}^{k} {\tt (ba)}^{k'}$
	with $k = 18 + 9^i + 7$ for an integer~$i \ge 0$ and $k' \ge 2$ (cf. \cref{figFragileBadSurrounding}).
	The parsing of~$\textY$ creates a repeating meta-block consisting of ${\tt a}^k$, and a \Type{2} meta-block consisting of ${\tt (ba)}^2$.
		For $k \ge 2$ it is impossible 
		to modify the latter meta-block by prepending characters (bottom figure),
		since the parsing always groups adjacent nodes with the same surname into one repeating meta-block.
	} 
\label{figFragileStabled}
\end{figure}

\begin{figure}[t]
	\centering{%
 		\Bild[width=\textwidth]{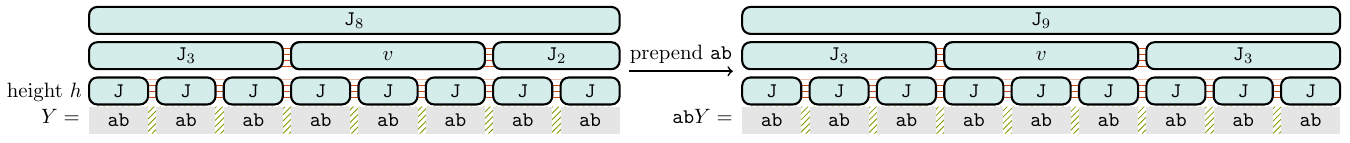}
	}
	\caption{Comparison of \hInst[\textY] and $\hInst[\Char{ab}\textY] = \hInst[\textY\Char{ab}]$, where~$\textY = (\Char{ab})^8$. 
		The node~$v$ with name~\protect\HSPname{$(ab)^3$} is semi-stable. Its generated substring shifts with a length of~$\abs{\generated{\protect\HSPname{$ab$}}} = 2$.
 }
	\label{figHSPsurname}
\end{figure}

\begin{figure}[t]
	\centerline{%
		\Bild[scale=0.6]{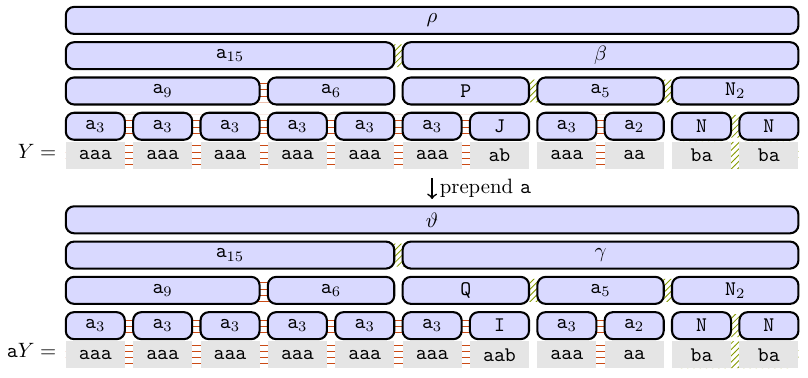}
		\Bild[scale=0.6]{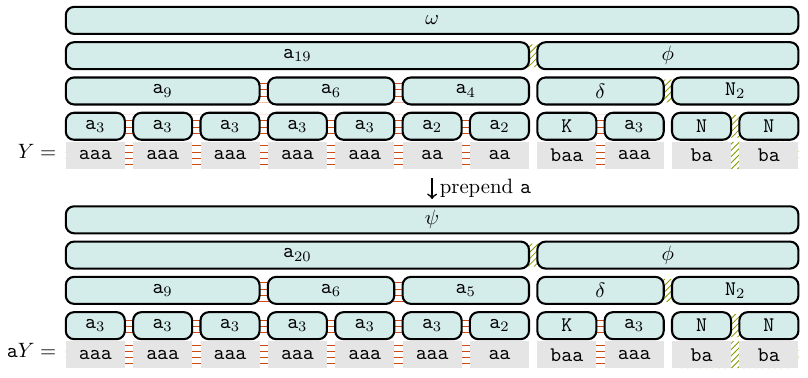}
}
\caption{Impact of the tie breaking rule (\ref{stipulationM}) on emerging \Type{M} nodes.
	A \Type{M} node is created by fusing a single symbol with its sibling meta-block.
	Remember that~\ref{stipulationM} prescribes to fuse the symbol with its \emph{right} meta-block.
	To see why this rule is advantageous, the HSP trees on the \emph{left} (resp.\ \emph{right}) use the tie breaking rule choosing the \emph{left} (resp.\ \emph{right}) meta-block.
	While on the \emph{right} side only the fragile nodes of the leftmost meta-blocks on each height differ after prepending {\tt a} 
	(e.g., the unique occurrence of \protect\HSPname{$a^4$} changes to \protect\HSPname{$a^5$}),
	the change is more dramatical on the \emph{left} side.
	In the top left tree, which is built on $\textY = {\tt a}^{19}{\tt ba}^5{({\tt ba})}^2$,
	the two rightmost nodes~\protect\HSPname{$a^3$} and~\protect\HSPname{$ab$} of the \Type{M} meta-block on the bottom level are children of the leftmost node~\protect\HSPname{$a^4b$}
		of the right meta-block on the next level.
Prepending the character~\Char{a} to $\textY$ (\emph{bottom left}) changes the names of the nodes with names~\protect\HSPname{$ab$} and \protect\HSPname{$a^4b$} to~\protect\HSPname{$a^2b$} and~\protect\HSPname{$a^5b$}, respectively.
	} 
\label{figMixed}
\end{figure}

\begin{figure}[t]
	\Bild[width=\textwidth]{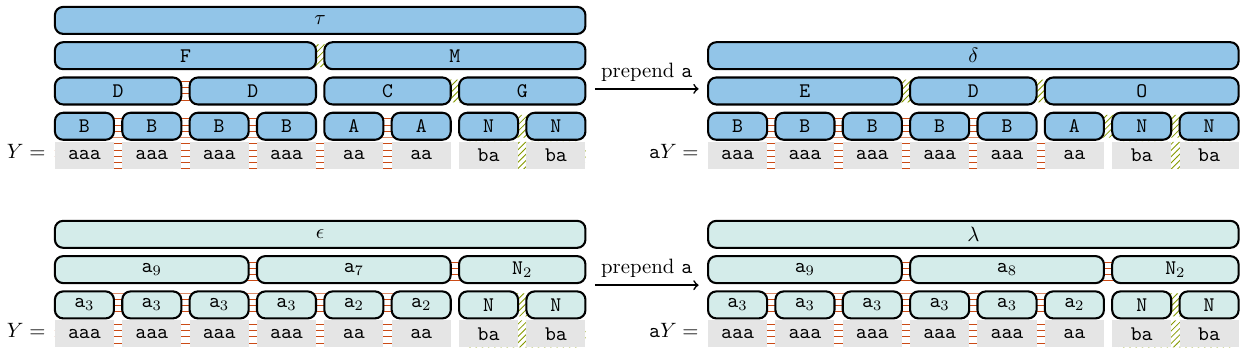}
	\caption{%
		\emph{Left:} \etInst[\textY] (\emph{top}) and \hInst[\textY] (\emph{bottom}) of the string~$\textY$ defined in \cref{figClassicESP}.
		\emph{Right:} \etInst[{\tt a} \textY] (\emph{top}) and  \hInst[{\tt a} \textY]. 
		Unlike the two ESP trees on the top, the two HSP trees below share the same tree topology.
	}
	\label{figDifferenceToESPT}
\end{figure}

\begin{figure}[t]
	\centerline{%
		\Bild[scale=1.0]{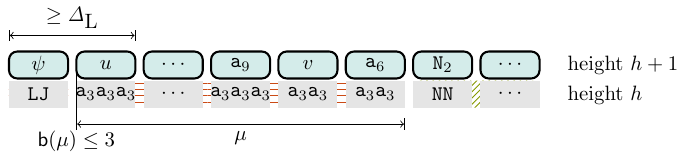}
}
\caption{Setting of \cref{corRepeatingPrefixBlock}. 
	According to \cref{lemRepeatingFragileLeft}, a meta-block~$\mu$ can contain a surrounded fragile block if $\ibeg{\mu} \le 3$ (cf. \cref{figSurroundedFragile}).
	In the figure, the node~$v$ is fragile, since prepending \protect\HSPname{$bab$} changes its name.
	According to \cref{corRepeatingPrefixBlock}, there is a non-surrounded node~$u$ whose generated substring has the generated substring of~$v$ as a prefix.
}
\label{figSurroundedFragileHSP}
\end{figure}

\section{Hierarchical Stable Parsing Trees}\label{secHSP}
Our modification, which we call \intWort{hierarchical stable parsing} or \intWort{HSP},
augments each name with a \intWort{surname} and a \intWort{surname-length}, whose definitions follow:
Given a name~$\Name{Z} \in \Sigma_h$,
let $h'$ with $0 \le h' \le h$ be the largest integer such that
$\dic^{(h')}(\Name{Z})$ consists of the same symbol, say $\dic^{(h')}(\Name{Z}) = \Name{Y}^\ell \in \Sigma^*_{h-h'}$ for a symbol~$\Name{Y} \in \Sigma_{h-h'}$ and an integer~$\ell \ge 1$. 
Then the surname and surname-length of~$\Name{Z}$ are the symbol~$\Name{Y}$ and the integer~$\ell$, respectively.\footnote{By definition, the surname of~\Name{Z} is \Name{Z} itself if $\ell = 1$.}
For convenience, we define the surname of a character to be the character itself.
Then all symbols in $\dic^{(j)}(\Name{Z})$ for every~$j$ with $1 \le j \le h'$ share the same surname with~$\Name{Z}$.

Having the surnames of the nodes at hand, we present the hierarchical stable parsing.
It differs to ESP in how a string of names is partitioned into meta-blocks, whose boundaries now depend on the surnames:
When factorizing a string of names into meta-blocks, we relax the check whether two names are equal;
instead of comparing names we compare by surnames.\footnote{The check is relaxed since names with different surnames cannot have the same name.}
This means that we allow meta-blocks of \Type{1} to contain different symbols as long as all symbols share the same surname.
The other parts of the edit sensitive parsing defined in \cref{subsecESP} are left untouched; in particular, the alphabet reduction uses the symbols as before.
We write $\hInst[\textY]$ for the resulting parse tree, called \intWort{HSP tree}, when the HSP technique is applied to a string~\textY{}. %
\Cref{figDiffSurname} shows $\hInst[\Char{a}^{11} (\Char{ba})^5]$. 
In the rest of this article (and as shown in \cref{figDiffSurname}), we give a repetitive node with surname~\Name{Z} and surname-length~$\ell$ the name~$\MakeHSPName{\Name{Z}}{\ell}$.
We omit the surname-length if it is one (and thus, the label of a non-repetitive node is equal to its name).
For the other nodes, we use the names of \cref{tableNames}.
We can do that because the name of a node can be identified by its surname and surname-length, as can be seen by the following \lcnamecref{lemSurnameToNameIsom}:
\begin{lemma}\label{lemSurnameToNameIsom}
	The name of a node is uniquely determined by its surname and surname-length.
\end{lemma}
\begin{proof}
A node with surname-length one is not repetitive, and therefore, its name is equal to its surname.
Given a repetitive node~$v$ with surname~$\Name{Z}$ and surname-length~$\ell$, there is a height~$h$ such that $\dic^{(h)}(v) = \Name{Z}^\ell$.
For every height~$h'$ with $1 \le h' \le h$, $\dic^{(h')}(v)$ consists of the same symbol, and hence $\dic^{(h')}(v)$ is parsed greedily by HSP\@.
This means that the iterated greedy parsing of the string~$\Name{Z}^\ell$ determines the name of~$v$.
\end{proof}

\begin{figure}[t]
	\centering{%
		\Bild[scale=1.0]{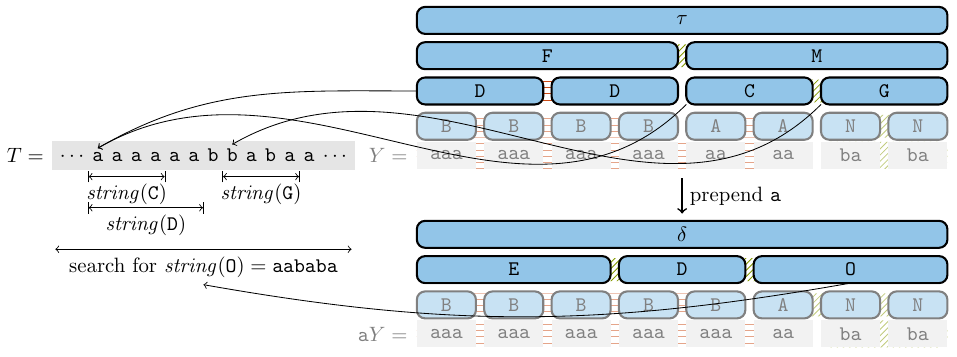}
	}
	\caption{Problem with dynamic updates of ESP trees stored in text space.
		Truncated nodes are grayed out.
		Each leaf of the truncated trees is assigned a pointer to its generated substring, which are 
		substrings of the text~$T$ (\emph{left}).
		Suppose that we have built \etInst[\textY] (\emph{top right}) on a substring~$\textY$ of $\textT$ ($\textY$ defined as in \cref{figDifferenceToESPT}), and that the names \protect\ESPname{$a^6$}, \protect\ESPname{$a^4$} and \protect\ESPname{$(ba)^2$} are already present in the dictionary (hence, they have different generated substrings).
		Further suppose that the space of $\textY$ in $T$ has been overwritten.
		When prepending an {\tt a} to \etInst[\textY] to form \etInst[{\tt a} \textY] (\emph{bottom right}), the node \protect\ESPname{$(ba)^2$} changes to \protect\ESPname{$a^2(ba)^2$},
		for which we need to search its generated substring (assuming that \protect\ESPname{$a^2(ba)^2$} is not yet stored in the dictionary).
	The example can be elaborated such that \protect\ESPname{$(ba)^2$} and \protect\ESPname{$a^2(ba)^2$} become surrounded nodes (prepend ${\tt a}^{9k}$ and append ${\tt b}^{9k}$ for a sufficiently large $k \ge 1$).
}
	\label{figNoInPlaceForESP}
\end{figure}

\subsection{Upper Bound on the Number of Fragile Nodes}\label{subsecHSPFragileNodes}
The motivation of introducing the HSP technique becomes apparent with the three following facts:
\begin{Facts}
\item \label{itHSPPropsRep} %
	Given that the surnames of the repetitive nodes in a repeating meta-block~$\mu$ are~$w$,
	the generated substring of each such repetitive node is a repetition of the form~$\textX^k$ with the same~$\textX = \generated{w} \in \Sigma^*$
	(or $\textX = w$ in case $w \in \Sigma$),
	but with possibly different surname-lengths~$k$ (e.g., $\generated{\HSPname{$(ba)^3$}} = ({\tt ba})^3$ and~$\generated{\HSPname{$(ba)^2$}} = ({\tt ba})^2$ in \cref{figDiffSurname}).
	Due to the greedy parsing of the repeating meta-blocks, 
	the surname-lengths of the last two nodes in $\mu$ cannot be larger than the surname-lengths of the generated substrings of the other nodes (with the same surname) contained in~$\mu$. See \cref{figFragileStabled} for an example when prepending a character to the input.

	\item \label{itHSPPropSurname}
		The shift of a semi-stable node is always a multiple of the length of its surname
		(recall that semi-stable nodes are defined like stable nodes, but with slight shifts, cf.\ \cref{secChangingNodes}):
		Let~\HSPname{$ab$} be the surname of a semi-stable node~$v \in \espnode{\textY}_h$ on height~$h$.
				Given $\HSPname{$ab$} \in \Sigma_{h'}$ for a height~$h'$ with $h' \ge 0$, 
				$\dic^{(h-h')}(v)$ is a repetition of the symbol~\HSPname{$ab$} on height~$h'$.
				A shift of $v$ can only be caused by adding one or more \HSPname{$ab$}s to $\espnode{\textY}_{h'}$.
				In other words, the shift is always a multiple of~$\dic^{(h')}(\HSPname{$ab$})$.
\Cref{figHSPsurname} shows an example of a semi-stable node~$v$.

	\item \label{itHSPPropsMixed} 
		A non-repetitive \Type{M} block can be fragile only if it is non-surrounded.
		By definition, a repeating meta-block~$\mu$ contains a non-repetitive block~$\beta$ iff $\mu$ is \Type{M}.
	The block~$\beta$ can only be located at the beginning or ending of~$\mu$.
	Remembering~\ref{stipulationM}, 
	$\beta$'s none-repetitiveness is caused by
	\begin{itemize}
		\item fusing a symbol with its \emph{succeeding} meta-block, or
		\item fusing the \emph{last} symbol with its \emph{preceding} meta-block.
	\end{itemize}
	In both cases, it is impossible that $\beta$ is a surrounded block if $\ibeg{\mu} \le \lcontext$.
	If~$\beta$ is surrounded, it is \SemiOrStable{} due to \cref{lemRepeatingFragileLeft}.
	Note that with sticking to the choice made in~\ref{stipulationM}, we also experience a more stable behavior like in \cref{figMixed}.
\end{Facts}

\begin{figure}[t]
	\centering{%
		\Bild[scale=1.0]{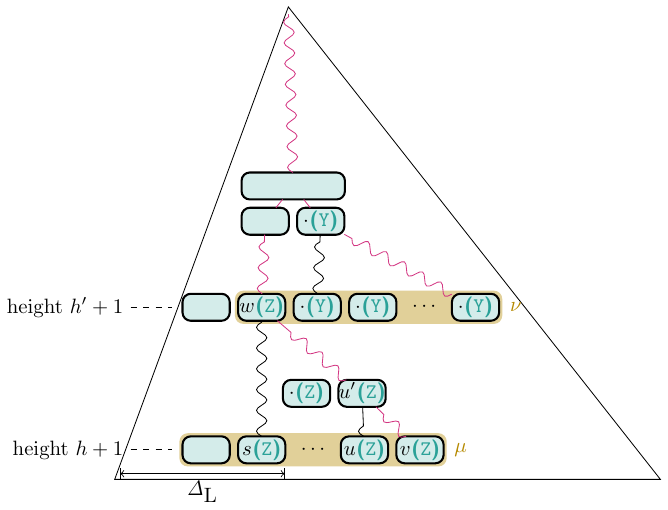}
	}
\caption{Sketch of the HSP tree used to show \cref{lemmaSurroundedFragileHSP}. 
	In the sketch, we give the repetitive nodes of the meta-block~$\nu$ the surname~\Name{Y}.
	Repetitive nodes are labeled with their surnames, which are put into parentheses.
	}
	\label{figHspProof}
\end{figure}

These facts make the HSP technique more stable than the ESP technique, as can be seen in \cref{figDifferenceToESPT}, for instance.
In the following, we study the number of fragile surrounded nodes (like in \cref{secChangingNodes} for the ESP trees), and show 
the invariant (\ref{itSurroundedFragileProofGenerated} in \cref{lemmaSurroundedFragileHSP}) that the generated substring of a fragile surrounded node is always the prefix of the generated substring of a name that is already stored in~$\dic$. 
On block level, this is an easy conclusion of \cref{lemRepeatingFragileLeft} and \cref{itHSPPropsMixed,itHSPPropsRep}.

\begin{corollary}\label{corRepeatingPrefixBlock}
Given $n > 4$ and a repeating meta-block~$\mu$ having a fragile surrounded block~$\beta$,
$\mu$ has at least one block preceding~$\beta$ that contains three symbols with the same surname.
In particular, the leftmost of these preceding blocks is non-surrounded.
\end{corollary}
\begin{proof}
Since $\beta$ is surrounded, the condition $\abs{\mu} \ge \lcontext-2$ holds.
By the definition of~$\lcontext$ in \cref{lemAlphabetReduction}, $\lcontext - 2 \ge 5$ for $n > 4$.
Assuming that the repetitive blocks in~$\mu$ have the surname~\Name{Z},
this means that there is at least one repetitive block~$\gamma$ with surname~\Name{Z} preceding~$\beta$ that contains three symbols of $\mu$.
But the fragile surrounded block $\beta$ is also a repetitive block according to \cref{itHSPPropsMixed}.
This means that the surname-length of~$\beta$ is at most as long as the surname-length of~$\gamma$ due to \cref{itHSPPropsRep}, 
i.e., the generated substring of the node corresponding to~$\beta$ is a prefix of the generated substring of the node corresponding to~$\gamma$.
Let~$\gamma$ be the leftmost such block.
Remembering that~$\mu$ can start with a non-repetitive node in case that $\mu$ is of \Type{M}, it is not obvious that~$\gamma$ is non-surrounded.
However, according to \cref{lemRepeatingFragileLeft}, $\ibeg{\mu} \le 2$ must hold. 
This means that $\ibeg{\gamma} \le 5 \le \lcontext$, so $\gamma$ is non-surrounded.
See \cref{figSurroundedFragileHSP} for a sketch (with $\Name{Z} = \Char{a}$).
\end{proof}

In general, the aforementioned invariant does not hold for ESP trees, but is essential for the sparse suffix sorting in text space.
There, our idea is to create an HSP or ESP tree on a newly found re-occurring substring. 
We would like to store the ESP tree in the space of one of those substrings, which we can do by
truncating the tree at a certain height (removing the lower nodes), and 
changing the pointer of each (new) leaf such that the name of a leaf refers to its generated substring that is found 
in the remaining text. Unfortunately, there is a problem when \ac{preapp} characters to enlarge the ESP tree, since
a leaf could change its name such that its generated substring needs to be updated - which can be non-trivial 
if its generated substring refers to an already overwritten part of the text that is not present in the remaining text as a (complete) substring.
\Cref{figNoInPlaceForESP} demonstrates the problem when truncating ESP trees at height~$2$.
Fortunately, the following \lcnamecrefs{lemmaSurroundedFragileHSP} restrict the problem of updating the generated substring when an HSP node is surrounded and fragile.
We start with appending characters:
\begin{lemma}\label{lemmaSurroundedFragileHSPAppending}
	There is no surrounded HSP node~$v$ whose name changes when appending characters.
\end{lemma}
\begin{proof}
	Assume that $v$'s name changes on appending characters.
	Moreover, assume that $v$'s local surrounding does not contain a fragile node (otherwise swap~$v$ with this node).
	First, since there is no fragile node in~$v$'s local surrounding, it has to be a repeating node according to \cref{lemTypeTwoStable}.
	Second, according to \cref{lemRepeatingFragileSurrounded}, it has to be one of the last two nodes built on a repeating meta-block~$\mu$.
	But there is no way to change the names of the last two blocks of~$\mu$ by appending characters unless these blocks are non-surrounded.
	So a surrounded node cannot have a node in its surrounding whose name changes when appending characters.
\end{proof}

\begin{lemma}\label{lemmaSurroundedFragileHSP}
Let~$v$ be a fragile surrounded node of an HSP tree.
Then
\begin{enumerate}[leftmargin=*,label={Claim~\arabic*:},ref={Claim~\arabic*}]
	\item $v$ is a repetitive node,  \label{itSurroundedFragileProofRepeating}
	\item \ac{preapp} characters cannot change $v$'s surname, and \label{itSurroundedFragileProofReparsing}
	\item the generated substring of~$v$ is always a prefix of the generated substring of an already existing node 
		belonging to the same meta-block as~$v$. \label{itSurroundedFragileProofGenerated}
\end{enumerate}
\end{lemma}
\begin{proof}
To show the lemma, let $n > \lcontext + \rcontext$, otherwise there are no surrounded nodes.
There are two (non-exclusive) possibilities for a node to be fragile and surrounded:
\begin{itemize}
	\item it belongs to the last two nodes built on a repeating meta-block (due to \cref{lemRepeatingFragileSurrounded}), or 
	\item its subtree contains a fragile surrounded node, since by definition, 
		\begin{itemize}
			\item a node is fragile if it contains a fragile node in its subtree, and
			\item all nodes in the subtree of a surrounded node are surrounded.
		\end{itemize}
\end{itemize}
We iteratively show the claim for all heights, starting at the bottom:
Let $v$ be one of the \emph{lowest} fragile surrounded nodes in \hInst[\textY] (\emph{lowest} meaning that there is no fragile node in $v$'s subtree).
Suppose that~$v$ is a node on height~$h+1$ with $h \ge 0$.
Since there is no fragile surrounded node in $v$'s subtree, 
$v$ is one of the last two nodes built on a repeating meta-block $\espnode{\textY}_h[\mu]$ (i.e., $\textY[\mu]$ for $h = 0$).
Due to \cref{itHSPPropsMixed}, \ref{itSurroundedFragileProofRepeating} holds for~$v$; let~\Name{Z} be its surname.
Since~$v$ is fragile, $\ibeg{\mu} \le 3$ must hold (otherwise we get a contradiction to \cref{lemRepeatingFragileLeft}). %
But since $v$ is surrounded, there is a repetitive node~$u$ with surname~\Name{Z} preceding~$v$ that is built on three symbols ($\dic(u) \in \Sigma_h^3$) of $\mu$ due to
\cref{corRepeatingPrefixBlock}.
In particular, the leftmost repetitive node~$s$ of~$\mu$ is not surrounded.

We only consider prepending a character (appending is already considered in \cref{lemmaSurroundedFragileHSPAppending}).
Assume that $v$'s name changes when prepending a specific character.
By \cref{itHSPPropsRep}, the HSP technique assigns a new name to~$v$, but it does not change its surname (so~\ref{itSurroundedFragileProofReparsing} holds for~$v$).
Additionally, $\generated{v}$ is a substring of \generated{u}, where~$u$ is one of $v$'s preceding nodes having the surname~\Name{Z}, and therefore~\ref{itSurroundedFragileProofGenerated} holds for~$v$.
For example, let~$v$ be the node with name~\HSPname{$a^7$} in \hInst[\textY] of \cref{figFragileStabled}, then $\generated{v} = \Char{a}^7$,
which is a prefix of $\generated{\HSPname{$a^9$}} = \Char{a}^9$.
After prepending the character \Char{a}, $v$'s name becomes \HSPname{$a^8$} with $\generated{v} = \Char{a}^8$. Still,
\generated{v} is a prefix of $\generated{\HSPname{$a^9$}}$.

Due to this behavior, the node~$v$ is always assigned to~$\mu$, regardless of what character is prepended.
This means that it is only possible to extend or shorten $\mu$ on its left side, or equivalently,
$\mu$'s right end is \emph{fixed}; the parsing of a meta-block succeeding~$\mu$ cannot change.
This means that the parsing assures that every surrounded node located to the right of $\espnode{\textY}_h[\mu]$ is (semi-)stable.
We conclude that the claim holds for the heights $1,\ldots,h+1$.

Next, we show that the claim holds for all height~$h+2,\ldots,h'$, where $h'+1$ is the height of the lowest common ancestor~$w$ of~$s$ and~$v$.
\Cref{figHspProof} gives a visual representation of the following observations:
When following the nodes from~$v$ up to~$w$, there is a path of ancestor nodes with surname~\Name{Z}.
Except for~$w$, each such ancestor node~$u'$ has a neighbor with surname~\Name{Z}.
On changing the name of~$v$, all nodes on the height of~$u'$ are unaffected, except~$u'$.
That is because the ancestor of~$s$ on the same height as~$u'$ is put with~$u'$ in the same repeating meta-block,
which comprises all neighboring nodes with surname~\Name{Z}.
By the analysis above, changing the name of~$u'$ cannot change the parsing of the other nodes on the same height.
We conclude that the claim holds for the heights $h+2,\ldots,h'$.

Let us focus on the nodes on height~$h'+1$:
The node $w$ is not surrounded, because it contains the non-surrounded node~$s$ in its subtree.
Having neighbors with different surnames, $w$ is either blocked in a \Type{2} or \Type{M} meta-block. 
\begin{itemize}
	\item In the former case (\Type{2}), the analysis of~\cref{lemTypeTwoFragile} shows that~$w$ only affects the parsing of the non-surrounded nodes.
	There can be a non-surrounded meta-block on a height~$h'' > h'+1$ having a fragile surrounded node~$v'$.
	But then~$v'$ cannot contain a fragile node (the descendants of~$w$ are the last fragile \emph{surrounded} nodes, and $w$ is non-surrounded).
	This means that we can apply the same analysis to~$v'$ as for~$v$.

\item In the latter case (\Type{M}), $w$ is fused with a repeating meta-block to form a \Type{M} meta-block~$\nu$, 
changing the names of the leftmost and two rightmost nodes of~$\nu$, where the leftmost node is~$w$. 
Assume that the two rightmost nodes of~$\nu$ are fragile and surrounded
(otherwise we conclude with the previous case that there are no fragile surrounded nodes on height~$h'+1$).
Under this assumption, the rightmost nodes of~$\nu$ are repeating nodes due to \cref{itHSPPropsMixed}.
Hence, we can apply the same analysis as for~$v$, and conclude the claim for all heights above~$h'$.
\qedhere{}
\end{itemize}
\end{proof}
A direct consequence is that there are \Oh{1} fragile surrounded nodes on each height.
With \cref{lemBoundFragileESP} we get the following \lcnamecref{thmFragileHSP}:

\begin{theorem}\label{thmFragileHSP}
	The HSP tree \hInst[\textY] of a string~$\textY$ of length~$n$ contains at most \Oh{\lg^*n} fragile nodes on each height.
\end{theorem}

Having a bound on the number of fragile nodes, we start to study the algorithmic operations of an HSP tree.
The first operation is how to actually build an HSP tree.
For that, we have to think about its representation:

\subsection{Tree Representation}\label{secTreeRepresentation}
Unlike \citeauthor{Cormode2007sed}, who use hash tables to represent the dictionary~$\dic$, we follow a deterministic approach.
In our approach, we represent~$\dic$ by storing the HSP tree as a \ac{CFG}\@.
A name (i.e., a non-terminal of the \ac{CFG}) is represented
by a pointer to a data field (an allocated memory area), which is composed differently for leaves and internal nodes:
\begin{description}
	\item[Leaves.] A leaf stores a position $i$ and a length $\ell \in \{2,3\}$ such that $\textY[i..i+\ell-1]$ is the generated substring.
	\item[Internal nodes.] An internal node stores the length of its generated substring, and the names of its children.
		If it has only two children, we use a special, invalid name $\invalid$ for the non-existing third child 
		such that all data fields are of the same length.
\end{description}
This information helps us to navigate from a node to its children or its generated substring in constant time, and
to navigate top-down in the HSP tree by traversing the tree from the root in time linear in the height of the tree.

To accelerate substring comparisons, we want to give nodes with the same children (with respect to their order and names) the same name, 
such that the dictionary~$\dic$ is injective.
To keep the dictionary injective, we do the following: Before creating a new name for the rule $b \rightarrow xyz$ (we set~$z = \invalid$ if the rule is $b \rightarrow xy$), we check whether there already exists a name for $xyz$.
To perform this lookup efficiently, we need also the \emph{reverse} dictionary of~$\dic$, with the right hand side of the rules as search keys.
We want the reverse dictionary to be of size $\Oh{\abs{\textY}}$, supporting lookup and insert in $\Oh{\lookupTime}$ (deterministic) time for a $\lookupTime = \lookupTime(n)$ depending on~$n$.
For instance, a balanced binary search tree has $\lookupTime = \Oh{\lg n}$.

With this tree representation, we can build HSP trees within the following time and space bounds:

\begin{lemma}\label{lemmaBuildHSP}
The HSP tree~\hInst[\textY] of a string~$\textY$ of length~$n$ can be built in \Oh{n \tuple{\lg^*n + \lookupTime}} time.
It takes $\Oh{n}$ words of space.
\end{lemma}
\begin{proof}
	A name is inserted or looked-up in $\lookupUpper$ time.
	Due to the alphabet reduction technique (see \cref{lemAlphabetReductionSpeed}), applying $\esp$ on a substring of length $\ell$ takes $\Oh{\ell \lg^* n}$ time, 
	returning a sequence of blocks of length at most $\ell/2$.
\end{proof}

\subsection{LCE Queries in HSP Trees}\label{secLCE}
Like the trees~\cite{Alstrup2000Pmi,NishimotoIIBT16} based on signature encoding, we show that HSP trees are good at answering LCE queries.
The idea is to compare the names of two nodes to test whether the generated substrings of both nodes are the same.
Remembering that two nodes with the same generated substring can have different names (cf.\ the end of \cref{secDictAndNames}), 
we want to have a rule at hand saying when two nodes with different names must have different generated substrings.
It is easy to provide such a rule when the input string is square-free:
In this case, all fragile nodes are non-surrounded according to \cref{lemSquareFree}, and thus
we know that the surrounded nodes are stable. 
Since each height consists of exactly one \Type{2} meta-block, the equality of two substrings~$\textX$ and $\textY$ can be checked by comparing the names of two surrounded nodes whose generated substrings are~$\textX$ and $\textY$, respectively.
For general strings, we need additional information about the generated substring of each repeating node.
That is because the names of two repeating nodes at \emph{the same height} already differ when the generated substring of one node is a proper prefix of the  generated substring of the other node.
Fortunately, this additional information is given by the surnames and surname-lengths (see \cref{itHSPPropSurname} in \cref{subsecHSPFragileNodes}):

Having a common dictionary~$\dic$ for all HSP trees that stores the length of the string $\dic^{(h)}(\Name{Z})$ for each name $\Name{Z} \in \Sigma_h$,
we explain how HSP trees can answer LCE queries efficiently.

\begin{figure}[t]
	\centering{%
		\Bild[scale=1.0]{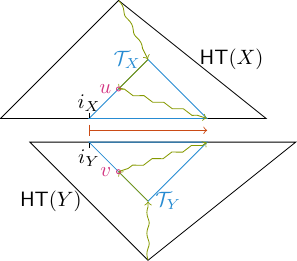}
	}
	\caption{Conception of the proof of \cref{lemmaLCE}. 
 		To compute the longest common prefix of $\textX[i_{\textX}..]$ and $\textY[i_{\textY}..]$ (arrow in the center), 
 	we walk down the trees $\hInst[\textX]$ and $\hInst[\textY]$ (depicted by the upper and the lower triangle, respectively) on the paths towards the
	leaves containing $\textX[i_\textX]$ and $\textY[i_\textY]$, respectively, by simultaneously visiting two nodes on the same height of both trees.
	The nodes~$u$ and~$v$ in the figure are on these paths. 
	Suppose that they are on the same height and have the same surname. 
	On visiting both nodes, we know that the longest common prefix is at least 
	$\min\tuple{\abs{\generated{u}},\abs{\generated{v}}}$ long. 
	We update the destination of our traversal accordingly, such that we follow the paths from~$u$ and~$v$ to the leaves covering
	the not-yet checked parts of the longest common prefix that we want to compute.
	}
	\label{figLCE}
\end{figure}

\begin{lemma}\label{lemmaLCE}
	Given $\hInst[\textX]$ and $\hInst[\textY]$ built on two strings~$\textX$ and $\textY$ with $\abs{\textX} \le \abs{\textY} \le n$
	and two text positions $1 \le i_\textX \le \abs{\textX}, 1 \le i_\textY \le \abs{\textY}$,
	we can compute $\lcp(\textX[i_{\textX}..], \textY[i_{\textY}..])$ in $\Oh{\lg n \lg^* n}$ time.
\end{lemma}
\begin{proof}
	We use the following property:
	If two nodes have the same surname~\Name{Z}, then the generated substrings of both nodes are $\textZ^i$ and $\textZ^j$, respectively, 
	with the respective surname-lengths~$i$ and~$j$,
	where $\textZ = \generated{\Name{Z}}$. 
	This means that the generated substring of one node is a prefix of the generated substring of the other.
	In the particular case $i = j$, both nodes share the same subtree and consequently have the same name according to \cref{lemSurnameToNameIsom}.
	In summary, this property allows us to omit the comparison of the subtrees of two nodes with the same surname, and thus speeds up the LCE computation,
	which is done in the following way (cf. \cref{figLCE}):

	\begin{enumerate}[(1)]
		\item 
			We start with traversing the two paths from the roots of $\hInst[\textX]$ and $\hInst[\textY]$ to the leaves~$\leaf_{\textX}$ and~$\leaf_{\textY}$
			whose generated substrings contain $\espnode{\textX}_0[i_\textX]$ and $\espnode{\textY}_0[i_\textY]$, respectively:
	\item	We traverse the two paths leading to the leaves~$\leaf_{\textX}$ and~$\leaf_{\textY}$, respectively, in a simultaneous manner such that we always visit a pair~$(u,v)$ of nodes on the same height belonging to $\hInst[\textX]$ and $\hInst[\textY]$, respectively. \label{itlemLCE2}
	\item Given that $u$ and $v$ share the same surname $\Name{Z} \in \Sigma_h$, 
we know the lengths of their generated substrings ($\abs{\dic^{(h)}(\Name{Z})}^{\ell_u}$ and $\abs{\dic^{(h)}(\Name{Z})}^{\ell_v}$) by having their surname-lengths~$\ell_u$ and~$\ell_v$ at hand.
As a consequence, we know that $\textX[i_{\textX}..]$ and $\textY[i_{\textY}..]$ have a common prefix of at least $\min\tuple{\abs{\dic^{(h)}(\Name{Z})}^{\ell_u}, \abs{\dic^{(h)}(\Name{Z})}^{\ell_v}}$.
We update the variables~$\leaf_{\textX}$ and $\leaf_{\textY}$ to be the leaves whose generated substrings contain $\espnode{\textX}_0[i_\textX+\abs{\dic^{(h)}(\Name{Z})}^{\ell_u}]$ and $\espnode{\textY}_0[i_\textY+\abs{\dic^{(h)}(\Name{Z})}^{\ell_v}]$, respectively.
Subsequently, we continue our tree traversals from~$u$ and~$v$ to the updated destinations~$\ell_{\textX}$ and~$\ell_{\textY}$, respectively.
Since $\leaf_{\textX}$ and $\leaf_{\textY}$ are not in the respective subtrees of~$u$ and~$v$, 
we climb up the tree to the lowest common ancestor of $u$ (resp.\ $v$) and~$\leaf_{\textX}$ (resp.\ $\leaf_{\textY}$), and
recurse on~\ref{itlemLCE2}.
\item If we end up at a pair of leaves (i.e., $u = \leaf_{\textX}$ and $v = \leaf_{\textY}$), we compare their generated substrings naively. 
	If we find a mismatching character in both generated substrings, we can determine the value of~$\ell$ and terminate.
	We also terminate if there is no mismatch, but $\leaf_{\textX}$ or $\leaf_{\textY}$ is the rightmost leaf of $\hInst[\textX]$ or $\hInst[\textY]$, respectively.
	In all other cases, 
	we set $\leaf_{\textX}$ and $\leaf_{\textY}$ to their respectively succeeding leaves, climb up to the 
	parents of $u$ and $v$,
	and recurse on~\ref{itlemLCE2}.
	\end{enumerate}
During the traversals of both trees, we spend constant time for each navigation operation, i.e., (a) selecting a child, and (b) climbing up to the parent of a node:
On the one hand, we select a child of a node~$v$ in constant time by following the pointer of the name of~$v$ (defined in \cref{secTreeRepresentation}).
On the other hand, we maintain, for each tree, a stack storing all ancestors of the currently visited node during the traversal of the respective tree:
Each stack uses $\Oh{\lg n}$ words, and can return the parent of the currently visited node in constant time.

To upper bound the running time of the traversals, we examine the nodes visited during the traversals.
Starting at both root nodes, we follow the path from the root of \hInst[\textX] (resp.\ \hInst[\textY]) down to
the roots of the minimal subtree~\SubTree{\textX} of $\hInst[\textX]$ (resp.\ \SubTree{\textY} of \hInst[\textY]) covering $\textX[i_{\textX}..i_{\textX}+\ell]$ (resp.\ $\textY[i_{\textY}..i_{\textY}+\ell]$).%
\footnote{We assume that $i_{\textX}+\ell \le \abs{\textX}$ and $i_{\textY}+\ell \le \abs{\textY}$. Otherwise, let \SubTree{\textX} and \SubTree{\textY} cover $\textX[i_{\textX}..i_{\textX}+\ell-1]$ and $\textY[i_{\textY}..i_{\textY}+\ell-1]$, respectively.}
After entering the subtrees~$\SubTree{\textX}$ and \SubTree{\textY},
we will never visit nodes outside of $\SubTree{\textX}$ and \SubTree{\textY}.
The question is how many nodes of $\SubTree{\textX}$ and $\SubTree\textY$ differ.
This can be answered by studying the tree $\hInst[\textZ]$ built with the same dictionary~$\dic$, 
where $\textZ := \textX[i_{\textX}..i_{\textX}+\ell-1] = \textY[i_{\textY}..i_{\textY}+\ell-1]$:
On the one hand, $\hInst[\textZ]$ has \Oh{\lg^* n} fragile nodes on each height according to \cref{thmFragileHSP}.
On the other hand, each (semi-)stable node in $\hInst[\textZ]$ is found in both $\SubTree\textX$ and $\SubTree\textY$ with the same name and surname.
This means that when traversing \hInst[\textX] and \hInst[\textY] within their respective subtrees $\SubTree\textX$ and $\SubTree\textY$, 
we only visit \Oh{\lg^* n} pairs of nodes per height 
(remember that we follow the two paths to the leaves~$\leaf_{\textX}$ and $\leaf_{\textY}$, respectively, up to the point where the surnames of the visited pair of nodes match). 

To sum up, we 
(a) compute paths from the roots to  $\espnode{\textX}_0[i_\textX]$ and $\espnode{\textY}_0[i_\textY]$, respectively, in \Oh{\lg\abs{\textY}} time, and
(b) compare the children of at most $\Oh{\lg^*n}$ nodes per height.
Since both trees have a height of \Oh{\lg \abs{\textY}}, we obtain our claimed running time.
\end{proof}

The following \lcnamecref{lemmaLCEPointer} is a small refinement of \cref{lemmaLCE} that already shows the result of \cref{thmLCEtradeoff} for $\tau = 1$:
\begin{corollary}\label{lemmaLCEPointer}
	Given $\hInst[\textX]$ and $\hInst[\textY]$ built on two strings~$\textX$ and $\textY$ with $\abs{\textX} \le \abs{\textY} \le n$
	and two text positions $1 \le i_\textX \le \abs{\textX}, 1 \le i_\textY \le \abs{\textY}$,
	we can compute $\ell := \lcp(\textX[i_{\textX}..], \textY[i_{\textY}..])$ in $\Oh{\lg \ell \lg^* n}$ time.
\end{corollary}
\begin{proof}
	Our idea is to enhance an HSP tree with a data structure such that
		climbing up from a child to its parent can be performed in constant time.
	This can be achieved when we represent the tree topology of an HSP tree with a pointer based tree, 
	in which each node stores its name and the pointer to its parent.
	The leaves are stored sequentially in a list.
	A bit vector with the same length as the input string is used to mark the borders of the generated substrings of the leaves.
	Given a text position~$i$, we can access the leaf whose generated substring contains~$i$ in constant time with a rank-support on the bit vector. 
	The bit vector with rank-support takes $n + \oh{n}$ bits.
	The pointer based tree can be built with the HSP tree without an additional time overhead, and takes \Oh{n} words of space.
\end{proof}

In the next section, we describe a preliminary version of our sparse suffix sorting algorithm that does not exploit the text space yet.

\section{Sparse Suffix Sorting}\label{secSSS}
The sparse suffix sorting problem asks for the order of suffixes starting at certain positions in a text~$T$.
In our case, these positions need only be given online, i.e., sequentially and in an arbitrary order.
We collect them conceptually in a dynamic set~$\Pos$ with $m := \abs{\Pos}$.
The online sparse suffix sorting problem is to keep the suffixes starting at the positions stored in the incrementally growing set~$\Pos$ in sorted order.
Due to the online setting, we represent the order of $\Suf(\Pos)$ by a dynamic, self-balancing binary search tree (e.g., an AVL tree).
Each node of the tree is associated with a distinct suffix in $\Suf(\Pos)$; the lexicographic order is used as the sorting criterion.

The technique of \citet{Irving2003sbs} augments an AVL tree on a set of strings~$\SetStr$ with LCP values so that
$\ell_\textY := \max \{ \lcp(\textX, \textY) \mid \textX \in \SetStr \}$ 
can be computed
in $\Oh{\ell_\textY \wordpack + \lg \abs{\SetStr}}$ time
for a string $\textY$.
Inserting a new string~$\textY$ into the tree is supported in the same time complexity~($\ell_\textY$ is defined as before).
\citeauthor{Irving2003sbs} called this data structure the \intWort{\SAVLT} on $\SetStr$; we denote it by $\SAVL(\SetStr)$.

Remembering \cref{secSparseSuffAlgoOutline}, our goal is to build $\SAVL(\Suf(\Pos))$ efficiently.
However, inserting $m$~suffixes naively takes $\Om{\abs{\Comlcp} m \wordpack + m\lg m}$ time.
How to speed up the comparisons by exploiting a data structure for LCE queries is the topic of this section.

\subsection{Abstract Algorithm}\label{secAbstAlgo}

Starting with an empty set of positions~$\Pos = \emptyset$,
our algorithm updates $\SAVL(\Suf(\Pos))$ on the input of every new text position,
involving LCE computations between the new suffix and suffixes already stored in $\SAVL(\Suf(\Pos))$.
A crucial part of the algorithm is performed by these LCE computations, for which an LCE data structure is advantageous to have.
In particular, we are interested in a \emph{mergeable} LCE data structure that 
	is mergeable in such a way that the merged instance answers queries faster than 
		performing a query on both former instances separately.
		We call this a \intWort{\ac{DynLCE}}; it supports the following operations:

\sitemize{%
\item $\DynLCEInst[\intervalI]$ constructs a \DynLCE{} data structure $M$ on the substring $T[\intervalI]$. Let $M.\memtext$ denote the interval~$\intervalI$.
\item {\tt LCE}$(M_1,M_2, p_1,p_2)$ computes 
	$\lce(p_1, p_2)$, where $p_i \in M_i.\memtext$ for $i=1,2$.
	\item {\tt merge}$(M_1, M_2)$ merges two \DynLCE{}s $M_1$ and $M_2$ such that the output is a \DynLCE{} built
		on the string concatenation of $T[M_1.\memtext]$ and $T[M_2.\memtext]$.
	}
	We use the expression $\timeDynConstruct{\abs{\intervalI}}$ to denote the construction time of such a data structure on the substring $\textT[\intervalI]$.
We assume that the construction of \DynLCEInst[\intervalI] takes at least as long as scanning all characters on \textY{}, i.e.,

\ConstraintBox{%
	\begin{LCEproperty}[series=lceprops]
	\item $\timeDynConstruct{\abs{\intervalI}} = \Om{\abs{\intervalI} \wordpack}$.
		\label{itemDynLCEConstLowerBound}
	\end{LCEproperty}
}%

We use the expressions 
$\timeDynLCE{\abs{\textX}+\abs{\textY}}$ and $\timeDynMerge{\abs{\textX}+\abs{\textY}}$ to denote the time for querying and the time for merging two such data structures built on two given strings $\textX$ and $\textY$, respectively.
Querying two \acp{DynLCE} for a length~$\ell$ 
is faster than the word-packed character comparison
iff $\ell = \Om{ \timeDynLCE{\ell} \lg n / \lg \sigma}$.
Hence, we obtain the following property:

\ConstraintBox{%
	\begin{LCEproperty}[resume*=lceprops]
	\item A \DynLCE{} on a text smaller than $\gapLCE := \Ot{\timeDynLCE{\gapLCE} \lg n / \lg \sigma}$ is always slower than
		the word-packed character comparison.
		\label{itemInterallGap}
	\end{LCEproperty}
}%

In the following, we build \acp{DynLCE} on substrings of the text.
Each interval of the text that is covered by a \DynLCE{} is called an \intWort{\LCEI{}}.
The \LCEI{}s are maintained in a self-balancing binary search tree~\LCEITree{} of size $\Oh{m}$.
The tree~\LCEITree{} stores the starting and the ending positions of each \LCEI{}, and uses the starting positions as keys to answer the queries
\begin{itemize}
	\item whether a position is covered  by a \DynLCE{}, and
	\item where the next text position starts that is covered by a \DynLCE{},
\end{itemize}
in \Oh{\lg m} time.
Additionally, each \LCEI{} is assigned to one \DynLCE{} data structure (a \DynLCE{} can be assigned to multiple \LCEI{}s) such that~\LCEITree{}
 can not only retrieve the next position covered by a \DynLCE{}, but actually return a \DynLCE{} that covers that position.
The \DynLCE{} is retrieved by augmenting an \LCEI{}~$\intervalI$ with
a pointer to
its \DynLCE{} data structure~$M$, and with
an integer~$i$ such that 
$T[M.\memtext \cap [i..i+\abs{\intervalI}-1]] = T[\intervalI]$ (since $M$ could be built on a text interval~$M.\memtext \not= \intervalI$ that contains an occurrence of~$T[\intervalI]$).

\ifthenelse{\boolean{withAlgo}}{%
\begin{algorithm}[t]
\DontPrintSemicolon{} %
Let $\intervalI.\memref$ return the resp. \DynLCE{} of a \LCEI{} $\intervalI$. \;
\Function(\Comment*[f]{called by \cref{algoSparseSuf} to find the insertion position for $\hat{p}$}){\lce}{%
	\Input{text positions $p,p' \in \Pos$}
	$\intervalI \gets \LCEITree.\predecessor[p]$ and
	$\intervalJ \gets \LCEITree.\predecessor[p']$ 
	\Comment*[r]{$\intervalI = \argmax \menge{\ibeg{\intervalI} \mid \intervalI \in \LCEITree \wedge \ibeg{\intervalI} < p}$} 
	\lIf{$\iend{\intervalI} > p$ and $\iend{\intervalJ} > p'$}{%
		$\ell \gets \lce[\intervalI.\memref,\intervalJ.\memref,p,p']$
}
\Else{%
	$\intervalI \gets \LCEITree.\successor[p]$ and
	$\intervalJ \gets \LCEITree.\successor[p']$ 
	\Comment*[r]{$\intervalI = \argmin \menge{\ibeg{\intervalI} \mid \intervalI \in \LCEITree \wedge \ibeg{\intervalI} > p}$} 
	$\ell \gets {\tt naivecmp}(T,p,p',\max\tuple{\ibeg{\intervalI}-p,\ibeg{\intervalJ}-p'})$ \Comment*[f]{compare naively until reaching positions covered by \LCEI{}s}
}
	$p \gets p+\ell+1$ and
	$p' \gets p'+\ell+1$ \;
	\lIf(\Comment*[f]{Found mismatch}){$T[p] \not= T[p']$}{%
		\Return{$\ell$}
	}
	\Return{$\ell + \lce[p,p']$} \Comment*{recurse since we reached either the beginning or the end of an \LCEI{} covered by $\LCEITree$}
}
\caption{Hybrid LCE Algorithm -- Locating Process}
\label{algoHybridLCE}
\end{algorithm}
}{}%

\ifthenelse{\boolean{withAlgo}}{%
\begin{algorithm}[t]
\DontPrintSemicolon %
$S \gets \SAVL(\Suf(\Pos))$ \Comment*{\SAVLT{}}
\Function(\Comment*[f]{online query with a text position $\hat{p}$}){add}{%
	\Input{text position $\hat{p} \in \Int{1}{n} \setminus \Pos$}
	$v \gets S.{\tt root}$ \Comment*{locate insertion position of $T[\hat{p}..]$ in $S$}
	\lWhile(\Comment*[f]{perform on each depth an LCE query with a suffix, use \cref{algoHybridLCE}}){$v$ is not a leaf}{%
		$v \gets S.{\tt locateChild}(v, \hat{p}, \lce)$ 
	}
	$\bar{p} \gets \mlcparg{\hat{p}}$ and
	$\ell \gets {\tt lce}(\bar{p}, \hat{p})$ \Comment*{already computed above}
	$S$.{\tt insert}$(v,\hat{p})$ \Comment*{Add the suffix $T[\hat{p}..]$ to~$S$} 
	\lIf(\Comment*[f]{\Cref{itemIntervallength}}){$\ell < 2\gapLCE$}{%
		\Return{}
	}
	$A \gets \menge{\intervalI \in \LCEITree \mid \intervalI \cap ([\hat{p}-\gapLCE..\hat{p}+\gapLCE+\ell-1] \cup [\bar{p}-\gapLCE..\bar{p}+\ell+\gapLCE-1]) \neq \emptyset}$ \;
	$U \gets \bigcup_{\intervalI \in A} \intervalI$ \Comment*{union over the intervals that are interesting for merging}
	\For{each interval $\intervalI \in ([\hat{p}..\hat{p}+\ell-1] \cup [\bar{p}..\bar{p}+\ell-1]) \setminus U$}{%
		either assign $\intervalI.\memref$ to a \DynLCEInst{} (\Cref{ruleReference} or~\ref{ruleIntersect}) or set $\intervalI.\memref \gets {\DynLCEInst[\substrI{T}{\intervalI}]}$  (\Cref{ruleCreate}) \;
		$\LCEITree$.{\tt insert}($\intervalI$) and 
		$A$.{\tt insert}($\intervalI$) \;
	}
	sort $A$ ascendingly by the starting positions of its intervals \;
$\intervalI := A.{\tt pop}()$ \Comment*{use $A$ like a queue} 
	\While(\Comment*[f]{enforce~\cref{itemIntervalgap}}){$A$ is not empty}{%
		$\intervalJ \gets A.{\tt pop}()$ \;
		\lIf(\Comment*[f]{only if $\intervalI \subset [\hat{p}..\hat{p}+\ell-1], \intervalJ \subset [\bar{p}..\bar{p}+\ell-1]$ or vice versa}){$\iend{\intervalI}+1 \not= \ibeg{\intervalJ}$}{%
			{\bf continue} 
		}
			$\intervalI' \gets \intervalI \cup \intervalJ$ \Comment*[r]{merge both intervals}
			$\intervalI'.\memref \gets {\tt merge}(\intervalI.\memref, \intervalJ.\memref)$ \Comment*[r]{apply \cref{ruleMerge}} 
			$\LCEITree.{\tt delete}(\intervalI)$ and $\LCEITree.{\tt delete}(\intervalJ)$ \;
			$\LCEITree.{\tt insert}(\intervalI')$ \;
			$\intervalI \gets \intervalI'$ \;
	}
	$\Pos \gets \Pos \cup \menge{\hat{p}}$ \;
}
\caption{Sparse Suffix Sorting}
\label{algoSparseSuf}
\end{algorithm}
}{}%

Given a new position~$\hat{p} \not\in \Pos$ with $1 \le \hat{p} \le \abs{T}$,
updating $\SAVL(\Suf(\Pos))$ to $\SAVL(\Suf(\Pos \cup \{ \hat{p} \}))$
involves two parts: first \emph{locating} the insertion node for $\hat{p}$ in $\SAVL(\Suf(\Pos))$, and then \emph{updating} the set of \LCEI{}s.

\block{Locating} 
The insertion operation performs an LCE computation for each node encountered in~$\SAVL(\Suf(\Pos))$ while locating the insertion point of $\hat{p}$. 
Suppose that the task is to compare the suffixes $T[i..]$ and $T[j..]$ for two text positions~$i$ and~$j$ with~$1 \le i,j \le \abs{T}$. 
We perform the following steps to compute $\lce[i,j]$:
\begin{enumerate}[(1)]
	\item Check whether the positions $i$ and $j$ are contained in an \LCEI{}, in \Oh{\lg m} time with the search tree~\LCEITree{}.
		\begin{itemize}
			\item If both positions are covered by \LCEI{}s, 
				then query the respective \DynLCE{}s for the length~$\ell$ of the LCE starting at~$i$ and~$j$.
			Increment~$i$ and~$j$ by~$\ell$.
			Return the number of compared characters on finding a mismatch while computing the LCE.

			\begin{minipage}{0.7\linewidth}
				\item Otherwise (if~$i$ or~$j$ are not contained in an \LCEI{}), 
					find the smallest length~$\ell$ 
					such that~$i+\ell$ and~$j+\ell$ are covered by \LCEI{}s.
					Increment $i$ and $j$ by $\ell$, and naively compare $\ell$ characters.
					Return the number of compared characters on a mismatch. 
			\end{minipage}
			\begin{minipage}{0.3\linewidth}
	\centering{%
		\Bild{lceswitch}
	}%
			\end{minipage}
		\end{itemize}
	\item Return the total number of matched positions if a mismatch is found in (1). 
		Otherwise, repeat the above check again (with the incremented values of~$i$ and~$j$).
\end{enumerate}
\ifthenelse{\boolean{withAlgo}}{The steps are additionally listed in \cref{algoHybridLCE}.}{}
After locating the insertion point of $\hat{p}$ in $\SAVL(\Suf(\Pos))$, 
we obtain 
$\bar{p} := \mlcparg{\hat{p}}$ and
$\ell := \mlcp{\hat{p}}$
as a byproduct, where
$\mlcparg{p} := \argmax_{p' \in \Pos, p \neq p'} \lcp(T[p..], T[p'..])$ 
and $\mlcp{p} := \lcp(T[p..], T[\mlcparg{p}..])$
for each text position~$p$ with $1 \le p \le \abs{T}$.
We insert~$\hat{p}$ into $\SAVL(\Suf(\Pos))$, and use the position~$\bar{p}$ and the length~$\ell$ to update the \LCEI{}s.

\block{Updating}
The \LCEI{}s are updated dynamically, subject to the following properties (see \cref{figConstraints}):

\ConstraintBox{%
	\begin{LCEproperty}[resume*=lceprops]
	\item The length of each \LCEI{} is at least $\gapLCE$ (defined in \cref{itemInterallGap}). \label{itemIntervallength}
	\item For every $p \in \Pos$, the interval $[p..p+\mlcp{p}-1]$ is covered by an \LCEI{},
	  \emph{except at most} $\gapLCE$ positions at its left and right ends. \label{itemIntervalcover} 
  \item There is a gap of at least $\gapLCE$ positions between every pair of \LCEI{}s. \label{itemIntervalgap}
\end{LCEproperty}
}%

\begin{figure}[t]
	\centering{%
		\Bild{lceconstraints}
	}
	\caption{Sketch of two \LCEI{}s with \cref{itemIntervallength,itemIntervalcover,itemIntervalgap}.
	}
	\label{figConstraints}
\end{figure}

\begin{figure}[t]
	\centerline{%
		\Bild{algorules}
	}%
	\caption{%
		Application of \namecrefs{ruleReference}~\ref{ruleReference}~to~\ref{ruleMerge} for preserving the properties.
		The interval $\intervalI := [\hat{p}+i..\hat{p}+j]$ is not yet covered by an \LCEI{}, but is contained in 
		$[\hat{p}..\hat{p}+\ell-1]$ --- a conflict with \cref{itemIntervalcover}. 
		The conflict is resolved based on the LCE intervals covering the positions of $\intervalJ := [\bar{p}+i..\bar{p}+j]$.
		The intervals with the\C{~blue} horizontal lines are the LCE intervals, and the intervals with the diagonal\C{~red} lines
		are the intervals of $[\hat{p}..\hat{p}+\ell-1] \setminus U$.
		Here, $\intervalJ$ intersects with an \LCEI{} $\intervalK$. This case is treated in \cref{ruleIntersect}.
	}
	\label{figAlgoRules}
\end{figure}

After adding~$\hat{p}$ to $\Pos$, we perform the following instructions to satisfy the properties.
If $\ell \le 2\gapLCE$, we do nothing, because all properties are still valid (in particular, \cref{itemIntervalcover} still holds).
Otherwise, we need to restore \cref{itemIntervalcover}.  
There are at most two positions in $\Pos$ that possibly invalidate \cref{itemIntervalcover} after adding $\hat{p}$, and these are~$\hat{p}$ and~$\bar{p}$
(otherwise, by transitivity, we would have created a longer \LCEI{} previously).

We introduce an algorithm that does not restore \cref{itemIntervalcover} directly, but first ensures that

\ConstraintBox{%
	\begin{LCEproperty}
\item[\Cref*{itemIntervalcover}':] the intervals~$[\hat{p}..\hat{p}+\ell-1]$ and $[\bar{p}..\bar{p}+\ell-1]$ are covered by one or multiple \LCEI{}s.
	\end{LCEproperty}
}%

In a later step, we restore \cref{itemIntervalcover} by merging \LCEI{}s that are in conflict with \cref{itemIntervalgap}, and thus restore all properties:
Let $U \subset [1..n]$ be the set of all positions that belong to an \LCEI{}.
The set $[\hat{p}..\hat{p}+\ell-1] \setminus U$ can be represented as a set of disjoint intervals of maximal length.
For each interval $\intervalI := [\hat{p}+i..\hat{p}+j] \subset [\hat{p}..\hat{p}+\ell-1]$ of that set,
apply the following rules with $\intervalJ := [\bar{p}+i..\bar{p}+j]$ 
(for integers~$i,j$ with $0 \le i \le j \le \ell-1$, see \cref{figAlgoRules})
sequentially:

\RuleBox{%
\begin{Rules}[series=lcerules]
	\item \label{ruleReference}	If $\intervalJ$ is a sub-interval of an \LCEI{} $\intervalK$, 
		then declare $\intervalI$ as an \LCEI{} and let it refer to the \DynLCE{} of $\intervalK$.
	\item \label{ruleIntersect}
		If $\intervalJ$ intersects with an \LCEI{} $\intervalK$, 
		enlarge the \DynLCE{} on $T[\intervalK]$ to cover $T[\intervalK \cup \intervalJ]$
		(create a \DynLCE{} on $T[\intervalJ \setminus \intervalK]$ and merge it with the \DynLCE{} on $T[\intervalK]$).
		Apply \cref{ruleReference}.
	\item \label{ruleCreate} 
		Otherwise (there is no \LCEI{} \intervalK{} with $\intervalJ \cap \intervalK \neq \emptyset$), create \DynLCEInst[\intervalJ], and make $\intervalI$ and $\intervalJ$ to \LCEI{}s referring to \DynLCEInst[\intervalJ].
\end{Rules}
}%

We satisfy \Cref*{itemIntervalcover}' on $[\bar{p}..\bar{p}+\ell-1]$ by 
updating $U$, 
computing the set of disjoint intervals $[\bar{p}..\bar{p}+\ell-1] \setminus U$, and applying the same rules on it.
However, \cref{ruleReference} or \cref{ruleCreate} can create \LCEI{}s shorter than~$\gapLCE$, violating \cref{itemIntervallength}.
By construction, such a short \LCEI{} is adjacent to another \LCEI{} (the rules compute a cover of $[\hat{p}..\hat{p}+\ell-1]$ and $[\bar{p}..\bar{p}+\ell-1]$ with \LCEI{}s).
This means that we can restore \cref{itemIntervallength} by restoring \cref{itemIntervalgap}.
We do that by applying the following rule subsequently to \cref{ruleCreate}:

\RuleBox{%
\begin{Rules}[resume*=lcerules]
	\item  \label{ruleMerge} 
		Merge a newly created or extended \LCEI{} violating \cref{itemIntervalgap} with its nearest \LCEI{} (ties can be broken arbitrarily).
		Merge those \LCEI{}s and their \DynLCE{}s.
\end{Rules}
}%

\Cref{ruleMerge} also restores \cref{itemIntervalcover} (since \cref*{itemIntervalcover}' and \cref{itemIntervalgap} hold).
After applying all rules, we have introduced at most two%
\footnote{The number of new \LCEI{}s could be indeed two: Although~$\bar{p} \in \Pos$, we would not have created an \LCEI{} covering $[\bar{p}+\gapLCE..\bar{p}+\bar{\ell}-1-\gapLCE]$ if $\mlcp{\bar{p}}$ was smaller than $\gapLCE$ at the time when we inserted~$\bar{p}$ in $\Pos$ with $\bar{\ell} := \mlcp{\bar{p}}$.}
new \LCEI{}s that cover the intervals~$[\hat{p}+\gapLCE..\hat{p}+\ell-1-\gapLCE]$
and~$[\bar{p}+\gapLCE..\bar{p}+\ell-1-\gapLCE]$, respectively, to satisfy \cref{itemIntervallength,itemIntervalcover,itemIntervalgap}. 
\ifthenelse{\boolean{withAlgo}}{\Cref{algoSparseSuf} gives a pseudo code of the algorithm.}{}
The running time of this algorithm is analyzed in the following \lcnamecref{thmAVLAbstractExtraSpace}:

\begin{lemma}\label{thmAVLAbstractExtraSpace}
	Given a text $T$ of length $n$ and a set of $m$ arbitrary positions~$\Pos$ in~$\textT$,
	the \SAVLT{} $\SAVL(\Suf(\Pos))$ with the suffixes of~$T$ starting at the positions~$\Pos$ can be computed deterministically in
  $\Oh{\timeDynConstruct{\abs{\Comlcp}} + \timeDynLCE{\abs{\Comlcp}} m \lg m + \timeDynMerge{\abs{\Comlcp}} m}$ time.
\end{lemma}
\begin{proof}
	The analysis is split into managing the \DynLCE{}s, and the LCE queries:
	\begin{itemize}
		\item 
			We build \DynLCE{}s on substrings covering at most $\abs{\Comlcp}$ characters of the text,
			taking at most $\timeDynConstruct{\abs{\Comlcp}}$~time for constructing all \DynLCE{}s.
			During the construction of the \DynLCE{}s we spend $\Oh{\abs{\Comlcp}\wordpack} = \Oh{\timeDynConstruct{\abs{\Comlcp}}}$ time on naive searches due to \cref{itemDynLCEConstLowerBound}.
		\item 
			The number of merge operations on the \LCEI{}s 
			is upper bounded by $2m$ in total, since we create at most two new \LCEI{}s for every position in $\Pos$.
			In total, we spend at most $2 \timeDynMerge{\abs{\Comlcp}} m$ time for the merging in total.
		\item 
			The algorithm performs $\Oh{m \lg m}$ LCE queries.
			LCE queries involve either (a) naive character comparisons or (b) querying a \DynLCE{}.
			Given that we have $\delta < 2m$ \LCEI{}s, 
			we switch between both techniques at most $4\delta+1$ times for an LCE query.
			\begin{enumerate}[(a)]
				\item	On the one hand, the overall time for the naive character comparisons is bounded by $\Oh{\timeDynConstruct{\abs{\Comlcp}} + \timeDynLCE{\abs{\Comlcp}} m \lg m}$:
					\begin{itemize}
						\item By \cref{itemIntervallength}, all substrings $T[p..p+\mlcp{p}-1]$ are covered by an \LCEI{}, except at most at $2\gapLCE$ positions.
							This means that all substrings that are not covered by an \LCEI{}, but have been subject to a naive character comparison, are shorter than $2\gapLCE$.
							For a naive character comparison with one of those substrings,
							we spend at most $\Oh{\gapLCE m \lg m \wordpack} = \Oh{\timeDynLCE{\gapLCE} m \lg m} = \Oh{\timeDynLCE{\abs{\Comlcp}} m \lg m}$ time.
							In the case that $\gapLCE > \abs{\Comlcp}$, we do not create any \LCEI{}, and spend 
							$\Oh{\abs{\Comlcp}\wordpack + m \lg m} = \Oh{ \timeDynConstruct{\abs{\Comlcp}} + m \lg m}$ overall time due to \cref{itemDynLCEConstLowerBound}.
						\item If we compare more than $\gapLCE$ characters for an LCE query, we create at most two \LCEI{}s, 
							possibly involving the construction of \DynLCE{}s on the compared substrings.
		The construction of a \DynLCE{} on an interval~\intervalI{} takes $\timeDynConstruct{\abs{\intervalI}} = \Om{\abs{\intervalI} \wordpack}$ time
		due to \cref{itemDynLCEConstLowerBound}.
					\end{itemize}
				\item	On the other hand, querying the \DynLCE{}s take at most $\Oh{\timeDynLCE{\abs{\Comlcp}} m \lg m}$ overall time.
					Suppose that we look up $d < \delta$ \LCEI{}s for an LCE query.
					Since we look up an \LCEI{} in $\Oh{\lg m}$ time with \LCEITree{},
					we spend $\Oh{d \lg m}$ time on the lookups during this LCE query.
					However, we subsequently merge all $d$~looked-up \LCEI{}s, reducing the number of \LCEI{}s~$\delta$ by $d-1$.
					Consequently, we perform a look-up of an \LCEI{} at most $2m$ times in total. \qedhere{}
			\end{enumerate}
	\end{itemize}
\end{proof}
The last step is to compute $\SSA[] := \SSA[T,\Pos]$ and $\SLCP[] := \SLCP[T,\Pos]$ from $\SAVL(\Suf(\Pos))$ by traversing $\SAVL(\Suf(\Pos))$ and performing LCE queries on the already computed \DynLCE{}s:
The $\SAVL(\Suf(\Pos))$ is a binary search tree storing all elements of~$\Suf(\Pos)$ in lexicographically sorted order.
This means that we can compute $\SSA[]$ with an in-order traversal of~$\SAVL(\Suf(\Pos))$.
Afterwards, we compute $\SLCP[][i] = \lce[ {\SSA[]}{[i]}, {\SSA[]}{[i-1]} ]$.
Given that the text positions $\Int{\SSA[][i]}{\SSA[][i]+\SLCP[][i]-1]}$ and
\Int{\SSA[][i-1]}{\SSA[][i-1]+\SLCP[][i]-1]} are not covered by an LCE interval, 
$\SLCP[][i] = \Oh{\gapLCE}$ due to \cref{itemIntervallength}, and we spend at most $\Oh{\gapLCE \wordpack}$ time on computing
$\SLCP[][i]$ by naive character comparisons.
Otherwise, we spend $\Oh{\gapLCE\wordpack + \timeDynLCE{{\SLCP[][i]}}} = \Oh{\timeDynLCE{{\SLCP[][i]}}}$ time by querying a \emph{single} \DynLCE{} due to \cref{itemIntervalcover}.
Querying whether both text intervals are covered by an \DynLCE{} costs \Oh{\lg m} time with \LCEITree{}.
In total, we can compute $\SLCP[][i]$ for each integer~$i$ with $2 \le i \le m$ 
in $\Oh{\timeDynLCE{\abs{\Comlcp}} + m \lg m}$ time, since $\Oh{\gapLCE \wordpack} = \Oh{\timeDynLCE{g}}$ due to \cref{itemInterallGap}. 
The following \lcnamecref{thmAbstractExtraSpace} of \cref{thmAVLAbstractExtraSpace} summarizes the achievements of this section:

\begin{corollary}\label{thmAbstractExtraSpace}
	Given a text $T$ of length $n$ that is loaded into RAM,
  the SSA and SLCP of $T$ for a set of $m$ arbitrary positions can be computed deterministically in 
  $\Oh{\timeDynConstruct{\abs{\Comlcp}} + \timeDynLCE{\abs{\Comlcp}} m \lg m + \timeDynMerge{\abs{\Comlcp}} m}$ time.
  We need $\Oh{m}$ words of space, and space to store $\DynLCE{}$ on $\abs{\Comlcp}$ positions.
\end{corollary}

\subsection{Sparse Suffix Sorting with HSP Trees}
\begin{wrapfigure}{r}{20em}
	\centering{%
		\Bild[scale=0.7]{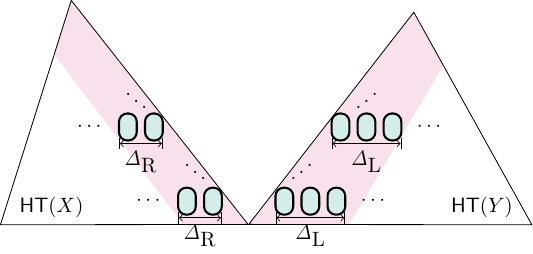}
	}%
\end{wrapfigure}
We show that the HSP tree is a \DynLCE{} data structure.
Remembering that the algorithm from~\cref{secAbstAlgo} depends on the merging operation of \DynLCE{}, 
we now introduce the merging of HSP trees. %
	A naive way to merge two HSP trees \hInst[\textX] and \hInst[\textY] is to
	build \hInst[\textX \textY] completely from scratch.
	Since only the fragile nodes of \hInst[\textX] and \hInst[\textY] can change when merging both trees,
	a more sophisticated approach would reparse only the fragile nodes of both trees.
	Remembering the properties studied in \cref{secChangingNodes}, we show such an approach in the following lemma:

\begin{lemma}\label{lemmaTreeCombine}
	Merging $\hInst[\textX]$ and $\hInst[\textY]$ of two strings $\textX,\textY \in \Sigma^*$ into $\hInst[\textX\textY]$
	takes \Oh{\lookupTime \tuple{\rcontext \lg \abs{\textX} + \lcontext \lg \abs{\textY}}} time.
\end{lemma}
\begin{proof}
	First assume that $\hInst[\textX]$ and $\hInst[\textY]$ only contain \Type{2} nodes.
	In this case, we examine the rightmost nodes of $\hInst[\textX]$ and the leftmost nodes of $\hInst[\textY]$
	from the bottom up to the root:
	At each height $h$, we merge the nodes $\espnode{\textX}_{h}$ and $\espnode{\textY}_{h}$ to $\espnode{\textX\textY}_{h}$
	by reparsing the $\rcontext$ rightmost nodes of $\espnode{\textX}_h$, 
	and the $\lcontext$ leftmost nodes of $\espnode{\textY}_h$. 
	By doing so, we reparse all nodes of \hInst[\textX] (resp.\ \hInst[\textY]) whose local surrounding on the right (resp.\ left) side does not exist.
	Nodes of \hInst[\textX] (resp.\ \hInst[\textY]) that have a local surrounding on the right (resp.\ left) side are not changed by the parsing. 
	In total, we spend \Oh{\lookupTime\tuple{\rcontext \lg \abs{\textX} + \lcontext \lg \abs{\textY}}} time on merging two trees consisting of \Type{2} nodes.

\def\windowpagestuff{\flushright\Bild[width=0.9\textwidth]{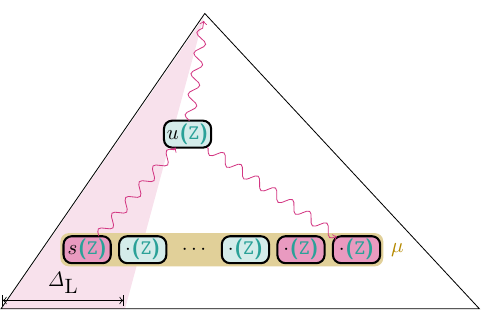}}
\opencutright
\vspace{1em}
\begin{cutout}{1}{0.7\textwidth}{10pt}{7}
	Next, we allow repeating nodes.
	\cref{lemmaSurroundedFragileHSPAppending} shows that there are no fragile surrounded nodes in $\hInst[\textX]$ that need to be fixed.
	The remaining problem is to find and recompute the surrounded nodes in $\hInst[\textY]$ whose names change on merging both trees.
	The lowest of these nodes belong to a repeating meta-block due to \cref{lemTypeTwoStable} and \cref{lemRepeatingFragileSurrounded}.
	To find this meta-block, 
	we adapt the strategy of the first paragraph considering only \Type{2} meta-blocks.
	On each height~$h$, we reparse the $\lcontext$ leftmost nodes of $\espnode{\textY}_h$.
	If the rightmost of these $\lcontext$ nodes are contained in a repeating meta-block~$\mu$ that does not end within those $\lcontext$ leftmost nodes,
	chances are that the names of some nodes in~$\mu$ change. 
	Due to \cref{lemRepeatingFragileSurrounded}, it is sufficient to reparse the two rightmost nodes of~$\mu$.
	This is done as follows:
  \end{cutout}
	\begin{enumerate}
		\item Take the leftmost repetitive node~$s$ of~$\mu$ (which exists due to \cref{corRepeatingPrefixBlock}, and is one of the $\lcontext+1$ leftmost nodes on height~$h$).
		\item Given that $s$ has the surname~\Name{Z}, climb up the tree to find the highest ancestor~$u$ with surname~\Name{Z}.
			The ancestor~$u$ is the lowest common ancestor of~$s$ and the rightmost repetitive node of~$\mu$.
			\item Walk down from~$u$ to the rightmost nodes of~$\mu$.
			\item Reparse $\mu$'s two rightmost nodes.
			\item Reparse all ancestors of these two nodes that are surrounded.
			\item Check whether the reparsed ancestors invalidate the parsing of their meta-blocks; 
				fix the parsing for those meta-blocks recursively.
	\end{enumerate}
	Climbing up to find~$u$ and walking down to the rightmost nodes of~$\mu$ takes $\Oh{\lookupTime \lg\abs{\mu}} = \Oh{\lookupTime \lg (n/2^h)}$ time,
	reparsing the surrounded ancestor nodes of the two rightmost nodes of~$\mu$ takes $\Oh{\lookupTime \lg (n/2^h)}$ time.
	Given that the highest nodes of this reparsing are on a height~$h' > h$, 
	\cref{lemmaSurroundedFragileHSP} states that up to the height~$h'+1$, there is no need to reparse a fragile surrounded node
	(we follow the paths of fragile nodes as depicted in \cref{figHspProof}).
	Given that there are $\mu_1,\ldots,\mu_k$ such meta-blocks (for which we apply Steps~1 to~6),
	we have $\Oh{\lookupTime \sum_{i=1}^k \lg\abs{\mu_i}} = \Oh{\lookupTime \lg n}$ due to $\sum_{i=1}^k \lg \mu_i \le \lg n$.
	Hence, we spend \Oh{(\lcontext + \rcontext) \lookupTime \lg \abs{\textY}} time overall.
\end{proof}

The following theorem combines the results of \cref{thmAbstractExtraSpace} and \cref{lemmaTreeCombine}.

\begin{theorem}\label{thmExtraSpace}
	Given a text $T$ of length $n$ and a set of $m$ text positions $\Pos$,
$\SSA[T,\Pos]$ and $\SLCP[T,\Pos]$ can be computed in 
  $\Oh{\abs{\Comlcp} (\lg^* n + \lookupTime) +  m \lg m \lg n \lg^* n}$ time.
  We need $\Oh{n+m}$ words of space.
\end{theorem}
\begin{proof}
We have 
\begin{itemize}
	\item $\timeDynConstruct{\abs{\Comlcp}} = \Oh{\abs{\Comlcp} \tuple{\lg^*n + \lookupTime }}$ due to \cref{lemmaBuildHSP}, 
	\item $\timeDynLCE{\abs{\Comlcp}} = \Oh{\lg^*n \lg n}$ due to \cref{lemmaLCE}, and
	\item $\timeDynMerge{\abs{\Comlcp}} = \Oh{\lookupTime \lg n \lg^* n}$ due to \cref{lemmaTreeCombine}.
\end{itemize}
Actually, the time cost for merging is already upper bounded by the cost for the tree creation.
To see this, let $\delta \le m$ be the number of \LCEI{}s.
Since each HSP tree covers at least $\gapLCE$ characters, $\delta \gapLCE $ is at most $\abs{\Comlcp}$, and we obtain
$\delta \timeDynMerge{\abs{\Comlcp}} = \Oh{\abs{\Comlcp} \timeDynMerge{\abs{\Comlcp}} /\gapLCE } = \Oh{\abs{\Comlcp} \lookupTime}$ overall time for merging,
where $\gapLCE = \Ot{\timeDynLCE{\abs{\Comlcp}} \lg n / \lg \sigma } = \Ot{\lg^*n \lg^2 n / \lg \sigma}$.
Plugging the times $\timeDynConstruct{\abs{\Comlcp}}$, $\timeDynLCE{\abs{\Comlcp}}$, and the refined analysis of the merging time cost in \cref{thmAbstractExtraSpace} yields the claimed time bounds.
\end{proof}

\section{Sparse Suffix Sorting in Text Space}\label{secSparseSuffixSortingTextSpace}
Remembering the outline in the introduction, the key idea to solve the limited space problem is storing \DynLCE{}s in text space.
Taking two \LCEI{}s of the text containing the same substring, 
we free up the space of \emph{one} part while marking the \emph{other} part as a reference.
The freed space could be used to store an HSP tree whose leaves refer to substrings of the other \LCEI{}.
By doing so, we would use the text space for storing the HSP trees, while using only \Oh{m} additional words for storing $\SAVL(\Suf(\Pos))$ and the search tree~\LCEITree{} of the \LCEI{}s.
However, an HSP tree built on a string of length~$n$ takes \Oh{n \lg n} bits, while the string itself provides only $n\lg \sigma$ bits.
Our solution is to truncate the HSP tree at a fixed height~$\eta$, discarding the nodes in the lower part.
The truncated version~\thInst[\textY] stores just the upper part, while its new leaves refer to (possibly long) substrings of $\textY$.
The resulting tree is called the \intWort{\thTreeF{}~($\text{\thTree{}}_\eta$)}, whose definition follows:

\begin{figure}[t]
	\centering{%
		\Bild{etatruncatedtree}
	}
	\caption{The \thTreeF{} \thInst[\textY] of the substring~$\textY$ defined in \cref{figClassicESP} with $\eta = 2$.
	Like in \cref{figNoInPlaceForESP}, the lower nodes are grayed out.
An $\eta$-node is a leaf in \thInst[\textY], and has a generated substring with a length between four and nine.
}
	\label{figEtatruncatedtree}
\end{figure}

\begin{figure}[t]
	\centering{%
		\Bild[scale=0.8]{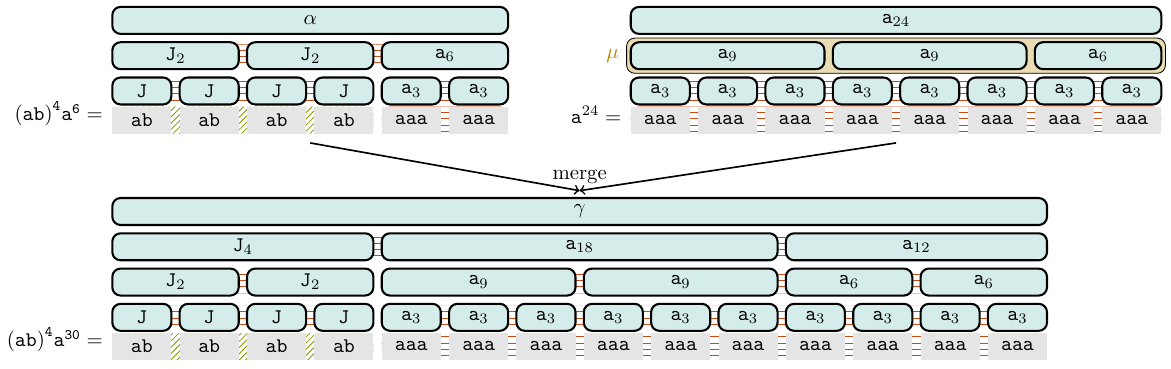}
	}
	\caption{Merging $\hInst[{\tt{(ab)}^4a^6}]$ with $\hInst[{\tt a^{24}}]$ (both at the top) to $\hInst[{\tt {(ab)}^4a^{30}}]$ (bottom tree).
		Reparsing the repeating meta-block~$\mu$ on height one of the right tree is done by rearranging $\mu$'s two rightmost nodes.
			}
\label{figHspmerge}
\end{figure}

\subsection{Truncated HSP Trees}\label{secTSESPImpl}
We define a height $\eta$ and delete all nodes at height less than $\eta$, which we call \intWort{lower nodes}.
A node higher than $\eta$ is called an \intWort{upper node}.
The nodes at height $\eta$ form the new leaves and are called \intWort{$\eta$-nodes}.
Similar to the former leaves, their names are pointers to their generated substrings appearing in $\textY$.
Remembering that each internal node has two or three children, 
an $\eta$-node generates a string of length at least $2^\eta$ and at most $3^\eta$.
The maximum number of nodes in an \thTreeF{} of a string of length $n$ is $\numNodes{n}$. %
\Cref{figEtatruncatedtree} shows an example with $\eta = 2$.

Similar to leaves in untruncated HSP trees, we use the generated substring $\textX$ of an $\eta$-node~$v$ for storing and looking up $v$:
While the leaves of the HSP tree have a generated substring of constant size (two or three characters), 
the generated substring of an~$\eta$-node can be as long as $3^\eta$.
Storing such long strings in a binary search tree representing the reverse dictionary of~$\dic$ is inefficient; 
it would need $\Oh{\ell \lg \sigma}$ time for a lookup or insertion of a key of length~$\ell$.
Instead, we want a dictionary data structure storing $\Oh{\abs{\textY}}$ elements in $\Oh{\abs{\textY}}$ words of space%
\footnote{The data structure is not necessarily stored in consecutive space like an array.},
supporting lookup and insert in $\Oh{\lookupTime + \ell\wordpack}$ time for a key of length $\ell$.
For instance, \citeauthor{betterSearching}'s data structure~\cite{betterSearching} with word-packing supports the desired time and space bounds with $\lookupTime = \Oh{\lookupTimeBest}$.

\begin{lemma}\label{lemmaBuildTSESP}
	We can build an \thTreeF{}~$\thInst[\textY]$ of a string $\textY$ of length $n$ in 
	$\Oh{n \tuple{\lg^*n + \eta\wordpack + \lookupTime/2^\eta }}$
time, using
$\Oh{3^\eta \lg^* n}$ words of working space.
The tree takes $\Oh{\numNodes{n}}$ words of space.
\end{lemma}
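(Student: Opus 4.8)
The plan is to emulate the bottom-up construction of \Cref{lemmaBuildESP}, but to halt at height $\eta$ and to process the text in a single left-to-right sweep so that only a bounded window of the soon-to-be-discarded lower nodes is ever resident. For each height $1 \le h \le \eta$ I would keep a buffer of the $\Oh{\lg^* n}$ most recently produced height-$h$ nodes---exactly the local surrounding (of width $\lcontext + \rcontext = \Oh{\lg^* n}$) that $\esp$ needs to fix the next block one level up. As soon as a buffer has accumulated enough context to commit a block, I emit the corresponding height-$(h+1)$ node, append it to the height-$(h+1)$ buffer, and release the consumed height-$h$ nodes. The height-$\eta$ nodes produced this way are the new leaves: each is looked up in (or inserted into) the persistent dictionary, keyed by its generated substring, and stored in the final tree, while everything below height $\eta$ is thrown away. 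Since level $h$ contains at most $n/2^h$ nodes, the persistent part keeps only the $\Oh{n/2^\eta}$ $\eta$-nodes and upper nodes, matching the claimed $\Oh{\numNodes{n}}$ tree space.

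For the working space, note that a height-$h$ node generates at most $3^h$ characters, so a buffer of $\Oh{\lg^* n}$ of them, together with the generated substrings I must retain to decide blocks by content and to form the $\eta$-node keys, occupies $\Oh{3^h \lg^* n}$ words; summing this geometric series over $h \le \eta$ is dominated by the top level and yields the claimed $\Oh{3^\eta \lg^* n}$ words.

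For the running time I would charge three kinds of work. First, the alphabet reduction applied to the $\le n/2^{h-1}$ symbols of level $h-1$ costs $\Oh{\tuple{n/2^{h-1}} \lg^* n}$, and summing this geometric series over $h \le \eta$ gives $\Oh{n \lg^* n}$. Second, so that the $\eta$-node boundaries---and hence the $\eta$-node names matched later in LCE queries---depend only on the text content, I determine block boundaries by comparing the generated substrings of neighboring nodes rather than by ad hoc integer identifiers; comparing each node against its neighbor reads every generated substring of a level a constant number of times, and since these substrings partition $\textT$ this is $\Oh{n \wordpack}$ per level, hence $\Oh{n \eta \wordpack}$ over all $\eta$ levels. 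Third, each of the $\Oh{n/2^\eta}$ $\eta$-nodes is looked up or inserted in $\Oh{\lookupTime}$ time while reading its key of length $\abs{\textS}$ costs $\Oh{\abs{\textS}\wordpack}$; as the keys partition $\textT$ their total length is $n$, contributing $\Oh{n \wordpack + \tuple{n/2^\eta}\lookupTime}$. Adding the three parts yields the stated $\Oh{n \tuple{\lg^* n + \eta \wordpack + \lookupTime/2^\eta}}$ bound.

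The main obstacle is reconciling the $\Oh{3^\eta \lg^* n}$ working-space ceiling with correctness: because the lower dictionary is discarded, I cannot afford globally unique integer names for the lower nodes, yet the parse must remain a valid ESP parse and must assign the same height-$\eta$ structure to equal substrings (otherwise \Cref{lemmaLCE} fails for two trees sharing a dictionary). The delicate point to verify is that every block decision at each level is a function only of a bounded local surrounding of the text \emph{content}, so that it can be taken from the resident window and recomputed identically wherever that content recurs; establishing this content-locality while keeping the per-level substring comparisons inside the $\Oh{n \eta \wordpack}$ budget is the crux of the argument.
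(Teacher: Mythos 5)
Your proposal follows essentially the same route as the paper's proof: a left-to-right sweep that keeps only a bounded window (the local surrounding of width $\Oh{\lg^* n}$) of temporary lower nodes, uses the generated substring of each lower node as its content-based name so that no persistent dictionary is needed below height~$\eta$, charges the naive substring comparisons at $\Oh{n \wordpack}$ per level for $\Oh{n\eta\wordpack}$ total, and accounts for the $\eta$-node dictionary operations as $\Oh{n\wordpack + \lookupTime\,\numNodes{n}}$. The ``delicate point'' you flag---that block decisions depend only on a bounded local window of text content---is exactly what the paper invokes by carrying over $\lcontext$ characters from the previous part to reconstruct meta-block boundaries and make the alphabet reduction reproduce the same blocks as a parse of the whole string, so your argument matches the paper's in substance as well as in its bounds.
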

\begin{proof}
Instead of building the HSP tree level by level, we compute the $\eta$-nodes one after another, from left to right.
We can split the parsing of the whole string into several parts.
Each part computes one $\eta$-node.

First assume that $\thInst[\textY]$ only contains \Type{2} nodes.
Then the name of an $\eta$-node~$v$ is determined by $v$'s local surrounding (as far as it exists) due to \cref{lemTypeTwoStable}.
This means that it is sufficient to keep $v$'s local surrounding at height~$\eta-1$, which we denote by $\textX_v$, in memory.
$\textX_v$ is a string of lower nodes.
To parse a string of lower nodes by HSP, we have to give each lower node a name.
Unfortunately, storing the names of all lower nodes in a dictionary would take too much space.
Instead, we create the name of a lower node temporarily by setting the name of a lower node to its generated substring.
This means that we cannot retrieve their names later.
Luckily, we only need the names of the lower nodes for constructing~$\textX_v$.
We construct~$\textX_v$ as follows:
Given that we parsed the local surrounding of~$v$ at height~$h$ ($0 \le h \le \eta-3$) with HSP,
we store the borders of the blocks on height~$h+1$ in an integer array 
such that we can access the name (i.e., the generated substring) of the $i$-th block on height~$h+1$.
With this integer array, we can parse the blocks on height~$h+1$ to obtain the blocks on height~$h+2$, 
whose borders are again stored in an integer array.
Having the borders of the blocks on height~$h+2$, we can remove the integer array on height~$h+1$.
The blocks on height~$\eta-1$ are the nodes of~$\textX_v$.

In the general case (when $\thInst[\textY]$ contains repeating nodes), 
it can happen that the name of a greedily parsed node (i.e., a repeating node or one of the $\lcontext$ leftmost nodes of a \Type{2} meta-block)
depends not necessarily on its local surrounding, 
but on the length of its repeating meta-block, its surname and its children (in case of a \Type{M} node).
This means that when computing~$\textX_v$ of an $\eta$-node~$v$, we additionally have to consider the case when nodes
in the local surrounding of~$v$ are contained in a meta-block~$\mu$ on height~$h < \eta$ that extends over the nodes in $v$'s surrounding at height~$h$.
It is sufficient to use a counting variable that tracks the position of the last block of~$\mu$ belonging to the subtree
of the preceding $\eta$-node of~$v$ (remember that the greedy parsing determines the blocks by an arithmetic progression).
Another necessity is to maintain the surnames of the lower nodes. 
In our approach, each array storing the borders of the blocks on the heights below~$\eta$ is
accompanied with two arrays. 
The first array stores the length of the prefix of the generated substring of each block~$\beta$ that is equal to $\beta$'s surname;
the second array stores the surname-length of each block.

\block{Working Space}
We compute~$v$ after computing~$\textX_v$.
To compute~$\textX_v$, we apply the HSP technique $(\eta-1)$-times on the generated substring of the nodes in~$\textX_v$.
Since the nodes of $\textX_v$ cover at most $3^\eta (\lcontext+\rcontext)$ characters,
we need $\Oh{3^\eta (\lcontext+\rcontext)}$ words of working space to maintain the integer arrays 
storing the borders of the blocks at two consecutive heights. 
To cope with the meta-blocks extending over the border of the subtrees of two $\eta$-nodes, 
we store the last position of each such meta-block belonging to the local surrounding of the previous $\eta$-node.
These positions take \Oh{\eta} words, since such a meta-block can exist on every height below~$\eta$.

\block{Time} 
The time bound $\Oh{n \lg^* n}$ for the repeated application of the alphabet reduction is the same as in \cref{lemmaBuildHSP}.
The new part is the construction of an~$\eta$-node by constructing~$\textX_v$:
To construct the lower nodes~$X_v$, we apply the HSP technique $(\eta-1)$-times on $\generated{v}$.
The HSP technique compares lower nodes by their generated substrings (instead of comparing by a name stored in $\dic$).
It always compares two adjacent lower nodes during the construction of~$X_v$.
To bound the number of comparisons of the lower nodes, we focus on all lower nodes on a fixed height~$h$ with~$1 \le h \le \eta-1$:
Since the sum of the lengths of the generated substrings of the lower nodes on height~$h$ is always~$n$,
the comparisons of the lower nodes on height $h$ take \Oh{n \wordpack} time, independent of the number of nodes on height~$h$.
Summing over all heights, these comparisons take $\Oh{n \eta \wordpack}$ time in total.
By the same argument,
maintaining the names of all $\eta$-nodes takes $\Oh{n\wordpack + \lookupEta \numNodes{n} }$ time.

A name is looked-up in $\Oh{\lookupUpper}$ time for an upper node.
Since the number of upper nodes is at most $\numNodes{n}$,
maintaining the names of the upper nodes takes $\Oh{\lookupUpper \numNodes{n}}$ time.
This time is subsumed by the lookup time for the $\eta$-nodes.

\block{Surnames}
Augmenting the (remaining) nodes of the \thTreeF{} with surnames cannot be done as trivially as in the standard HSP tree construction,
since a repetitive node can have a surname equal to the name of a lower node (remember that lower nodes are generated only temporarily, 
and hence are not maintained in the reverse dictionary).
To maintain the surnames pointing to lower nodes, we need to save the names of certain lower nodes in a supplementary reverse dictionary~$\dic'$ of~$\dic$.
This is only necessary when one of the remaining nodes (i.e., the upper nodes and the $\eta$-nodes) in the \thTreeF{} has a surname that is the name of a lower node.
If such a remaining node~$v$ is an upper node having a surname equal to the name of a lower node, the $\eta$-nodes in the subtree rooted at~$v$ have also the same surname.
Hence, the number of entries in~$\dic'$ is upper bounded by the number of $\eta$-nodes.
The dictionary~$\dic'$ is filled with the surnames of the children of all $\eta$-nodes, whose number is at most $3n/2^\eta$.
Filling or querying~$\dic'$ takes the same time as maintaining the $\eta$-nodes.
\end{proof}

Similar to the standard HSP trees, we can conduct LCE queries on two \thTreeF{}s in the following way:

\begin{lemma}\label{lemmaTruncatedLCE}
	Let $\textX$ and $\textY$ be two strings with $\abs{\textX}, \abs{\textY} \le n$.
	Given that $\thInst[\textX]$ and $\thInst[\textY]$ are built with the \emph{same} dictionary,
	and given two text positions $1 \le i_\textX \le \abs{\textX}, 1 \le i_\textY \le \abs{\textY}$,
	we can compute $\lcp(\textX[i_{\textX}..], \textY[i_{\textY}..])$ in 
	$\Oh{\lg^*n (\lg (\numNodes{n}) + 3^\eta \wordpack )}$ time
	using \Oh{\lg (\numNodes{n})} words of working space.
\end{lemma}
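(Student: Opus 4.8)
The plan is to mirror the proof of \Cref{lemmaLCE}, but to stop the top-down traversal at the new leaves, the $\eta$-nodes, and to finish each descending branch there by a naive character comparison. As before, I would traverse the two given trees (the truncated versions $\teInst[\textS]$ and $\teInst[\textT]$) simultaneously from their roots, tracing on each height the border that separates the (semi-)stable nodes from the fragile ones. Whenever the traversal meets two nodes of the same name, their generated substrings coincide and the match is settled in $\Oh{1}$ time; a semi-stable node is matched against a (semi-)stable node of the same surname, again in $\Oh{1}$ time, exactly as in \Cref{lemmaLCE}. Only at a fragile node, or at the node that will carry the eventual mismatch, do we have to descend into the children.

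The single new ingredient concerns the leaves. Since the tree is truncated at height $\eta$, an $\eta$-node generates a substring of length at most $3^\eta$ but cannot be decomposed further inside the tree. Hence, whenever a branch of the descent reaches height $\eta$ at a node whose name (or surname, inside a repeat) does not already certify a match, we resolve it by naively comparing the two generated substrings character by character, at cost $\Oh{3^\eta \wordpack}$ per $\eta$-node. The crucial point is that this happens for only $\Oh{\lg^* n}$ of the $\eta$-nodes: all (semi-)stable $\eta$-nodes lying along the common prefix are matched by name or surname for free, so the naive comparisons are confined to the $\Oh{\lg^* n}$ fragile nodes on height $\eta$ together with the $\Oh{1}$ nodes surrounding the final mismatch.

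For the time bound I would account separately for the two phases. The truncated tree over a string of length at most $n$ has height $\Oh{\lg \tuple{\numNodes{n}}}$, since it has at most $\numNodes{n}$ leaves and constant fan-out; thus reaching the first matching node, and more generally traversing the upper part, costs $\Oh{\lg \tuple{\numNodes{n}}}$. Summing the $\Oh{\lg^* n}$ fragile nodes over these $\Oh{\lg \tuple{\numNodes{n}}}$ heights gives $\Oh{\lg^* n \lg \tuple{\numNodes{n}}}$ constant-time node matches, while the $\Oh{\lg^* n}$ naive comparisons at the leaf level contribute $\Oh{\lg^* n \cdot 3^\eta \wordpack}$. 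Adding the terms yields the claimed $\Oh{\lg^* n \tuple{\lg \tuple{\numNodes{n}} + 3^\eta \wordpack}}$. I expect the main obstacle to be the argument of the second paragraph: reusing the (semi-)stability and surname machinery at the very bottom of the truncated tree to show that the number of $\eta$-nodes requiring a naive comparison is only $\Oh{\lg^* n}$, and not proportional to the whole common prefix, since the expensive $3^\eta$ factor would otherwise be multiplied by far too many nodes.
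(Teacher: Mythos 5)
Your proposal is correct and follows essentially the same route as the paper: simulate the LCE traversal of \Cref{lemmaLCE} on the truncated trees, charge $\Oh{\lg^* n}$ constant-time name/surname matches per height over the $\Oh{\lg(\numNodes{n})}$ heights of the upper part, and observe that only the $\Oh{\lcontext+\rcontext} = \Oh{\lg^* n}$ comparisons reaching height $\eta$ must fall back to a naive word-packed comparison of generated substrings of length at most $3^\eta$. The accounting and the resulting bound coincide with the paper's.
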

\begin{proof}
\Cref{lemmaLCE} gives the time bounds for computing the longest common prefix with two HSP trees.
The \lcnamecref{lemmaLCE} describes an LCE algorithm that uses the surnames to compare the generated substring of two nodes.
By doing so, it accelerates the search for the first pair of mismatching characters in $\textX[i_{\textX}..]$ and $\textY[i_{\textY}..]$.
To find this mismatching pair, it examines the subtrees of the two nodes if both nodes mismatch.
Since we cannot access a child of an $\eta$-node in our \thTreeF{}s without rebuilding its subtree
(as we do not store the lower nodes in~$\dic$),
we treat the $\eta$-nodes as the leaves of the tree.
This means that we compare two $\eta$-nodes (given their surnames are different) with a naive comparison of their generated substrings in \Oh{3^\eta \wordpack} time,
remembering that the length of the generated substring of an $\eta$-node is at most $3^\eta$.
For the upper nodes, the algorithm works identically to the original version such that it takes \Oh{\lg^*n (\lg (\numNodes{\ell})} time for  
traversing those.
\end{proof}
Applying the idea of \cref{lemmaLCEPointer} to \cref{lemmaTruncatedLCE} gives the following \lcnamecref{lemmaTruncatedLCEPointer}:

\begin{corollary}\label{lemmaTruncatedLCEPointer}
	Let $\textX$ and $\textY$ be two strings with $\abs{\textX}, \abs{\textY} \le n$.
	Given that $\thInst[\textX]$ and $\thInst[\textY]$ are built with the \emph{same} dictionary,
	we can augment both trees with a data structures such that
	given two text positions $1 \le i_\textX \le \abs{\textX}, 1 \le i_\textY \le \abs{\textY}$,
	we can compute $\ell := \lcp(\textX[i_{\textX}..], \textY[i_{\textY}..])$ in 
	$\Oh{\lg^*n (\lg (\numNodes{\ell}) + 3^\eta \wordpack )}$ time
	using \Oh{\lg (\numNodes{n})} words of working space.
	The additional data structures can be constructed in \Oh{n} time with \Oh{n/\lg n} words of space.
	Their space bounds are within the space bounds of the HSP trees.
\end{corollary}
\begin{proof}
	To support accessing the parent of a node in constant time, we construct a pointer based tree structure of the truncated tree
during its construction.
Since $\thInst[\textY]$ contains at most $n/2^\eta$ nodes, the pointer based tree structure takes \Oh{n/2^\eta} words.

Given that $\eta \le \lg \lg n$,
we augment the tree structure with a bit vector to jump from a text position to an $\eta$-node like in \cref{lemmaLCEPointer}:
We create a bit vector of length~$n$ marking the borders of the generated substrings of the $\eta$-nodes such that a rank-support data structure
on this bit vector allows us to jump from a position~$\textY[i]$ to the $\eta$-node~$\espnode{\textY}_\eta[j]$ 
with 
$1+\sum_{k=1}^{j-1} \generated{\espnode{\textY}_\eta[k]} \le i \le \sum_{k=1}^j \generated{\espnode{\textY}_\eta[k]}$ in constant time.
The bit vector with rank-support takes $\Oh{n/\lg n}$ words, which is too much to obtain the space bounds of \Oh{\numNodes{n}} words
when $\eta = \Om{\lg \lg n}$.

Instead, we compute a sorted list of pairs if $\eta \ge \lg_3 (\lg^2 n)$.
During the construction of a truncated tree, we collect pairs of constructed $\eta$-nodes and their starting positions in a list.
This list is automatically sorted by the starting positions as we construct the tree from left to right.
The list takes \Oh{n/2^\eta} words, and we can find the $\eta$-node whose generated substring covers a given position in $\Oh{\lg (n/2^\eta)} = \Oh{\lg n}$ time by binary searching the starting positions.
This time is bounded by the time $\Oh{\lg^* n\; 3^\eta / \lg_\sigma n}$ for scanning the generated substrings of all $\eta$-nodes during an LCE query, which is 
$\Oh{\lg^* n \lg n \lg \sigma}$ time when $\eta \ge \lg_3 (\lg^2 n)$.

It is left to consider the case that $\lg \lg n < \eta < \lg_3 \lg^2 n$.
\newcommand{\kvar}{k}
Let $\kvar$ be the number of $\eta$-nodes such that $n/3^\eta \le \kvar \le n/2^\eta$.
We build the above bit vector in the representation of~\citet{pagh01low}.
In this representation, the rank-support answers rank queries in constant time.
The bit vector together with its rank-support takes 
\Oh{\kvar \lg (n/\kvar) + \kvar^2/n + \kvar (\lg \lg \kvar)^2/\lg \kvar} = \Oh{\kvar \eta} bits
(which are \Oh{n/2^\eta} words)
when $\kvar = n / \lg^c n$ for a constant~$c > 0$~\cite[Theorem 4(b)]{Rahman08bv}.
The constant~$c$ exists, because $n/\lg^2 n < n/3^\eta \le \kvar \le n/2^\eta < n/\lg n$.
However, the construction needs $\Oh{n/\lg n}$ words of space.
\end{proof}

With $\tau := 2^\eta$ we obtain the claim of \cref{thmLCEtradeoff}.

\begin{remark}
	In the following, we stick to the result obtained in \cref{lemmaTruncatedLCE} instead of \cref{lemmaTruncatedLCEPointer}.
	Although \cref{lemmaTruncatedLCE} has a slower running time for longest common prefixes that are short,
	the additional rank-support data structures of \cref{lemmaTruncatedLCEPointer} makes it difficult to achieve our aimed running time for merging two trees (and therefore would restrain us from achieving our final goal stated in \cref{thmSparseSuffixSorting}).
	To merge two trees, where each tree is augmented with the bit vector and its rank-support data structure, 
	the task would be to build a rank-support data structure on the concatenation of the bit vectors (preferably in logarithmic time).
	Unfortunately, we are not aware of a rank-support data structure that is efficiently mergeable (a naive way would be to build the rank-support data structure of the large bit vector from scratch in linear time).
\end{remark}

\begin{figure}[t]
	\centering{%
		\Bild[scale=1.0]{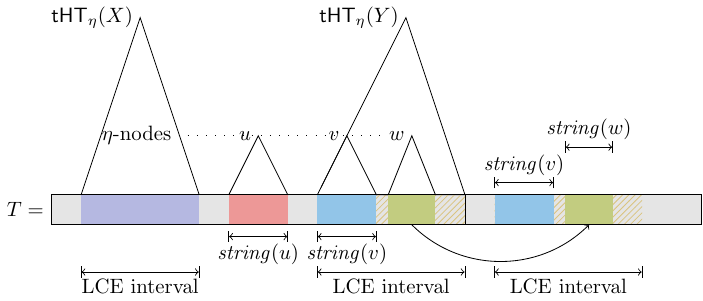}
	}
	\caption{Problem with generated substrings when merging \thInst[\textX] and \thInst[\textY]. 
	Assume that we want to merge \thInst[\textX] and \thInst[\textY], and thus compute the bridging $\eta$-nodes (like~$u$) between both trees.
	On the one hand, the generated substrings of the non-surrounded $\eta$-nodes (like~$v$) and of the bridging nodes are marked protected, because we cannot find a surrogate substring in general. 
Although there is a second occurrence of~$\generated{v}$ to the right, $\generated{v}$ can be extended or shortened when prepending characters 
(e.g., suppose that $\generated{v} = {\tt a}^k$, and that there is an {\tt a} to the left of the left occurrence of $\generated{v}$, but not to the left of the right occurrence).
		On the other hand, the space of the recyclable interval can be used for storing the \thTreeF{}s, because here we find suitable surrogate substrings for the generated substrings of the $\eta$-nodes (like for~$w$).
	}
	\label{figTextSpaceMerge}
\end{figure}

\subsection{Sparse Suffix Sorting with \thTree{}s}
To use the \thTreeF{}s as \DynLCE{}s stored \emph{in text space}, we have to think about how to merge them.
Like with HSP trees, merging two \thTreeF{}s involves a reparsing of the nodes at the facing borders~(cf.~\cref{figHspmerge}).
However, the reparsing of the $\eta$-nodes on that borders is especially problematic, as can be seen in \cref{figTextSpaceMerge}:
Suppose that we rename an $\eta$-node~$v$ from~\HSPname{$(ba)^2$} to~\HSPname{$(ba)^3$} with $\abs{\generated{\HSPname{$(ba)^2$}}} < \abs{\generated{\HSPname{$(ba)^3$}}}$.
If the name~\HSPname{$(ba)^3$} is not yet maintained in the dictionary, we have to create~\HSPname{$(ba)^3$}, i.e., a pointer to a substring~$\textX$ of the text with $\textX = \generated{\HSPname{$(ba)^3$}}$.
The critical part is to find~$\textX$ in the not-yet overwritten parts of the text:
Although we can create a suitably long string containing~$\textX$ 
by concatenating the generated substrings of $v$'s preceding and succeeding siblings,
these $\eta$-nodes may point to text intervals that are not consecutive.
Since the name of an $\eta$-node is the representation of a \emph{single} substring,
we would have to search $\textX$ in the \emph{entire} remaining text.
In the case that $v$ is surrounded, \cref{lemmaSurroundedFragileHSP} shows that \textX{} is a prefix of the generated substring of a sibling $\eta$-node
(unlike in \cref{figNoInPlaceForESP}, where the generated substring of the ESP node with name~\ESPname{$a^2(ba)^2$} cannot be easily determined).
With this insight, we finally show an approach that proves \cref{thmSparseSuffixSorting}.
For that, it remains to implement \cref{ruleCreate} and \cref{ruleMerge} from \cref{secAbstAlgo}
in the context that we maintain \thTreeF{}s \emph{in text space}:
We explain
\begin{Goals}
\item how the parameter~$\eta$ has to be chosen such that $\thInst[\textY]$ fits into $\abs{\textY} \lg \sigma$ bits (needed for \cref{ruleCreate}), and \label{goalInPlaceFixEta}
\item how to merge two \thTreeF{}s without the need for extra working space (needed for \cref{ruleMerge}). \label{goalInPlaceExtraSpace}
\end{Goals}

Our first goal is to store $\thInst[\substrI{T}{\intervalI}]$ in a text interval $\intervalI$.
Since \thInst[\substrI{T}{\intervalI}] can contain nodes with $\abs{\intervalI}/2^\eta$~distinct names, 
it requires $\Oh{\abs{\intervalI}/2^\eta}$ words, i.e., $\Oh{\abs{\intervalI} \lg n/2^\eta}$ bits of space
that might not fit in the $\abs{\intervalI} \lg \sigma$~bits of $T[\intervalI]$.
Declaring a constant $\alpha$ (independent of $n$ and $\sigma$, but dependent on the size of a single node),
we can solve this space issue by setting $\eta := \log_3 (\alpha \lg^2 n/\lg \sigma)$:

\begin{lemma}\label{lemmaEta}
	The number of nodes of an \thTreeF{} on a substring of length~$\ell$ is bounded by $\Oh{\ell (\lg\sigma)^{0.7} / (\lg n)^{1.2}}$ with $\eta = \log_3 (\alpha \lg^2 n/\lg \sigma)$.
\end{lemma}
\begin{proof}
To obtain the upper bound on the number of nodes, we first compute a lower bound on the number of bits taken by the generated substring of an $\eta$-node, which is already lower bounded by $2^\eta \lg \sigma$ bits.
We begin with changing the base of the logarithm from~$3$ to $2/3$, and reformulate
$\eta = \log_3(\alpha \lg^2 n / \lg\sigma) = (\log_3 2 - 1) \log_{2/3}(\alpha \lg^2 n / \lg\sigma) = \log_{2/3}(\alpha \lg^2 n / \lg\sigma)^{\log_3 2 - 1}$.
This gives 
\[ 2^{\eta} \lg\sigma = 3^{\eta} (2/3)^{\eta} \lg\sigma = \alpha (\alpha \lg^2 n / \lg\sigma)^{\log_3 2 - 1} \lg^2 n = (\alpha^{\log_3 2}) (\lg n)^{2 \log_3 2} (\lg \sigma)^{1-\log_3 2}.  \]
With the estimate $0.6 < \log_3 2 < 0.7$ we simplify this to
\[ (\alpha^{\log_3 2}) (\lg n)^{2 \log_3 2} (\lg \sigma)^{1-\log_3 2} > \alpha^{0.6} (\lg n)^{1.2} (\lg \sigma)^{0.3}. \]
Hence, the generated substring of an $\eta$-node takes at least $2^{\eta} \lg\sigma \ge \alpha^{0.6} (\lg n)^{1.2} (\lg \sigma)^{0.3}$ bits.

Finally, the number of nodes is bounded by 
\[
	\numNodes{\ell} \le \ell \lg\sigma / (\alpha^{0.6} (\lg n)^{1.2} (\lg\sigma)^{0.3}) = \ell (\lg\sigma)^{0.7} / \tuple{\alpha^{0.6} (\lg n)^{1.2}}. \qedhere{}
\]
\end{proof}
A consequence is that an $\eta$-node with $\eta = \log_3 (\alpha \lg^2 n/\lg \sigma)$ generates a substring containing at most $3^\eta = \alpha (\lg n)^2 / (\lg\sigma)$ characters.

Plugging this value of $\eta$ in \cref{lemmaBuildTSESP} and \cref{lemmaTruncatedLCE} yields two corollaries for the \thTreeF{}s:
\begin{corollary}\label{corTSESPTextSpace}
	We can compute an \thTreeF{} on a substring of length~$\ell$ 
	in $\Oh{\ell \lg^*n + \lookupTime \numNodes{\ell} + \ell \lg\lg n }$~time.
	The tree takes $\Oh{\numNodes{\ell}}$ words of space. We need a working space of $\Oh{\lg^2 n \lg^*n / \lg \sigma}$ characters.
\end{corollary}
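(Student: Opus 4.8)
The plan is to obtain \Cref{corTSESPTextSpace} as a direct instantiation of \Cref{lemmaBuildTSESP} (for the build time and tree space) and \Cref{lemmaEta} (for the numerical value of~$\eta$), together with the remark in \Cref{secHSP} that the \thTree{} satisfies the same bounds as the \teTree{}. First I would recall from \Cref{lemmaBuildTSESP} that a truncated tree on a string of length $l$ is built in $\Oh{l(\lg^* n + \eta\wordpack + \lookupTime/2^\eta)}$ time, uses $\Oh{3^\eta\lg^* n}$ working space, and occupies $\Oh{\numNodes{l}}$ space. The whole argument is then just a matter of substituting $\eta = \log_3(\alpha\lg^2 n/\lg\sigma)$ into each of these three expressions and simplifying.

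For the build time I would treat the three summands separately. The $\Oh{l\lg^* n}$ term is untouched. For the middle term, $\eta\wordpack = \eta\lg\sigma/\lg n$; since $\eta = \log_3(\alpha\lg^2 n/\lg\sigma) = \Oh{\lg\lg n}$, this contributes $\Oh{l\,\lg\lg n\cdot\lg\sigma/\lg n} = \Oh{l\lg\lg n}$, because $\lg\sigma/\lg n \le 1$. For the last term, $2^\eta = (3^\eta)^{\log_3 2} = (\alpha\lg^2 n/\lg\sigma)^{\log_3 2} = \Om{1}$, so $\lookupTime/2^\eta$ does \emph{not} simply vanish; instead I would rewrite $l\lookupTime/2^\eta$ as $\lookupTime\cdot(l/2^\eta)$ and observe that $l/2^\eta$ is, up to a constant, the number of $\eta$-nodes $\numNodes{l}$ (since an $\eta$-node spans between $2^\eta$ and $3^\eta$ characters, $\numNodes{l} = \Theta(l/2^\eta)$ as an upper bound on the node count). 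This yields the $\Oh{\lookupTime\numNodes{l}}$ term in the statement. Adding the three contributions gives exactly $\Oh{l\lg^* n + \lookupTime\numNodes{l} + l\lg\lg n}$.

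For the working space, I would substitute $3^\eta = \alpha\lg^2 n/\lg\sigma$ into the $\Oh{3^\eta\lg^* n}$ bound from \Cref{lemmaBuildTSESP}, giving $\Oh{(\lg^2 n/\lg\sigma)\lg^* n}$ \emph{words}; converting to characters (each character occupying $\lg\sigma$ bits, i.e.\ a word holding $\lg_\sigma n$ characters) absorbs a factor, and the claimed bound of $\Oh{\lg^2 n\,\lg^* n/\lg\sigma}$ characters follows. The tree-space claim $\Oh{\numNodes{l}}$ is immediate from \Cref{lemmaBuildTSESP} unchanged. The main obstacle I anticipate is not any of the individual simplifications, which are routine, but rather being careful about the arithmetic of $2^\eta$ versus $3^\eta$ and the word-vs-character and word-packing conversions: one has to keep straight that $3^\eta$ is the node \emph{capacity} (hence the working-space factor) while $2^\eta$ governs the node \emph{count}, and that $\eta$ itself is only $\Theta(\lg\lg n)$, which is what collapses the $\eta\wordpack$ term down to the benign $\Oh{l\lg\lg n}$ summand rather than something larger.
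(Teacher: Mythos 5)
Your proof follows essentially the same route as the paper: the corollary is obtained by instantiating \Cref{lemmaBuildTSESP} with the value $\eta = \log_3(\alpha\lg^2 n/\lg\sigma)$ fixed in \Cref{lemmaEta}, and your term-by-term simplification of the time bound (identifying $l/2^\eta$ with $\numNodes{l}$ for the lookup term, and bounding $\eta\wordpack$ by $\Oh{\lg\lg n}$) is exactly the computation the paper leaves implicit when it writes ``we follow the proof of \Cref{lemmaBuildTSESP}.'' The one quibble is the working-space conversion: passing from the $\Oh{3^\eta\lg^* n}$ \emph{words} of \Cref{lemmaBuildTSESP} to characters multiplies by $\lg_{\sigma} n\ge 1$ rather than ``absorbing a factor,'' but the paper's own proof makes the same silent identification of $\Oh{3^\eta\lg^* n}$ with $\Oh{\lg^2 n\,\lg^* n/\lg\sigma}$ characters, so this is not a deviation from its argument.
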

\begin{proof}
	The tree has at most $\numNodes{\ell}$ nodes, and thus takes 
	$\Oh{\numNodes{\ell}}$ words of space.
	According to \cref{lemmaBuildTSESP},
	constructing an $\eta$-node uses $\Oh{3^\eta \lg^*n} = \Oh{\lg^2 n \lg^*n / \lg \sigma}$ characters as working space.
\end{proof}

\begin{corollary}\label{corTSESPLCE}
	An LCE query on two \thTreeF{}s can be answered in 
	$\Oh{\lg^*n \lg n}$ time.
\end{corollary}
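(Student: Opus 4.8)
The plan is to reduce the LCE query on two \thTree{}s to the analysis already carried out for \teTree{}s in \Cref{lemmaTruncatedLCE}, and then substitute the value of $\eta$ fixed in \Cref{lemmaEta}. Recall that the \thTree{} is a truncated parse tree whose height is bounded by $\Oh{\lg(\numNodes{n})}$, with $\eta$-nodes serving as leaves that point to substrings of length at most $3^\eta$. The surname bookkeeping carried over from \Cref{secHSP} guarantees that the matching of (semi-)stable nodes works exactly as in \Cref{lemmaLCE}, so the query traversal is structurally identical to the one analyzed in \Cref{lemmaTruncatedLCE}.

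First I would invoke \Cref{lemmaTruncatedLCE}, which states that an LCE query on two truncated trees can be answered in $\Oh{\lg^*n (\lg (\numNodes{n}) + 3^\eta \wordpack )}$ time. This expression has two contributions: the $\Oh{\lg^*n \lg(\numNodes{n})}$ term from traversing the upper nodes (there are $\Oh{\lg^* n}$ fragile nodes per height, and the tree height is $\Oh{\lg\numNodes{n}}$), and the $\Oh{\lg^*n \cdot 3^\eta \wordpack}$ term from the $\Oh{\lcontext+\rcontext}$ comparisons that must descend into the leaves, where each $\eta$-node comparison costs $\Oh{3^\eta \wordpack}$ word-packed character comparisons.

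Next I would bound each term using the explicit choice $\eta = \log_3(\alpha \lg^2 n / \lg\sigma)$. For the first term, \Cref{lemmaEta} gives $\numNodes{l} = \Oh{l (\lg\sigma)^{0.7}/(\lg n)^{1.2}}$, so $\lg(\numNodes{n}) = \Oh{\lg n}$, whence this contribution is $\Oh{\lg^*n \lg n}$. For the second term, \Cref{lemmaEta} gives $3^\eta = \alpha \lg^2 n / \lg\sigma$ characters per $\eta$-node, so that $3^\eta \wordpack = (\alpha \lg^2 n/\lg\sigma)/(\lg n/\lg\sigma) = \Oh{\lg n}$ words. Hence the leaf-comparison contribution is also $\Oh{\lg^*n \lg n}$, and the two terms combine to the claimed $\Oh{\lg^*n \lg n}$ bound.

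The one point requiring care — and the only place where a genuine argument beyond substitution is needed — is justifying that the word-packed comparison of the $\Oh{\lcontext+\rcontext}$ leaf-level $\eta$-nodes really costs only $\Oh{3^\eta\wordpack}$ per node rather than something larger: the \thTree{} replaces every subtree below height $\eta$ by a single leaf pointing to a substring of length at most $3^\eta$, so a mismatch that would have been resolved deep inside an ordinary ESP tree must instead be found by naively scanning up to $3^\eta$ characters. I would argue, exactly as in \Cref{lemmaTruncatedLCE}, that the number of such leaf-descending comparisons is a constant depending only on the context sizes $\lcontext$ and $\rcontext$ (it is $\Oh{\lg^*\sigma}$, absorbed into the $\lg^* n$ factor), so that no more than $\Oh{\lg^* n}$ leaves are ever compared character-by-character; multiplying by the $\Oh{3^\eta\wordpack}=\Oh{\lg n}$ cost per leaf yields the result. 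With this, the corollary follows.
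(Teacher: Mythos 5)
Your proposal is correct and follows essentially the same route as the paper: invoke \Cref{lemmaTruncatedLCE}, substitute $3^\eta = \alpha\lg^2 n/\lg\sigma$ from \Cref{lemmaEta} so that each $\eta$-node comparison costs $\Oh{\alpha\lg n}$ words and the $\Oh{\lcontext+\rcontext}$ leaf comparisons contribute $\Oh{\lg^* n \lg n}$, and bound the upper-node traversal by $\Oh{\lg^* n \lg(n/2^\eta)} = \Oh{\lg^* n \lg n}$. The only difference is that you spell out the justification for the leaf-comparison count in more detail than the paper does, which does no harm.
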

\begin{proof}
	LCE queries are answered as in \cref{lemmaTruncatedLCE}, where the time bound depends on~$\eta$.
Since an $\eta$-node generates a substring of at most $3^{\eta} = \alpha \lg^2 n/\lg \sigma$ characters, 
we can compare the generated substrings of two $\eta$-nodes in \Oh{\alpha \lg n} time. 
Overall, we compare $\Oh{\lg^* n}$ many times two $\eta$-nodes, such that these additional costs are bounded by $\Oh{\lg^* n \lg n}$ time overall,
and do not slow down the running time 
$\Oh{ \lg^* n \lg(n/2^\eta) + \lg^* n \lg n } = \Oh{\lg^*n \lg n}$.
\end{proof}

Our second and final goal is to adapt the merging used in the sparse suffix sorting algorithm (\cref{secAbstAlgo}). 
Suppose that our algorithm finds two intervals $[i..i+\ell-1]$ and $[j..j+\ell-1]$ with $T[i..i+\ell-1] = T[j..j+\ell-1]$.
Ideally, we want to construct \thInst[\substr{T}{i}{i+\ell-1}] in the text space $[j..j+\ell-1]$,
leaving $T[i..i+\ell-1]$ untouched so that parts of this substring can be referenced by the $\eta$-nodes.
Unfortunately, \namecrefs{ruleReference}~\ref{ruleReference}~to~\ref{ruleMerge} cannot be applied directly due to our working space limitation.
Since we additionally use the text space as working space, we have to be careful about what to overwrite.
In particular, we focus on how to
\begin{enumerate}[(a)]
	\item partition the \LCEI{}s such that the generated substrings of the fragile non-surrounded $\eta$-nodes are protected from becoming overwritten,\label{itTextSpaceMergeFragileNodes}
	\item keep enough working space in text space available for merging two trees,\label{itTextSpaceMergeMemory}
	\item construct \thInst[\substr{T}{i}{i+\ell-1}] in the text space $[j..j+\ell-1]$ when the intervals $[i..i+\ell-1]$ and $[j..j+\ell-1]$ overlap, and how to \label{itTextSpaceMergeOverlap}
	\item bridge the gap~$T[\iend{\intervalI}+1..\ibeg{\intervalJ}-1]$ when 
		merging \thInst[ {\textT[\intervalI]} ] and \thInst[ {\textT[\intervalJ]} ] to \thInst[ {\textT[\ibeg{\intervalI}..\iend{\intervalJ}]} ]
		for two intervals~$\intervalI$ and~$\intervalJ$ with $\ibeg{\intervalI} < \ibeg{\intervalJ}$ and $\abs{\Int{\iend{\intervalI}+1}{\ibeg{\intervalJ}-1}} < \gapLCE$, as performed in \cref{ruleMerge}. \label{itTextSpaceMergeGap}
	\end{enumerate}

	\block{\ref{itTextSpaceMergeFragileNodes} Partitioning of \LCEI{}s}
	To merge two \thTreeF{}s, we have to take special care of those $\eta$-nodes that are fragile, 
because their names can change due to a merge.
If the parsing changes the name of an $\eta$-node~$v$, we first check whether $v$'s new name is present in the dictionary.
If it is not, we have to create $v$'s new name consisting of a text position~$i$ and a length~$\ell$ such that $T[i..i+\ell-1] = \generated{v}$.
The new name of a fragile \emph{surrounded} $\eta$-node $v$ can be created easily:
According to \cref{lemmaSurroundedFragileHSP}, the generated substring of $v$ is always
a prefix of the generated substring of an already existing $\eta$-node $w$, which is found in the reverse dictionary of the $\eta$-nodes.
Hence, we can create a new name of $v$ with \generated{w}.

Unfortunately, the same approach does not work with the non-surrounded $\eta$-nodes, because
those nodes have generated substrings that are found at the borders of $T[j..j+\ell-1]$ (remember \cref{figTextSpaceMerge}).
If the characters around the borders are left untouched (meaning that we prohibit overwriting these characters), 
they can be used for creating the names of the fragile non-surrounded $\eta$-nodes during a reparsing.
To prevent overwriting these characters, we mark both borders of the interval $[j..j+\ell-1]$ as protected.
Conceptually, we partition an \LCEI{} into (1) \intWort{recyclable} and (2) \intWort{protected} intervals (see~\cref{figMargin});
we free the text of a recyclable interval for overwriting, while prohibiting write access on a protected interval.
The recyclable intervals are managed in a dynamic, global list.
We keep the property that

\ConstraintBox{%
	\begin{LCEproperty}[resume*=lceprops]
	\item $\marginLCE := \upgauss{2 \alpha \lg^2 n \lcontext / \lg \sigma} = \Ot{g}$ text positions
		of the left and right ends of each \LCEI{} are \emph{protected}. \label{invRedInterval}
\end{LCEproperty}
}%

This property solves the problem for the non-surrounded nodes, 
because a non-surrounded $\eta$-node has a generated substring that is found in $T[j..j+\marginLCE-1]$ or $T[j+\ell-1-\marginLCE..j+\ell-1]$.

\begin{figure}[t]
\centering{%
	\Bild[scale=1.0]{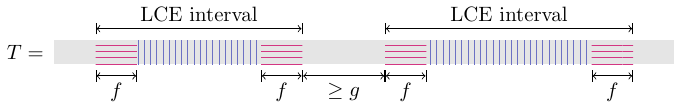}
}
\caption{Division of \LCEI{}s in protected (horizontal\C{~magenta} lines) and recyclable (vertical\C{~violet} lines) parts.} 
\label{figMargin}
\end{figure}

\block{\ref{itTextSpaceMergeMemory} Reserving Text Space}
We can store the upper part of the \thTreeF{} in a recyclable interval, because it needs
$\numNodes{\ell} \lg n \le \ell \alpha^{0.6} (\lg\sigma)^{0.7} / (\lg n)^{0.2} = \oh{\ell \lg \sigma}$ bits.
Since $\marginLCE$ depends on $\alpha$ and $\gapLCE$, we can choose $\gapLCE$ (the minimum length of a substring on which an \thTreeF{} is built) and $\alpha$ 
(relative to the number of words taken by a single \thTreeF{} node) appropriately 
to always leave $\marginLCE \lg \sigma /\lg n = \Oh{\lg^*n \lg n}$ words on a recyclable interval untouched, sufficiently large for the working space needed by \cref{corTSESPTextSpace}.
Therefore, we precompute $\alpha$ and $\gapLCE$ based on the input text~$T$, and set both as \emph{global} constants dependent on~$\textT$.
Since the same amount of free space is needed during a subsequent merging when reparsing an $\eta$-node, we add the following property:

\ConstraintBox{%
	\begin{LCEproperty}[resume*=lceprops]
		\setcounter{enumi}{5}
	\item Each \LCEI{} has $\marginLCE \lg \sigma /\lg n$ words of free space left on a recyclable interval. \label{invWorkingSpace}
\end{LCEproperty}
}%

  \begin{figure}[t]
  	\centering{%
		\Bild[valign=t]{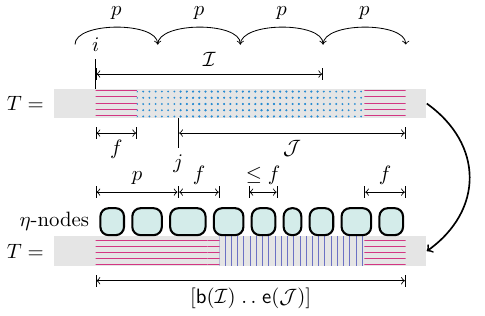}
		\hfill
		\Bild[valign=t]{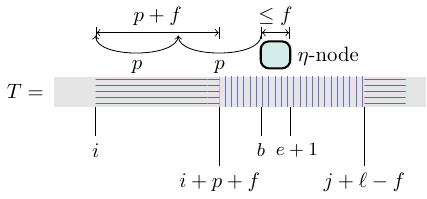}
  	}
	\caption{\emph{Left}: Overlapping \LCEI{}s $\intervalI = [i..i+\ell-1]$ and $\intervalJ = [j..j+\ell-1]$.
		\emph{Right}: Finding the generated substring~$T[b..e]$ of an $\eta$-node in a protected interval.
		Given that $p$ is the smallest period of $T[\intervalI\cup\intervalJ]$,
		it is sufficient to make $\marginLCE+\period$ characters on the left protected
		to find the generated substring of all $\eta$-nodes of \thInst[ {T[i..j+\ell-1]} ] in $T[i..i+\period+\marginLCE-1]$.
	}
	\label{figMergePeriod}
  \end{figure}

  \begin{figure}[t]
  	\centering{%
		\Bild{truncatedmerge}
  	}
	\caption{Merging \thInst[ {T[\intervalI]} ] and \thInst[ {T[\intervalJ]} ] with $\ibeg{\intervalJ} - \gapLCE \le \iend{\intervalI} \le \ibeg{\intervalJ}-1$.
		The substring $T[\iend{\intervalI}-\marginLCE..\ibeg{\intervalJ}+\marginLCE]$ is marked protected for the sake of the bridging nodes.
	}
	\label{figEspMerge}
  \end{figure}

In our algorithm for sparse suffix sorting, a special problem emerges when two computed \LCEI{}s overlap.
For instance, this can happen when the \ac{LCE} of a position~$i \in \Pos$ with a position~$j \in \Pos$ overlaps, i.e.,
$[i..i+\lce[i,j]-1]
\cap
[j..j+\lce[i,j]-1] 
\not= \emptyset$.
The algorithm would proceed with merging both overlapping \LCEI{}s to satisfy \cref{itemIntervalgap}.
However, the merged \LCEI{} cannot respect \namecref{invRedInterval}~\ref{invRedInterval} and~\ref{invWorkingSpace} in general (consider that each interval has a length of $3\gapLCE$, and both intervals overlap with $2\gapLCE$ characters).
In the case of overlapping, we exploit the periodicity caused by the overlap to make an \thTreeF{} fit into both intervals 
(while still assuring that \cref{itemIntervalcover} and \cref{itemIntervalgap} hold, and that we can restore the text).

\block{\ref{itTextSpaceMergeOverlap} Interval Overlapping}
In a more general setting, suppose that the intervals $\intervalI := [i..i+\ell-1]$ and $\intervalJ := [j..j+\ell-1]$ with $T[\intervalI] = T[\intervalJ]$
overlap, without loss of generality $i < j$.
Given $\ell > 2\gapLCE$, our task is to create \thInst[\substr{T}{i}{j+\ell-1}] (e.g., needed to comply with \cref{itemIntervalcover}).
Since $T[\intervalI] = T[\intervalJ]$, the substring $T[i..j+\ell-1]$ has a period~$\period$ with $1 \le \period \le j-i$, i.e.,
$T[i..j+\ell-1] = \textX^k \textY$, where $\abs{\textX} = \period$ and $\textY$ is a (proper) prefix of $\textX$, for an integer~$k$ with $k \ge 2$
($k > 1$ since $j \le i+\ell-1$, otherwise $i > j$ or $\intervalI \cap \intervalJ = \emptyset$).
First, we compute the smallest period $\period \le j - i$ of $T[i..j+\ell-1]$ in $\Oh{\ell}$ time~\cite{kolpakov99maximalRepetitions}. %
By definition, each substring of $T[i+\period..j+\ell-1]$ appears also $\period$~characters earlier.
We treat the substring $T[i..i+\period+\marginLCE-1]$ as a reference and therefore mark it protected.
Keeping the original characters in $T[i..i+\period+\marginLCE-1]$, we can restore the generated substrings of every $\eta$-node by an arithmetic progression.
This can be seen by two facts:
First, the length of the generated substring of an $\eta$-node is at most $3^\eta = \alpha \lg^2 n / \lg \sigma \le \marginLCE/2$.
Second, given an $\eta$-node with the generated substring~$T[b..e]$ with $i+\period+\marginLCE \le e \le j+\ell-1$,
we find an integer~$k$ with $k \ge 0$ such that $T[b..e] = T[b-\period^k..e-\period^k]$ and $[b-\period^k..e-\period^k] \subseteq [i..i+\period+\marginLCE-1]$ (since $e-b \le \marginLCE/2$).
Hence, we can make the interval~$[i+\period+\marginLCE+1..j+\ell-1-\marginLCE]$ \emph{recyclable}, 
which is at least as large as~$\marginLCE$, since $\abs{\intervalI \cup \intervalJ} \ge j-i + 2\gapLCE \ge \period + 2\gapLCE$ is at least $\period + 3\marginLCE$ for a sufficiently large $\gapLCE$.
The partitioning into protected and recyclable intervals is illustrated in \cref{figMergePeriod}.

For the actual merging operation, we elaborate an approach that respects \cref{invRedInterval,invWorkingSpace}:

\block{\ref{itTextSpaceMergeGap} Merging with a Gap}
We introduce a merge operation that supports the merging of two \thTreeF{}s whose \LCEI{}s have a gap of less than $\gapLCE$ characters.
The difference to \cref{lemmaTreeCombine} is that we additionally build new $\eta$-nodes on the gap between both trees.
The $\eta$-nodes whose generated substrings intersect with the gap are called \intWort{bridging} nodes.

Let \thInst[ {T[\intervalI]} ] and \thInst[ {T[\intervalJ]} ] be built on two \LCEI{}s $\intervalI$ and $\intervalJ$ with $1 \le \ibeg{\intervalJ} - \iend{\intervalI} \le \gapLCE$.
Our task is to compute the merged tree $\thInst[\substr{T}{\ibeg{\intervalI}}{\iend{\intervalJ}}]$.
We do that by (a) reprocessing \Oh{\lcontext+\rcontext} nodes at every height of both trees (according to \cref{lemmaTreeCombine}),
and (b) building the bridging nodes connecting both trees.
Like with the non-surrounded nodes, the generated substring of a bridging node can be a unique substring of the text.
This means that overwriting $T[\iend{\intervalI}-\marginLCE..\ibeg{\intervalJ}+\marginLCE]$ would invalidate the generated substrings of the bridging nodes and of some (formerly) non-surrounded nodes.
Therefore, we mark the interval~$[\iend{\intervalI}-\marginLCE..\ibeg{\intervalJ}+\marginLCE]$ as protected.
By doing so, we can use the characters of $T[\iend{\intervalI}-\marginLCE..\ibeg{\intervalJ}+\marginLCE]$ 
to 
\begin{itemize}
	\item create the bridging $\eta$-nodes, and to 
	\item reparse the non-surrounded nodes of both trees~(\cref{figEspMerge}).
\end{itemize}
The bridging nodes and their ancestors take $\oh{\lg n \lg^* n}$ words of additional space 
since building \linebreak[4] \thInst[ {T[\iend{\intervalI}+1..\ibeg{\intervalJ}-1]} ] with $\abs{\ibeg{\intervalJ} - \iend{\intervalI}} = \Oh{\gapLCE}$
takes
$(\gapLCE/2^\eta) \lg n = \oh{\gapLCE \lg \sigma} = \oh{\lg^*n \lg^2 n}$ bits (or $\oh{\lg^* n \lg n}$ words) of space.
By choosing $\gapLCE$ and $\alpha$ sufficiently large, we can store the bridging nodes in a recyclable interval while maintaining \cref{invWorkingSpace} for the merged \LCEI{}.
Finally, the time bound for this merging strategy is given in the following \lcnamecref{lemmaConcatTSESP}:

\begin{corollary}\label{lemmaConcatTSESP}
	Given two \LCEI{}s $\intervalI$ and $\intervalJ$ 
	with 
	$\ibeg{\intervalI} \le \ibeg{\intervalJ} \le \iend{\intervalI} + \gapLCE$,
we can build \thInst[\substr{T}{\ibeg{\intervalI}}{\iend{\intervalJ}}]
in $\Oh{\gapLCE \lg^* n + \lookupTime\numNodes{\gapLCE}  + \gapLCE\eta\wordpack + \lookupTime \lg^*n \lg n}$ time.
\end{corollary}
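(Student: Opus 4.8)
The plan is to split the cost of the merge into two independent pieces and bound each one with a lemma already proven: (i) the structural re-wiring of the two existing \thTree{}s at their splice, which is governed by \Cref{lemmaTreeCombine}, and (ii) the construction of the \emph{bridging} $\eta$-nodes that fill the gap of length at most $\gap$ between $\intervalI$ and $\intervalJ$, which is essentially a fresh \thTree{} construction on a region of length $\Oh{\gap}$ and is governed by \Cref{corTSESPTextSpace}. Adding the two contributions should reproduce the four summands of the claim.

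For piece (i), I would invoke \Cref{lemmaTreeCombine}, which re-processes $\Oh{\lcontext + \rcontext}$ fragile nodes at every height of each tree. Specializing its bound with $\lcontext, \rcontext = \Oh{\lg^* n}$ and with both interval lengths at most $n$ (so every tree has height $\Oh{\lg n}$) gives $\Oh{\lookupTime(\lcontext \lg\abs{\intervalJ} + \rcontext \lg\abs{\intervalI})} = \Oh{\lookupTime \lg^* n \lg n}$, which is exactly the last summand.

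For piece (ii), the gap comprises fewer than $\gap$ characters, so building the bridging $\eta$-nodes is the same operation as parsing a substring of length $\Oh{\gap}$ into a \thTree{}. Reusing the accounting of \Cref{lemmaBuildTSESP} and \Cref{corTSESPTextSpace} with $l=\gap$: the repeated alphabet reduction costs $\Oh{\gap \lg^* n}$; naively comparing adjacent lower nodes, summed over all $\Oh{\eta}$ heights, costs $\Oh{\gap \eta \wordpack}$; and the name look-ups for the $\numNodes{\gap}$ newly created $\eta$-nodes cost $\Oh{\lookupTime \numNodes{\gap}}$. These are precisely the first three summands.

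The hard part will not be the arithmetic but justifying that the whole procedure runs \emph{in text space}, i.e., that every fragile or bridging $\eta$-node whose name changes can be re-derived from characters still present. Surrounded fragile $\eta$-nodes are disposed of by \Cref{lemmaSurroundedFragileHSP}, since their new generated substring is a prefix of an already existing $\eta$-node recoverable from the reverse dictionary. The non-surrounded and bridging nodes are the delicate case: here I would rely on marking $T[\iend{\intervalI}-\margin..\ibeg{\intervalJ}+\margin]$ red (read-only) so that these nodes can be reconstructed from the retained reference characters, and, in the overlapping case, on the period argument that lets an arithmetic progression $T[b..e]=T[b-d^k..e-d^k]$ relocate each short generated substring into the red reference block. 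Finally I would verify that the bridging nodes require only $\Oh{\lg n \lg^* n}$ additional space, so that \ref{invRedInterval} and \ref{invWorkingSpace} continue to hold for the merged \LCEI{} once $\gap$ and $\alpha$ are fixed large enough.
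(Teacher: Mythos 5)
Your proposal is correct and follows essentially the same route as the paper: the paper's proof likewise simulates \Cref{lemmaTreeCombine} on the truncated trees (yielding the $\Oh{\lookupTime\lg^*n\lg n}$ term) and charges the bridging $\eta$-nodes to a construction in the fashion of \Cref{corTSESPTextSpace} on the $\Oh{\gap}$-length gap (yielding the remaining three summands), with \ref{invWorkingSpace} guaranteeing the working space. Your additional discussion of red intervals, \Cref{lemmaSurroundedFragileHSP}, and the period argument restates the preparatory material the paper places just before the corollary rather than inside its proof.
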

\begin{proof}
	We adapt the merging of two HSP trees (\cref{lemmaTreeCombine}) for the \thTreeF{}s.
	The difference to \cref{lemmaTreeCombine} is that we reparse an $\eta$-node by rebuilding its local surrounding consisting of
	$\Oh{(\lcontext+\rcontext) 3^\eta}$ nodes that take
	$\alpha (\lcontext + \rcontext) \lg^2 n/ \lg \sigma \le \marginLCE$ words for a sufficiently large~$\alpha$.
	According to \cref{invWorkingSpace}, there are at least $\marginLCE$ words of space left in a recyclable interval 
	to recompute an $\eta$-node, and to create the bridging nodes in the fashion of \cref{corTSESPTextSpace}.
	Both creating and recomputing takes overall 
$\Oh{\gapLCE \lg^* n + \lookupTime\numNodes{\gapLCE}  + \gapLCE\eta\wordpack}$ time.
\end{proof}

There is one problem left before we can prove the main result of the paper:
The sparse suffix sorting algorithm of \cref{secAbstAlgo} creates \LCEI{}s 
on substrings smaller than $\gapLCE$ between two \LCEI{}s temporarily when applying \cref{ruleCreate}.
We cannot afford to build such tiny \thTreeF{}s, since they cannot respect \cref{invRedInterval} and \cref{invWorkingSpace}.
Due to \cref{ruleMerge}, we eventually merge a temporarily created \DynLCE{} with a \DynLCE{} on a long \LCEI{}.
Instead of temporarily creating an \thTreeF{} covering less than $\gapLCE$ characters, we apply the new merge operation of \cref{lemmaConcatTSESP} directly, merging two trees that have a gap of less than $\gapLCE$ characters.
With this and the other properties stated above, we come to the final proof:

\begin{proof}[Proof of \cref{thmSparseSuffixSorting}]
	The analysis is split into suffix comparison, tree generation and tree merging:
	\begin{itemize}
		\item 
			Suffix comparisons are done as in \cref{thmAbstractExtraSpace}.
			LCE queries on \thTreeF{}s and HSP trees are conducted in the same time bounds (compare \cref{lemmaLCE} with \cref{corTSESPLCE}).
		\item 
			All positions considered for creating the \thTreeF{}s belong to $\Comlcp$.
			Constructing the \thTreeF{}s costs 
			$\Oh{ \abs{\Comlcp} \lg^*n + \lookupTime\numNodes{\abs{\Comlcp}}  + \abs{\Comlcp} \lg \lg n}$ overall time, due to
			\cref{corTSESPTextSpace}.
		\item 
			Merging in the fashion of \cref{lemmaConcatTSESP} does not affect the overall time:
			Since a merge of two trees introduces less than $\gapLCE$ new text positions to an \LCEI{},
			we conclude with the same analysis as in \cref{thmExtraSpace} that the time for merging is upper bounded by the construction time.
	\end{itemize}
	Plugging the times for suffix comparisons, tree construction and merging in \cref{thmAbstractExtraSpace} yields the overall time
	\[
		\Oh{\abs{\Comlcp} \lg^* n + \lookupTime\numNodes{\abs{\Comlcp}}  + \abs{\Comlcp} \lg\lg n } =
		\Oh{\abs{\Comlcp} \tuple{\lookupTime (\lg \sigma)^{0.7} / (\lg n)^{1.2} + \lg\lg n }  } =
                \Oh{\abs{\Comlcp} (\sqrt{\lg \sigma} + \lg \lg n ) },
	\]
	since $\lookupTime = \Oh{\lg n}$.
	The time for searching and sorting the suffixes is 
	\(
		\Oh{ m \lg m \lg^* n \lg n}
	\).
	The auxiliary data structures used are $\SAVL(\Suf(\Pos))$, the search tree~\LCEITree{} for the \LCEI{}s,
	and the list of recyclable intervals, each taking $\Oh{m}$ words of space.
\end{proof}

\section{Conclusion}
In the first part, we introduced the HSP trees based on the ESP technique as a new data structure that (a) answers LCE queries, and (b) can merge with another HSP tree to form a larger HSP tree.
With these properties, HSP trees are an eligible choice for the mergeable LCE data structure needed for the sparse suffix sorting algorithm presented here.
 
In the second part, we developed a truncated version with a trade-off parameter determining the height at which to cut off the lower nodes.
Setting the trade-off parameter adequately, the truncated HSP tree fits into text-space.
As a result of independent interest, we obtained an LCE data structure with a trade-off parameter, like other already known solutions.
Although not shown here, an ESP tree can similarly (a) answer LCE queries, (b) be merged, and (c) be truncated. 
However, 
answering LCE queries or merging two ESP trees is by a factor of~$\Oh{\lg n}$ slower than when the operations are performed with HSP trees.

\JO[In the appendix, we noted that the maximum number of fragile nodes in an ESP tree of a string of length~$n$ can be at least~\Om{\lg^2 n}, ]
{We also noted that the maximum number of fragile nodes in an ESP tree of a string of length~$n$ can be at least~\Om{\lg^2 n}, }
which invalidates the upper bound of~\Oh{\lg n \lg^*n} on the maximal number of fragile nodes postulated in~\cite{Cormode2007sed}.
This result also invalidates theoretical results that depend on the ESP technique 
(e.g., for approximating the edit distance with moves~\cite{Cormode2007sed} or the LZ77 factorization~\cite{Cormode05substring}, or for building indexes~\cite{TakabatakeNKTS16,FukunagaTIS16,MaruyamaNKS13,TakabatakeTS14}).
We could quickly provide a new upper bound of~\Oh{\lg^2 n \lg^* n}, but it remains an open problem to refine our bounds.
Luckily, our proposed HSP technique can be used as a substitution for the ESP technique, since HSP trees and ESP trees share the same bounds for construction time and space usage.
By switching to the HSP technique, we regain the promised \Oh{\lg n \lg^* n} number of fragile nodes.
It is easy to see that this result also recovers the postulated \Oh{\lg n \lg^* n} approximation bound on the edit distance matching problem~\cite{Cormode2007sed,TakabatakeNKTS16}:
Given \etInst[\textT] of a string~$\textT$ of length~$n$, 
it is assumed by \citet[Theorem 7]{Cormode2007sed} that changing/deleting a character of~$\textT$, or inserting a character in~$\textT$ changes \Oh{\lg^*n \lg n} nodes in \etInst[\textT].
Although we only provided proofs that \ac{preapp} characters to~$\textT$ changes \Oh{\lg^* n \lg n} nodes of \hInst[\textT],
it is easy to generalize this result by applying a merge operation:
Given that we insert a character~$\Char{c} \in \Sigma$ between $T[i]$ and $T[i+1]$, 
the trees $\hInst[\textT]$ and $\hInst[ {\textT[1..i]\Char{c}\textT[i+1..]} ]$ differ in at most \Oh{\lg^* n \lg n} nodes, 
since appending \Char{c} to \hInst[ {\textT[1..i]} ] and merging \hInst[ {\textT[1..i]\Char{c}} ] with \hInst[ {\textT[i+1..]} ] 
changes \Oh{\lg^* n \lg n} nodes. 
The same can be observed when deleting or changing the $i$-th character.

In the light of the theoretical improvements of the HSP over the ESP, it is interesting to evaluate how HSP behaves practically.
Especially, we are interested in how well HSP behaves in the context of grammar compression~\cite{Bannai16} like the ESP-index~\cite{MaruyamaNKS13,TakabatakeTS14}
on highly repetitive texts, where a more stable behavior of the repetitive nodes could lead to an improved compression ratio.

From the theoretical point of view, it would be interesting to compute the sparse suffix sorting with a 
trade-off parameter adjusting space and query time such that this parameter can be chosen from a continuous domain like the result we presented for the LCE query data structure.

In the case that we can impose a restriction on the set of suffixes to sort,
\citet{Karkkainen96sparse} presented a sparse suffix sorting algorithm running in optimal \Oh{n} time while using \Oh{m} words of space,
given that $\Pos$ is a set of equally spaced text positions.
We wonder whether it is also possible to gain a benefit when only every $i$-th entry of $\SA{}$ is needed,
i.e., the order of each $i$-th lexicographically smallest suffix
for an arithmetic progression $i = c, 2c, 3c,\ldots$ with a constant integer~$c \ge 2$.
Related to this problem is the suffix selection problem, i.e., to find the $i$-th lexicographically smallest suffix for a given integer~$i$.
Interestingly, \citet{Franceschini07suffix} showed that the suffix selection problem can be solved in \Oh{n} time in the comparison model, whereas 
suffix sorting is solved in \Ot{n \lg n} time within the same model.

\ifthenelse{\boolean{withAlgo}}{%
	\block{Mergeable Rank-Support}
	Remembering the note after \cref{lemmaTruncatedLCEPointer},
	we are unaware whether a rank-support data structures can be mergeable.
	Given two bit vectors~\bv{1} and~\bv{2}, both with a rank-support data structure,
	the task is to compute a rank-support data structure on the concatenation of \bv{1} and \bv{2}
	in sub-linear time in the total lengths of both bit vectors.

	\block{Working Space Aware Compressed Bit Vectors}
	Although there are bit vectors with rank-support that can be stored in compressed space (e.g.~\cite{pagh01low}),
	there is, to the best of our knowledge, no bit vector representation that can be constructed within compressed space or online.
	
\newcommand{\BST}{\ensuremath{\mathcal{B}}}
\newcommand{\BSPT}{\ensuremath{\UnaryOperator[T,\Pos]{\BST}}}
\newcommand{\pll}[1]{\ensuremath{\operatorname{ll}(#1)}}
\newcommand{\plr}[1]{\ensuremath{\operatorname{lr}(#1)}}
\newcommand{\prl}[1]{\ensuremath{\operatorname{rl}(#1)}}
\newcommand{\prr}[1]{\ensuremath{\operatorname{rr}(#1)}}
\newcommand{\plllcp}[1]{\ensuremath{\operatorname{cp}(#1, \mathtt{ll})}}
\newcommand{\plrlcp}[1]{\ensuremath{\operatorname{cp}(#1, \mathtt{lr})}}
\newcommand{\prllcp}[1]{\ensuremath{\operatorname{cp}(#1, \mathtt{rl})}}
\newcommand{\prrlcp}[1]{\ensuremath{\operatorname{cp}(#1, \mathtt{rr})}}
\section{Open Problems}

The LZ77 factorization of \cref{sec77} could be performed on the sparse suffix tree, i.e.,
the suffix tree, in which all leaves not corresponding to the $m$ selected suffix are omitted.
Then the factorization can only create referencing factors at the $m$ selected positions.
The referencing factors have a referred position within~$\Pos$.

Each node of $\SAVL(\Suf(\Pos))$ represents a suffix starting at a position of~$\Pos$.
Besides a starting position, a node~$v$ additionally stores $\max \menge{ \lcp[u,v] : u \text{~ancestor of~} v}$.
The node~$u$ is the lowest ancestor who has $v$ in either its left or right subtree.
In case such an ancestor exists (the stored LCP value is greater than zero),
$v$ additionally records in a bit whether it is in $u$'s left or right subtree, such that $v$ and this bit uniquely determine the position of~$u$ in the tree.
Although we can transform $\SAVL(\Suf(\Pos))$ to $\SSA[T,\Pos]$ with a simple in-order traversal as shown in \cref{thmAbstractExtraSpace},
it is not obvious whether $\SAVL(\Suf(\Pos))$ holds enough information to determine the contents of $\SLCP[T,\Pos]$.
A conjecture is that the stored LCP values in $\SAVL(\Suf(\Pos))$ are actually a permutation of $\SLCP[T,\Pos]$.

Instead of using the \SAVLT{} of \citet{Irving2003sbs}, we can also
devise a data structure that is a more natural choice for computing the sparse suffix sorting:
A balanced binary search tree (e.g., an AVL or red-black tree) with suffixes as keys, where each node is augmented with four LCP values.
We call this data structure a \ac{BSPT}, and write $\BSPT$ for the \ac{BSPT} containing suffixes beginning at the text positions of a set~$\Pos$.
\todo{Can \BSPT be abstracted to be a tree on strings, not necessarily suffixes?}

In the following we identify nodes with the starting positions of their corresponding suffixes, i.e., a node~$v \in \BSPT$ corresponds to the suffix~$T[v..]$.
A node~$v$ stores the LCP values

\begin{textminipage}{0.45\linewidth}
\begin{itemize}
	\item $\plllcp{v} := \lcp(T[v..], T[\pll{v}..])$, 
	\item $\plrlcp{v} := \lcp(T[v..], T[\plr{v}..])$, 
	\item $\prllcp{v} := \lcp(T[v..], T[\prl{v}..])$, and 
	\item $\prrlcp{v} := \lcp(T[v..], T[\prr{v}..])$, 
\end{itemize}
\end{textminipage}
\begin{minipage}{0.4\linewidth}
	\tikzset{%
		mininode/.style={%
			, inner sep=1pt
			, circle
			, draw
		},
		subtree/.style={%
			, draw
			, shape border uses incircle
			, minimum width =1em
			, minimum height=1.7em
			, isosceles triangle stretches
			, inner sep=0pt
			, isosceles triangle
			, shape border rotate=90
			, yshift=-1.5em
		},
		pic/.style={%
			, sibling distance=9em
			, level distance=2em
		}
	}
	\begin{tikzpicture}[pic]
		\node[circle,draw]  {$v$}
			child{ node[mininode] {\phantom{}} {node[subtree] (A) {}  }}
			child{ node[mininode] {\phantom{}} {node[subtree] (B) {}  }}
			;
		\node [mininode] at (A.south west) {};
		\node [below left=1em of A.south west] (pll) {\pll{v}};
		\draw (A.south west) -- (pll);

		\node [mininode] at (A.south east) {};
		\node [below left=1em of B.south west] (prl) {\prl{v}};
		\draw (B.south west) -- (prl);

		\node [mininode] at (B.south west) {};
		\node [below right=1em of A.south east] (plr) {\plr{v}};
		\draw (A.south east) -- (plr);

		\node [mininode] at (B.south east) {};
		\node [below right=1em of B.south east] (prr) {\prr{v}};
		\draw (B.south east) -- (prr);
	\end{tikzpicture}
\end{minipage}

where
\begin{itemize}
	\item $\pll{v}$ (resp.\ $\plr{v}$) denotes the leftmost (resp.\ rightmost) node in the subtree rooted at the \emph{left} child of $v$, and
	\item $\prl{v}$ (resp.\ $\prr{v}$) denotes the leftmost (resp.\ rightmost) node in the subtree rooted at the \emph{right} child of $v$.
\end{itemize}
If $v$ does not have a left (resp.\ right) child, set $\pll{v},\plr{v} \gets v$ (resp.\ $\prl{v}, \prr{v} \gets v$) for convenience.
Using a pointer based structure for the tree, $\BSPT$ occupies \Oh{\abs{\Pos}} space.
It has the following properties:
\todo{Check this!}
\begin{itemize}
	\item $\SSA[T,\Pos]$ and $\SLCP[T,\Pos]$ can be computed by an in-order traversal of the tree. 
		We have $\SAVL[T,\Pos][i] = v$ for a node~$v$ with in-order number~$i$.
		Additionally, $\SLCP[T,\Pos][i] = \plrlcp{v}$ (resp.\ $\SLCP[T,\Pos][i+1] =  \prrlcp{v}$) if $v$ has a left (resp.\ right) child, 
	\item Accessing $\SSA[T,\Pos]$ or $\SLCP[T,\Pos]$ can be supported in $\Oh{\lg \abs{\Pos}}$ time, given that we augment each node with its subtree size.
\end{itemize}

\begin{figure}[ht]
	\centering{%
		\newcommand{\PatternSeachCase}{1}
		\providecommand{\PatternSeachCase}{1}
\begin{tikzpicture}
\newcommand{\PH}{\phantom{\ensuremath{\textY_i}}}
	\matrix 
	[ matrix of nodes
	, font=\ttfamily
	, inner sep=1pt
	, minimum width=2.5em
	, minimum height=2em
	, nodes={draw=none, fill=none,text height=1em,text depth=0.5em}
	, nodes in empty cells,
	]
	(warray) {%
		{$\cdots$} & {\pll{v}}         & \PH{} & {$\cdots$} & \PH{} & {\plr{v}} & {$v$}       & {\prl{v}} & \PH{} & {$\cdots$} & \PH{} & \prr{v}           & {$\cdots$} \\
		\PH{}      & $\textY[1]$       & \PH{} & \PH{}      & \PH{} & \PH{}     & $\textY[1]$ & \PH{}     & \PH{} & \PH{}      & \PH{} & $\textY[1]$ \\
		\PH{}              & $\PH{}$          & \PH{} & \PH{}      & \PH{} & \PH{}     & $\vdots$    & \PH{}     & \PH{} & \PH{}      & \PH{} & $\vdots$ \\
\PH{}              & $\vdots$          & \PH{} & \PH{}      & \PH{} & \PH{}     & $\PH{}$    & \PH{}     & \PH{} & \PH{}      & \PH{} & $\textY[\rho]$ \\
\PH{}              & $\PH{}$          & \PH{} & \PH{}      & \PH{} & \PH{}     & $\PH{}$    & \PH{}     & \PH{} & \PH{}      & \PH{} & \PH{}  \\
\PH{}              & $\PH{}$          & \PH{} & \PH{}      & \PH{} & \PH{}     & $\PH{}$    & \PH{}     & \PH{} & \PH{}      & \PH{} & \PH{}  \\
\PH{}              & $\textY[\lambda]$ & \PH{} & \PH{}      & \PH{} & \PH{}     & $\PH{}$    & \PH{}     & \PH{} & \PH{}      & \PH{} & \PH{}  \\
\PH{}              & $\PH{}$           & \PH{} & \PH{}      & \PH{} & \PH{}     & $\PH{}$     & \PH{}     & \PH{} & \PH{}      & \PH{} & \PH{}  \\
\PH{}              & $\PH{}$          & \PH{} & \PH{}      & \PH{} & \PH{}     & $\PH{}$    & \PH{}     & \PH{} & \PH{}      & \PH{} & \PH{}  \\
\PH{}              & $\PH{}$           & \PH{} & \PH{}      & \PH{} & \PH{}     & $\PH{}$     & \PH{}     & \PH{} & \PH{}      & \PH{} & \PH{}  \\
\PH{}              & $\PH{}$           & \PH{} & \PH{}      & \PH{} & \PH{}     & $\PH{}$     & \PH{}     & \PH{} & \PH{}      & \PH{} & \PH{}  \\
};

 		\pgfdeclarelayer{bg}    %
 		\pgfsetlayers{bg,main}  %

		\begin{pgfonlayer}{bg}

\fill [color=gray!30] (warray-1-1.north west) rectangle node {} (warray-1-13.south east);
\node [left=0pt of warray-1-1.west] {$\SSA[T,\Pos] = $};

\fill [color=gray!20] (warray-2-2.north west) rectangle node {} (warray-9-2.south east);
\fill [color=gray!20] (warray-2-7.north west) rectangle node {} (warray-10-7.south east);
\fill [color=gray!20] (warray-2-8.north west) rectangle node {} (warray-11-8.south east);
\fill [color=gray!20] (warray-2-12.north west) rectangle node {} (warray-5-12.south east);

\fill [color=solarizedRed!50] (warray-2-2.north west) rectangle node {} (warray-7-2.south east);
\fill [color=solarizedRed!50] (warray-2-12.north west) rectangle node {} (warray-4-12.south east);

\draw [|<->|] ([xshift=-0.5em] warray-2-2.north west) -- node [midway,auto,anchor=east] {$\lambda$} ([xshift=-0.5em] warray-7-2.south west);
\draw [|<->|] ([xshift=0.5em] warray-2-12.north east) -- node [midway,auto,anchor=west] {$\rho$} ([xshift=0.5em] warray-4-12.south east);

\ifthenelse{\PatternSeachCase = 1}{%
\fill [color=solarizedGreen!50] (warray-2-2.north west) rectangle node {} (warray-8-2.south east);
\fill [color=solarizedGreen!50] (warray-2-7.north west) rectangle node {} (warray-8-7.south east);
\draw [|<->|] ([xshift=0.5em] warray-2-2.north east) -- node [midway,auto,anchor=west] {$\plllcp{v}$} ([xshift=0.5em] warray-8-2.south east);
\draw [|<->|] ([xshift=-0.5em] warray-2-7.north west) -- node [midway,auto,anchor=east] {$\plllcp{v}$} ([xshift=-0.5em] warray-8-7.south west);
\node at (warray-8-2.center) {$\circ$};
\node at (warray-8-7.center) {$\circ$};

\node [left=2em of warray-8-7](greaterLLUp) {$\textT[\pll{v}+\lambda]$};
\node  [above=0em of greaterLLUp] (greaterLL) {$\textT[p_v+\lambda] = $};
\node [below=0em of greaterLLUp] (greaterLLDown) {$\not=\textY[\lambda+1]$};
\node at (warray-8-7.center) {$\circ$};
\draw (warray-8-7.center) -- (greaterLL);
\draw (warray-8-2.center) -- (greaterLLUp);
\draw [|<->|] (warray-8-7.south east) -- node [midway,auto,anchor=south] {$\textY$ must occur here.} (warray-8-12.south east);
}{}

\ifthenelse{\PatternSeachCase = 2}{%
\fill [color=solarizedRed!50] (warray-2-7.north west) rectangle node {} (warray-9-7.south east);
\fill [color=solarizedGreen!50] (warray-2-2.north west) rectangle node {} (warray-7-2.south east);
\fill [color=solarizedGreen!50] (warray-2-7.north west) rectangle node {} (warray-7-7.south east);
\fill [color=solarizedBlue!50] (warray-2-8.north west) rectangle node {} (warray-10-8.south east);
\draw [|<->|] ([xshift=0.5em] warray-2-2.north east) -- node [midway,auto,anchor=west] {$\plllcp{v}$} ([xshift=0.5em] warray-7-2.south east);
\draw [|<->|] ([xshift=-0.5em] warray-2-7.north west) -- node [midway,auto,anchor=east] {$\lambda$} ([xshift=-0.5em] warray-7-7.south west);
\draw [|<->|] ([xshift=-0.5em] warray-8-7.north west) -- node [midway,auto,anchor=east] {$\ell$} ([xshift=-0.5em] warray-9-7.south west);
\draw [|<->|] ([xshift=1.5em] warray-2-8.north east) -- node [midway,auto,anchor=west] {$\prllcp{v}$} ([xshift=1.5em] warray-10-8.south east);
\node at (warray-10-7.center) {$\circ$};
\node [left=2em of warray-10-7,label={$\textY[\lambda+\ell+1] <$}] {$\textT[p_v+\lambda+\ell]$};
\draw ([xshift=-3em] warray-10-7.center) -- (warray-10-7.center);
}{}

\ifthenelse{\PatternSeachCase = 3}{%
\fill [color=solarizedGreen!50] (warray-2-2.north west) rectangle node {} (warray-5-2.south east);
\fill [color=solarizedGreen!50] (warray-2-7.north west) rectangle node {} (warray-5-7.south east);
\draw [|<->|] ([xshift=0.5em] warray-2-2.north east) -- node [midway,auto,anchor=west] {$\plllcp{v}$} ([xshift=0.5em] warray-5-2.south east);
\draw [|<->|] ([xshift=0.5em] warray-2-7.north east) -- node [midway,auto,anchor=west] {$\plllcp{v}$} ([xshift=0.5em] warray-5-7.south east);
\draw [dashed] ([xshift=0.5em] warray-4-2.south west) -- node [pos=0.75,auto,anchor=north] {$\lcp[\textY,\cdot] \ge \rho$} ([xshift=0.5em] warray-4-12.south east);
\node at (warray-6-2.center) {$\circ$};
\node [left=1em of warray-6-7,] (lessLL) {$\textT[v+\plllcp{v}]$};
\node [above=0em of lessLL] (lessLLUp) {$\textY[1+\plllcp{v}] \neq$};
\node at (warray-6-7.center) {$\circ$};
\draw (warray-6-7.center) -- (lessLL);
\draw (warray-6-2.center) -- (lessLLUp);
}{}
\end{pgfonlayer}
\end{tikzpicture}
	}%
	\caption{Setting of the proof of \cref{lemmaBPSTPatternMatching}\ref{itPatternSearchA}.}
	\label{figPatternSearchA}
\end{figure}
\begin{figure}[ht]
	\centering{%
		\newcommand{\PatternSeachCase}{2}
		\providecommand{\PatternSeachCase}{1}
\begin{tikzpicture}
\newcommand{\PH}{\phantom{\ensuremath{\textY_i}}}
	\matrix 
	[ matrix of nodes
	, font=\ttfamily
	, inner sep=1pt
	, minimum width=2.5em
	, minimum height=2em
	, nodes={draw=none, fill=none,text height=1em,text depth=0.5em}
	, nodes in empty cells,
	]
	(warray) {%
		{$\cdots$} & {\pll{v}}         & \PH{} & {$\cdots$} & \PH{} & {\plr{v}} & {$v$}       & {\prl{v}} & \PH{} & {$\cdots$} & \PH{} & \prr{v}           & {$\cdots$} \\
		\PH{}      & $\textY[1]$       & \PH{} & \PH{}      & \PH{} & \PH{}     & $\textY[1]$ & \PH{}     & \PH{} & \PH{}      & \PH{} & $\textY[1]$ \\
		\PH{}              & $\PH{}$          & \PH{} & \PH{}      & \PH{} & \PH{}     & $\vdots$    & \PH{}     & \PH{} & \PH{}      & \PH{} & $\vdots$ \\
\PH{}              & $\vdots$          & \PH{} & \PH{}      & \PH{} & \PH{}     & $\PH{}$    & \PH{}     & \PH{} & \PH{}      & \PH{} & $\textY[\rho]$ \\
\PH{}              & $\PH{}$          & \PH{} & \PH{}      & \PH{} & \PH{}     & $\PH{}$    & \PH{}     & \PH{} & \PH{}      & \PH{} & \PH{}  \\
\PH{}              & $\PH{}$          & \PH{} & \PH{}      & \PH{} & \PH{}     & $\PH{}$    & \PH{}     & \PH{} & \PH{}      & \PH{} & \PH{}  \\
\PH{}              & $\textY[\lambda]$ & \PH{} & \PH{}      & \PH{} & \PH{}     & $\PH{}$    & \PH{}     & \PH{} & \PH{}      & \PH{} & \PH{}  \\
\PH{}              & $\PH{}$           & \PH{} & \PH{}      & \PH{} & \PH{}     & $\PH{}$     & \PH{}     & \PH{} & \PH{}      & \PH{} & \PH{}  \\
\PH{}              & $\PH{}$          & \PH{} & \PH{}      & \PH{} & \PH{}     & $\PH{}$    & \PH{}     & \PH{} & \PH{}      & \PH{} & \PH{}  \\
\PH{}              & $\PH{}$           & \PH{} & \PH{}      & \PH{} & \PH{}     & $\PH{}$     & \PH{}     & \PH{} & \PH{}      & \PH{} & \PH{}  \\
\PH{}              & $\PH{}$           & \PH{} & \PH{}      & \PH{} & \PH{}     & $\PH{}$     & \PH{}     & \PH{} & \PH{}      & \PH{} & \PH{}  \\
};

 		\pgfdeclarelayer{bg}    %
 		\pgfsetlayers{bg,main}  %

		\begin{pgfonlayer}{bg}

\fill [color=gray!30] (warray-1-1.north west) rectangle node {} (warray-1-13.south east);
\node [left=0pt of warray-1-1.west] {$\SSA[T,\Pos] = $};

\fill [color=gray!20] (warray-2-2.north west) rectangle node {} (warray-9-2.south east);
\fill [color=gray!20] (warray-2-7.north west) rectangle node {} (warray-10-7.south east);
\fill [color=gray!20] (warray-2-8.north west) rectangle node {} (warray-11-8.south east);
\fill [color=gray!20] (warray-2-12.north west) rectangle node {} (warray-5-12.south east);

\fill [color=solarizedRed!50] (warray-2-2.north west) rectangle node {} (warray-7-2.south east);
\fill [color=solarizedRed!50] (warray-2-12.north west) rectangle node {} (warray-4-12.south east);

\draw [|<->|] ([xshift=-0.5em] warray-2-2.north west) -- node [midway,auto,anchor=east] {$\lambda$} ([xshift=-0.5em] warray-7-2.south west);
\draw [|<->|] ([xshift=0.5em] warray-2-12.north east) -- node [midway,auto,anchor=west] {$\rho$} ([xshift=0.5em] warray-4-12.south east);

\ifthenelse{\PatternSeachCase = 1}{%
\fill [color=solarizedGreen!50] (warray-2-2.north west) rectangle node {} (warray-8-2.south east);
\fill [color=solarizedGreen!50] (warray-2-7.north west) rectangle node {} (warray-8-7.south east);
\draw [|<->|] ([xshift=0.5em] warray-2-2.north east) -- node [midway,auto,anchor=west] {$\plllcp{v}$} ([xshift=0.5em] warray-8-2.south east);
\draw [|<->|] ([xshift=-0.5em] warray-2-7.north west) -- node [midway,auto,anchor=east] {$\plllcp{v}$} ([xshift=-0.5em] warray-8-7.south west);
\node at (warray-8-2.center) {$\circ$};
\node at (warray-8-7.center) {$\circ$};

\node [left=2em of warray-8-7](greaterLLUp) {$\textT[\pll{v}+\lambda]$};
\node  [above=0em of greaterLLUp] (greaterLL) {$\textT[p_v+\lambda] = $};
\node [below=0em of greaterLLUp] (greaterLLDown) {$\not=\textY[\lambda+1]$};
\node at (warray-8-7.center) {$\circ$};
\draw (warray-8-7.center) -- (greaterLL);
\draw (warray-8-2.center) -- (greaterLLUp);
\draw [|<->|] (warray-8-7.south east) -- node [midway,auto,anchor=south] {$\textY$ must occur here.} (warray-8-12.south east);
}{}

\ifthenelse{\PatternSeachCase = 2}{%
\fill [color=solarizedRed!50] (warray-2-7.north west) rectangle node {} (warray-9-7.south east);
\fill [color=solarizedGreen!50] (warray-2-2.north west) rectangle node {} (warray-7-2.south east);
\fill [color=solarizedGreen!50] (warray-2-7.north west) rectangle node {} (warray-7-7.south east);
\fill [color=solarizedBlue!50] (warray-2-8.north west) rectangle node {} (warray-10-8.south east);
\draw [|<->|] ([xshift=0.5em] warray-2-2.north east) -- node [midway,auto,anchor=west] {$\plllcp{v}$} ([xshift=0.5em] warray-7-2.south east);
\draw [|<->|] ([xshift=-0.5em] warray-2-7.north west) -- node [midway,auto,anchor=east] {$\lambda$} ([xshift=-0.5em] warray-7-7.south west);
\draw [|<->|] ([xshift=-0.5em] warray-8-7.north west) -- node [midway,auto,anchor=east] {$\ell$} ([xshift=-0.5em] warray-9-7.south west);
\draw [|<->|] ([xshift=1.5em] warray-2-8.north east) -- node [midway,auto,anchor=west] {$\prllcp{v}$} ([xshift=1.5em] warray-10-8.south east);
\node at (warray-10-7.center) {$\circ$};
\node [left=2em of warray-10-7,label={$\textY[\lambda+\ell+1] <$}] {$\textT[p_v+\lambda+\ell]$};
\draw ([xshift=-3em] warray-10-7.center) -- (warray-10-7.center);
}{}

\ifthenelse{\PatternSeachCase = 3}{%
\fill [color=solarizedGreen!50] (warray-2-2.north west) rectangle node {} (warray-5-2.south east);
\fill [color=solarizedGreen!50] (warray-2-7.north west) rectangle node {} (warray-5-7.south east);
\draw [|<->|] ([xshift=0.5em] warray-2-2.north east) -- node [midway,auto,anchor=west] {$\plllcp{v}$} ([xshift=0.5em] warray-5-2.south east);
\draw [|<->|] ([xshift=0.5em] warray-2-7.north east) -- node [midway,auto,anchor=west] {$\plllcp{v}$} ([xshift=0.5em] warray-5-7.south east);
\draw [dashed] ([xshift=0.5em] warray-4-2.south west) -- node [pos=0.75,auto,anchor=north] {$\lcp[\textY,\cdot] \ge \rho$} ([xshift=0.5em] warray-4-12.south east);
\node at (warray-6-2.center) {$\circ$};
\node [left=1em of warray-6-7,] (lessLL) {$\textT[v+\plllcp{v}]$};
\node [above=0em of lessLL] (lessLLUp) {$\textY[1+\plllcp{v}] \neq$};
\node at (warray-6-7.center) {$\circ$};
\draw (warray-6-7.center) -- (lessLL);
\draw (warray-6-2.center) -- (lessLLUp);
}{}
\end{pgfonlayer}
\end{tikzpicture}
	}%
	\caption{Setting of the proof of \cref{lemmaBPSTPatternMatching}\ref{itPatternSearchB} in case that $T[v+\lambda+\ell] < \textY[\lambda+\ell+1]$ and $\prllcp{v} \ge \lambda + \ell$. }
	\label{figPatternSearchB}
\end{figure}
\begin{figure}[ht]
	\centering{%
		\newcommand{\PatternSeachCase}{3}
		\providecommand{\PatternSeachCase}{1}
\begin{tikzpicture}
\newcommand{\PH}{\phantom{\ensuremath{\textY_i}}}
	\matrix 
	[ matrix of nodes
	, font=\ttfamily
	, inner sep=1pt
	, minimum width=2.5em
	, minimum height=2em
	, nodes={draw=none, fill=none,text height=1em,text depth=0.5em}
	, nodes in empty cells,
	]
	(warray) {%
		{$\cdots$} & {\pll{v}}         & \PH{} & {$\cdots$} & \PH{} & {\plr{v}} & {$v$}       & {\prl{v}} & \PH{} & {$\cdots$} & \PH{} & \prr{v}           & {$\cdots$} \\
		\PH{}      & $\textY[1]$       & \PH{} & \PH{}      & \PH{} & \PH{}     & $\textY[1]$ & \PH{}     & \PH{} & \PH{}      & \PH{} & $\textY[1]$ \\
		\PH{}              & $\PH{}$          & \PH{} & \PH{}      & \PH{} & \PH{}     & $\vdots$    & \PH{}     & \PH{} & \PH{}      & \PH{} & $\vdots$ \\
\PH{}              & $\vdots$          & \PH{} & \PH{}      & \PH{} & \PH{}     & $\PH{}$    & \PH{}     & \PH{} & \PH{}      & \PH{} & $\textY[\rho]$ \\
\PH{}              & $\PH{}$          & \PH{} & \PH{}      & \PH{} & \PH{}     & $\PH{}$    & \PH{}     & \PH{} & \PH{}      & \PH{} & \PH{}  \\
\PH{}              & $\PH{}$          & \PH{} & \PH{}      & \PH{} & \PH{}     & $\PH{}$    & \PH{}     & \PH{} & \PH{}      & \PH{} & \PH{}  \\
\PH{}              & $\textY[\lambda]$ & \PH{} & \PH{}      & \PH{} & \PH{}     & $\PH{}$    & \PH{}     & \PH{} & \PH{}      & \PH{} & \PH{}  \\
\PH{}              & $\PH{}$           & \PH{} & \PH{}      & \PH{} & \PH{}     & $\PH{}$     & \PH{}     & \PH{} & \PH{}      & \PH{} & \PH{}  \\
\PH{}              & $\PH{}$          & \PH{} & \PH{}      & \PH{} & \PH{}     & $\PH{}$    & \PH{}     & \PH{} & \PH{}      & \PH{} & \PH{}  \\
\PH{}              & $\PH{}$           & \PH{} & \PH{}      & \PH{} & \PH{}     & $\PH{}$     & \PH{}     & \PH{} & \PH{}      & \PH{} & \PH{}  \\
\PH{}              & $\PH{}$           & \PH{} & \PH{}      & \PH{} & \PH{}     & $\PH{}$     & \PH{}     & \PH{} & \PH{}      & \PH{} & \PH{}  \\
};

 		\pgfdeclarelayer{bg}    %
 		\pgfsetlayers{bg,main}  %

		\begin{pgfonlayer}{bg}

\fill [color=gray!30] (warray-1-1.north west) rectangle node {} (warray-1-13.south east);
\node [left=0pt of warray-1-1.west] {$\SSA[T,\Pos] = $};

\fill [color=gray!20] (warray-2-2.north west) rectangle node {} (warray-9-2.south east);
\fill [color=gray!20] (warray-2-7.north west) rectangle node {} (warray-10-7.south east);
\fill [color=gray!20] (warray-2-8.north west) rectangle node {} (warray-11-8.south east);
\fill [color=gray!20] (warray-2-12.north west) rectangle node {} (warray-5-12.south east);

\fill [color=solarizedRed!50] (warray-2-2.north west) rectangle node {} (warray-7-2.south east);
\fill [color=solarizedRed!50] (warray-2-12.north west) rectangle node {} (warray-4-12.south east);

\draw [|<->|] ([xshift=-0.5em] warray-2-2.north west) -- node [midway,auto,anchor=east] {$\lambda$} ([xshift=-0.5em] warray-7-2.south west);
\draw [|<->|] ([xshift=0.5em] warray-2-12.north east) -- node [midway,auto,anchor=west] {$\rho$} ([xshift=0.5em] warray-4-12.south east);

\ifthenelse{\PatternSeachCase = 1}{%
\fill [color=solarizedGreen!50] (warray-2-2.north west) rectangle node {} (warray-8-2.south east);
\fill [color=solarizedGreen!50] (warray-2-7.north west) rectangle node {} (warray-8-7.south east);
\draw [|<->|] ([xshift=0.5em] warray-2-2.north east) -- node [midway,auto,anchor=west] {$\plllcp{v}$} ([xshift=0.5em] warray-8-2.south east);
\draw [|<->|] ([xshift=-0.5em] warray-2-7.north west) -- node [midway,auto,anchor=east] {$\plllcp{v}$} ([xshift=-0.5em] warray-8-7.south west);
\node at (warray-8-2.center) {$\circ$};
\node at (warray-8-7.center) {$\circ$};

\node [left=2em of warray-8-7](greaterLLUp) {$\textT[\pll{v}+\lambda]$};
\node  [above=0em of greaterLLUp] (greaterLL) {$\textT[p_v+\lambda] = $};
\node [below=0em of greaterLLUp] (greaterLLDown) {$\not=\textY[\lambda+1]$};
\node at (warray-8-7.center) {$\circ$};
\draw (warray-8-7.center) -- (greaterLL);
\draw (warray-8-2.center) -- (greaterLLUp);
\draw [|<->|] (warray-8-7.south east) -- node [midway,auto,anchor=south] {$\textY$ must occur here.} (warray-8-12.south east);
}{}

\ifthenelse{\PatternSeachCase = 2}{%
\fill [color=solarizedRed!50] (warray-2-7.north west) rectangle node {} (warray-9-7.south east);
\fill [color=solarizedGreen!50] (warray-2-2.north west) rectangle node {} (warray-7-2.south east);
\fill [color=solarizedGreen!50] (warray-2-7.north west) rectangle node {} (warray-7-7.south east);
\fill [color=solarizedBlue!50] (warray-2-8.north west) rectangle node {} (warray-10-8.south east);
\draw [|<->|] ([xshift=0.5em] warray-2-2.north east) -- node [midway,auto,anchor=west] {$\plllcp{v}$} ([xshift=0.5em] warray-7-2.south east);
\draw [|<->|] ([xshift=-0.5em] warray-2-7.north west) -- node [midway,auto,anchor=east] {$\lambda$} ([xshift=-0.5em] warray-7-7.south west);
\draw [|<->|] ([xshift=-0.5em] warray-8-7.north west) -- node [midway,auto,anchor=east] {$\ell$} ([xshift=-0.5em] warray-9-7.south west);
\draw [|<->|] ([xshift=1.5em] warray-2-8.north east) -- node [midway,auto,anchor=west] {$\prllcp{v}$} ([xshift=1.5em] warray-10-8.south east);
\node at (warray-10-7.center) {$\circ$};
\node [left=2em of warray-10-7,label={$\textY[\lambda+\ell+1] <$}] {$\textT[p_v+\lambda+\ell]$};
\draw ([xshift=-3em] warray-10-7.center) -- (warray-10-7.center);
}{}

\ifthenelse{\PatternSeachCase = 3}{%
\fill [color=solarizedGreen!50] (warray-2-2.north west) rectangle node {} (warray-5-2.south east);
\fill [color=solarizedGreen!50] (warray-2-7.north west) rectangle node {} (warray-5-7.south east);
\draw [|<->|] ([xshift=0.5em] warray-2-2.north east) -- node [midway,auto,anchor=west] {$\plllcp{v}$} ([xshift=0.5em] warray-5-2.south east);
\draw [|<->|] ([xshift=0.5em] warray-2-7.north east) -- node [midway,auto,anchor=west] {$\plllcp{v}$} ([xshift=0.5em] warray-5-7.south east);
\draw [dashed] ([xshift=0.5em] warray-4-2.south west) -- node [pos=0.75,auto,anchor=north] {$\lcp[\textY,\cdot] \ge \rho$} ([xshift=0.5em] warray-4-12.south east);
\node at (warray-6-2.center) {$\circ$};
\node [left=1em of warray-6-7,] (lessLL) {$\textT[v+\plllcp{v}]$};
\node [above=0em of lessLL] (lessLLUp) {$\textY[1+\plllcp{v}] \neq$};
\node at (warray-6-7.center) {$\circ$};
\draw (warray-6-7.center) -- (lessLL);
\draw (warray-6-2.center) -- (lessLLUp);
}{}
\end{pgfonlayer}
\end{tikzpicture}
	}%
	\caption{Setting of the proof of \cref{lemmaBPSTPatternMatching}\ref{itPatternSearchC}.}
	\label{figPatternSearchC}
\end{figure}

Similar to the \SAVLT, the \ac{BSPT} can perform pattern matching within the same time bounds.
To show this, we need a small helper \lcnamecref{lemmaLCPtransitivity}:

\begin{lemma}[{\cite[Lemma 1]{Irving2003sbs}}]\label{lemmaLCPtransitivity}
	Given three strings~$\textX,\textY,\textZ$ with the lexicographic order $\textX \prec \textY \prec \textZ$, then 
	$\lcp[\textX,\textZ] = \min\tuple{\lcp[\textX,\textY], \lcp[\textY,\textZ]}$.
\end{lemma}

\begin{lemma}\label{lemmaBPSTPatternMatching}
  Given a text $T$ of length~$n$, and $\BSPT$ built on the suffixes of~$T$ whose a starting positions are in~$\Pos$,
we can find 
  $\argmax \menge{\lcp(\textY, T[p..]) \mid p \in \Pos }$ 
of a pattern~$\textY$ 
  in \Oh{\abs{\textY}\wordpack + \lg\abs{\Pos}} time.
\end{lemma}
\begin{proof}
	Analogously to the pattern matching algorithm on the enhanced suffix array~\cite[Figure 3]{manber93suffix}, 
	we perform a binary search by walking down the tree.
	We maintain two variables $\lambda,\rho \in \Int{1}{n}$ with the invariant that
$\lambda = \lcp(T[\pll{v}..], \textY)$ and $\rho = \lcp(T[\prr{v}..], \textY)$ on visiting a node $v$.
Starting at the root note, we initialize $\lambda \gets \lcp(T[\pll{\text{root}}..], \textY)$ and $\rho \gets \lcp(T[\prr{\text{root}}..], \textY)$, where $\text{root}$ is the root node of $\BSPT$.
Suppose that we access a node $v \in \BSPT$, and that $\lambda \ge \rho$
(otherwise exchange $\pll{v}$, $\plr{v}$, $\lambda$, etc., with $\prr{v}$, $\prl{v}$, $\rho$, etc., respectively).
We follow the case analysis of~\cite[Figure 2]{manber93suffix}:
\begin{enumerate}[(a)]
	\item Case $\plllcp{v} > \lambda$, see also \cref{figPatternSearchA}. 
		Then $\textY[\lambda+1] > T[v+\lambda] = T[\pll{v}+\lambda]$, and thus $\textY \succ T[v..]$.
		\begin{itemize}
	\item If $\prllcp{v} < \lambda$ then $\textY$ does not occur in $\textT$.
	\item Otherwise ($\prllcp{v} \ge \lambda$), 
			$\lambda \le \lcp[ {\textY[1..], T[\prl{v}..]} ]$.
			We set $\lambda \gets \lambda + \lcp[ {\textY[1+\lambda..], T[\prl{v}+\lambda..]} ]$,
			and descend to $v$'s right child. \label{itPatternSearchA}
		\end{itemize}
	\item Case $\plllcp{v} = \lambda$, see also \cref{figPatternSearchB}. We compute $\ell := \lcp(T[v+\lambda..], \textY[\lambda+1..])$.
		If both strings are equal, we found a match and return. \label{itPatternSearchB}
		Otherwise, we compare the first pair of mismatching characters $T[v+\lambda+\ell]$ and $\textY[\lambda+\ell+1]$.
		\begin{itemize}
			\item If $T[v+\lambda+\ell] < \textY[\lambda+\ell+1]$, then the corresponding suffix~$T[p_v..]$ of~$v$ is (lexicographically) smaller than~$\textY$.
			We abort the search if $\plrlcp{v} < \lambda + \ell$, because then the next lexicographically larger suffix~$T[p_{\prl{v}}..]$ stored in $\BSPT$ shares a shorter prefix with~$\textY$ than $T[p_v..]$ with $\textY$.
			Under the assumption that $\plrlcp{v} \ge \lambda + \ell$, 
			we set $\lambda \gets \lcp(T[\prl{v}+\lambda+\ell..], \textY[\lambda+\ell+1..])$,
			and descend to $v$'s right child.
			\item The case that $T[v+\lambda+\ell] > \textY[\lambda+\ell+1]$ is symmetrical: $T[p_v..]$ is larger than~$\textY$.
				We abort the search if $\prllcp{v} < \lambda + \ell$, because then $\lcp[ {T[\plr{v}..]},\textY] < \lcp[ {T[p_v..],\textY }]$.
			Under the assumption that $\prllcp{v} \ge \lambda + \ell$, 
			we set $\rho \gets \lcp(T[\plr{v}+\lambda+\ell..], \textY[\lambda+\ell+1..])$,
			and descend to $v$'s left child.
		\end{itemize}
	\item Case $\plllcp{v} < \lambda$, see also \cref{figPatternSearchC}. We have $\plllcp{v} \ge \rho$ since $\textY[1..\rho]$ is a common prefix of all 
	suffixes corresponding to the nodes of~$v$'s subtree (according to the assumption that~$\rho < \lambda$).
	In particular, $\rho \le \lcp[ {\textY[1..], T[\plr{v}..]} ]$.
	We set $\rho \gets \rho + \lcp[ {\textY[1+\rho..], T[\plr{v}+\rho..]} ]$, and descend to $v$'s left child. \label{itPatternSearchC}
\end{enumerate}
We never decrease $\rho$ and $\lambda$ (instead we abort in the case that $\textY$ is not a prefix of any suffixes).
Since both values are upper bounded by $\abs{\textY}$,
we compare \Oh{\abs{\textY}} characters in total.
In the word-packing model, this gives a running time of \Oh{\abs{\textY}\wordpack}.
\end{proof}

\begin{figure}[ht]
\tikzset{%
itria/.style={%
  draw,shape border uses incircle,
  isosceles triangle,shape border rotate=90,yshift=-3.6em},
itriaP/.style={%
  draw,shape border uses incircle,
  isosceles triangle,shape border rotate=90,yshift=-3.1em},
  pic/.style={%
sibling distance=2cm, 
level 2/.style={sibling distance =1.5cm}
  }
}
	\centering{%
\begin{tikzpicture}[pic,yscale=0.6,baseline=(current bounding box.center)]
\tikzset{%
itriaP/.style={%
  draw,shape border uses incircle,
  isosceles triangle,shape border rotate=90,yshift=-1.7em}}
\def\rddots{\cdot^{\cdot^{\cdot}}}
	\node[circle,draw]  {$u_{i+1}$}
		child{ node[circle,draw] {\phantom{}} {node[itriaP] {}  }}
		child{ node[circle,draw] {$u_{i}$}
			child{ node[] {\rotatebox{60}{$\cdots$}}
		child{ node[circle,draw] {$u_1$}
		child{ node[circle,draw] {$v$}
		}
		child{ node[circle,draw] {\phantom{}} {node[itriaP] {}  }}
		}
		child{ node[circle,draw] {\phantom{}} {node[itriaP] {}  }}
		}
		child{ node[circle,draw] {\phantom{}} {node[itriaP] {}  }}
		}
	;
\end{tikzpicture}
\hspace{3em}
\begin{tikzpicture}[pic,baseline=(current bounding box.center)]
	\node[circle,draw] (u) {$u$}
	child{ node[circle,draw] {$v$}
		child{ node[circle,draw] {\phantom{}} {node[itriaP] {$A$}  }}
		child{ node[circle,draw] {$w$} {node[itria] {$B$}  }}
}
		child{ node[circle,draw] {\phantom{}} {node[itriaP] {$C$}  }}
	;
\end{tikzpicture}
$\xrightarrow{\text{right rotation}}$
\begin{tikzpicture}[pic,baseline=(current bounding box.center)]
\node [circle,draw] (uP)  {$v$}
		child{ node[circle,draw] {\phantom{}} {node[itriaP] {$A$}  }}
	child{ node[circle,draw] {$u$}
		child{ node[circle,draw] {$w$} {node[itria] {$B$}  }}
		child{ node[circle,draw] {\phantom{}} {node[itriaP] {$C$}  }}
}
	;

\end{tikzpicture}
	}%
	\caption{\emph{Left}: Adding the left child~$v$ to the node~$u$. \emph{Middle} and \emph{Right}: Trees before (\emph{middle}) and after (\emph{right}) right rotating the triplet of nodes $(u,v,w)$.}
	\label{figRightRotation}
\end{figure}

\begin{figure}[ht]
	\centering{%
\begin{tikzpicture}
\newcommand{\PH}{\phantom{\ensuremath{\textY_i}}}
	\matrix 
	[ matrix of nodes
	, font=\ttfamily
	, inner sep=0pt
	, minimum width=3em
	, nodes={draw, fill=none,text height=1.2em,text depth=0.5em}
	]
	(warray) {%
		$\cdots$ & \pll{w} & $\cdots$ & \PH{} & $w$ & \PH{} & $\cdots$ & \prr{w} & $u$ & $\cdots$ \\
	};
	\node [above of=warray-1-2] (lab2) {$=\pll{u}$};
	\node [above of=warray-1-8] (lab8) {$=\plr{u}$};
	\draw (warray-1-2) -- (lab2);
	\draw (warray-1-8) -- (lab8);
	\node [left=0em of warray-1-1] {$\SSA[T,\Pos]=$};

	\draw [|<->|] ([yshift=-1em] warray-1-8.south west) -- node [midway,auto,anchor=north] {$\plrlcp{u}$} ([yshift=-1em] warray-1-9.south east);
	\draw [|<->|] ([yshift=-3em] warray-1-5.south west) -- node [midway,auto,anchor=north] {$\prrlcp{w}$} ([yshift=-3em] warray-1-8.south east);
	\draw [|<->|] ([yshift=-1em] warray-1-2.south west) -- node [midway,auto,anchor=north] {$\plllcp{w}$} ([yshift=-1em] warray-1-5.south east);
	\draw [|<->|,dashed] ([yshift=-2.6em] warray-1-2.south west) -- node [pos=0.1,auto,anchor=north] {$\plllcp{u}$} ([yshift=-2.6em] warray-1-9.south east);

\end{tikzpicture}
	}%
	\caption{Updating the value $\plllcp{u}$ after the right rotation of \cref{figRightRotation}. }
	\label{figRightRotationFixLCP}
\end{figure}

It is left to show that new suffixes can be inserted into $\BSPT$ efficiently.
Let $1 \le p \le \abs{T}, p \notin \Pos$ be a text position that we want to add to $\BSPT$.
We locate the insertion point in $\BSPT$, insert a new leaf, update all invalidated LCP values, and re-balance the tree, if necessary.
With the aid of \cref{lemmaBPSTPatternMatching}, we can update the LCP values efficiently:
\begin{description}
\item[Insertion]
	Suppose that our goal is to insert the left child $v$ of a node $u_1$ (inserting the right child is analogous by symmetry), see also \cref{figRightRotation}.
		First we update the stored LCP values in $\BSPT$, and subsequently perform rotations (if necessary).
        Let $u_1, u_2, \ldots, u_i$ be the maximal sequence of the ancestors of $v$ 
        such that $u_j$ is left child of $u_{j+1}$ for every integer~$j$ with $1 \le j \le i-1$.
		According to \cref{lemmaLCPtransitivity} we update 
		$\plllcp{u_j} \gets \lcp(T[u_j..], T[u..]) = \min \{ \lcp(T[u..], T[u_1..]), \plllcp{u_j} \}$ for every~$j$ with $1 \le j \le i$, and
        $\prllcp{u_{i+1}} \gets \lcp(T[u_{i+1}..], T[v..])$, 
		where $u_{i+1}$ is the parent of $u_i$ (we omit~$u_{i+1}$ if $u_i$ is the root node).
		In total, we need to compute the two LCP queries $\lcp(T[v..], T[u_1..])$ and $\lcp(T[u_{i+1}..], T[v..])$, 
		whose values have already been computed after locating the insertion point~$u_1$ as described in \cref{lemmaBPSTPatternMatching}.
	\item[Rotation]
		Suppose that we perform a right rotation on the nodes~$u,v,w$, where $w$ is the right child of~$v$, which is the left child of~$u$.
		This setting is also depicted in \cref{figRightRotation}.
        The right rotation makes (a) $w$ the left child of $u$, and (b) $u$ the right child of $v$.
		The operations (a) and (b) invalidate the values $\plllcp{u}$ and $\prrlcp{v}$, respectively.
		With an application of \cref{lemmaLCPtransitivity}, we can restore these values by setting
        \begin{enumerate}[(a)]
			\item $\plllcp{u} \gets \min \{ \plrlcp{u}, \prrlcp{w}, \plllcp{w} \}$ (cf. \cref{figRightRotationFixLCP}), and
	\item $\prrlcp{v} \gets \min \{ \prrlcp{v}, \plrlcp{u}, \prrlcp{u} \}$.
        \end{enumerate}
		By symmetry, left rotations are done analogously.
\end{description}

\begin{lemma}
	Updating $\BSPT$ to $\BST(T,\Pos \cup\menge{p})$ can be done in \Oh{\ell\wordpack + \lg n} time, where $p \in [1..n]$ and $\ell = \abs{T[p..]}$.
\end{lemma}

}{}

\bibliographystyle{\myBibStyle}
\bibliography{papermeta/literature}
\clearpage
\Jvar{}

\end{document}